\documentclass{article}
\usepackage{amsthm,amssymb}
\usepackage[margin=1in]{geometry}
\usepackage{booktabs} %
\usepackage[ruled]{algorithm2e} %

\SetAlFnt{\small}
\SetAlCapFnt{\small}
\SetAlCapNameFnt{\small}
\SetAlCapHSkip{0pt}
\IncMargin{-\parindent}

\usepackage{natbib}

\usepackage{complexity}

\let\E\relax
\let\R\relax

\usepackage{bm}
\usepackage{fragkia}
\usepackage[colorlinks, citecolor=magenta, linkcolor=magenta]{hyperref}

\usepackage{tikz}
\usetikzlibrary{trees,positioning,shapes,calc}

\usepackage{xspace}
\usepackage{physics}
\usepackage[shortlabels]{enumitem}
\usepackage{bbm}
\usepackage[nameinlink]{cleveref}
\AtBeginDocument{
    \hypersetup{hypertexnames=false}
}
\newcommand{\tvx}{\tilde{\vx}}

\newcommand{\delimit}[3]{\newcommand{#1}[1]{\left#2##1\right#3}}
\delimit \ceil \lceil \rceil
\delimit \floor \lfloor \rfloor

\def\vb{{\bm{b}}}

\def\ve{{\bm{e}}}

\def\vh{{\bm{h}}}

\def\vs{{\bm{s}}}

\def\vu{{\bm{u}}}
\def\vw{{\bm{w}}}
\def\vx{{\bm{x}}}
\def\vy{{\bm{y}}}
\def\vz{{\bm{z}}}

\def\mA{{\mathbf{A}}}

\let\eps\epsilon
\let\op\operatorname
\newcommand{\ie}{\textit{i.e.}\xspace}
\newcommand{\eg}{\textit{e.g.}\xspace}
\newcommand\size{\op{size}}

\let\Root\varnothing

\let\ip\ev
\newcommand\sharpP{{\sf \#P}}

\usepackage{multirow}

\usepackage{pifont}%
\newcommand{\supp}{\op{supp}}
\newcommand\ind{\mathbbm 1}

\definecolor{briancolor}{rgb}{0, .5, 0}

\renewcommand{\vec}[1]{\bm{#1}}

\usepackage{nicefrac}

\newcommand{\CLS}{\mathsf{CLS}}
\newcommand{\FIXP}{\mathsf{FIXP}}

\title{The Complexity of Proper Equilibrium in Extensive-Form \\and Polytope Games\footnote{This paper contains and extends results that were originally in a prior version of \url{https://arxiv.org/abs/2511.03968}.}}

\usepackage{authblk}

\author[1]{Brian Hu Zhang}
\author[2]{Ioannis Anagnostides}
\author[2]{Kiriaki Fragkia}
\author[2]{\authorcr Maria-Florina Balcan}
\author[2,3]{Tuomas Sandholm}

\affil[1]{Massachusetts Institute of Technology}
\affil[2]{Carnegie Mellon University}
\affil[3]{Additional affiliations: Strategy Robot, Inc., Strategic Machine, Inc., Optimized Markets, Inc.}
\affil[ ]{}
\affil[ ]{\texttt{zhangbh}\texttt{@csail.mit.edu}, \texttt{\{ianagnos,kiriakif,ninamf,sandholm\}}\texttt{@cs.cmu.edu}}

\begin{document}

\begin{titlepage}

\maketitle
\pagenumbering{gobble}
\begin{abstract}
The \emph{proper equilibrium}, introduced by Myerson (1978), is a classic refinement of the Nash equilibrium that has been referred to as the ``mother of all refinements.'' For normal-form games, computing a proper equilibrium is known to be $\mathsf{PPAD}$-complete for two-player games and $\mathsf{FIXP}_a$-complete for games with at least three players. However, the complexity beyond normal-form games---in particular, for \textit{extensive-form games (EFGs)}---was a long-standing open problem first highlighted by Miltersen and Sørensen (SODA '08).

In this paper, we resolve this problem by establishing $\mathsf{PPAD}$- and $\mathsf{FIXP}_a$-membership (and hence completeness) of normal-form proper equilibria in two-player and multi-player EFGs respectively. Our main ingredient is a technique for computing a perturbed (proper) best response that can be computed efficiently in EFGs. This is despite the fact that, as we show, computing a best response using the classic perturbation of Kohlberg and Mertens based on the permutahedron is $\mathsf{\#P}$-hard even in Bayesian games. 

In stark contrast, we show that computing a proper equilibrium in \textit{polytope} games is $\mathsf{NP}$-hard. This marks the first natural class in which the complexity of computing equilibrium refinements does not collapse to that of Nash equilibria, and the first problem in which equilibrium computation in polytope games is strictly harder---unless there is a collapse in the complexity hierarchy---relative to extensive-form games.
\end{abstract}

\newpage
\tableofcontents

\end{titlepage}
\pagenumbering{arabic}

\section{Introduction}

The Nash equilibrium is the cornerstone of noncooperative game theory~\citep{Nash50:Equilibrium}. Its formulation has had a profound impact in economics, which, according to~\citet{Myerson99:Nash}, is comparable to the discovery of the DNA double helix in the biological sciences. Despite this stature, an early critique of the Nash equilibrium is that it can prescribe highly suboptimal strategies away from the equilibrium path, which can arise if another player deviates from equilibrium play by way of a mistake---a ``tremble.'' This casts doubt on both its descriptive validity and its prescriptive power. To address these shortcomings, a vibrant theory of \emph{equilibrium refinements} was developed starting in the 1970s, spearheaded by Nobel prize-winning economists such as~\citet{Selten75:Reexamination}~and~\citet{Myerson78:Refinements}. The central aim of this line of work is to refine the solution space by selecting more robust, higher quality equilibria. Indeed, equilibrium refinements have been shown to produce better strategies in practice~\citep{Farina18:Practical,Kubicek26:Equilibrium,Bernasconi24:Learning}.

Among the elaborate hierarchy of equilibrium refinements, Myerson's \emph{proper equilibrium} stands out as a particularly influential and appealing concept. Introduced almost 50 years ago, it is based on a compelling behavioral axiom: while players may tremble and play suboptimal strategies, they are less likely to make costlier mistakes. Peter Bro Miltersen has anecdotally referred to proper equilibria as ``the mother of all refinements~\citep{Etessami14:Complexity}.''

Despite its prominence for many decades, the computational properties of proper equilibria have remained elusive. In the special case of normal-form games, the complexity landscape is by now well-understood: finding a proper equilibrium is \PPAD-complete for two-player or polymatrix games and $\FIXP_a$-complete for games with at least three players~\citep{Sorensen12:Computing,Hansen18:Computational}. However, from a computational standpoint, the normal-form representation is ill suited in modeling most strategic interactions, which feature sequential moves and imperfect information. In such settings, casting the game in normal form typically leads to an exponential blow up, so the previous complexity results do not apply. The canonical, compact representation in sequential settings is the \emph{extensive form}. The complexity of computing proper equilibria in extensive-form games has remained an outstanding open problem for two decades, first explicitly mentioned by~\citet{Miltersen08:Fast} and more recently highlighted by~\citet{Sorensen12:Computing} for the case of two-player games.

There is a daunting challenge in the computation of proper equilibria beyond normal-form games: the classic approach, introduced by~\citet{Kohlberg86:strategic}, involves a perturbation scheme that essentially requires sorting the pure strategies by their expected utility. This is hardly a challenge in normal-form games, but in extensive-form games---and more broadly \emph{polytope games}---each player has a number of strategies exponential in the representation of the game.\footnote{Such games are often referred to as games of \emph{exponential type}~\citep{Papadimitriou08:Computing}.} Virtually all algorithmic results concerning proper equilibria---with the one exception of~\citet{Miltersen06:Computing} in the special case of (two-player) zero-sum games---rely on the permutahedron per~\citet{Kohlberg86:strategic}.

\subsection{Our results} 

Our main contribution in this paper is to settle the complexity of computing normal-form proper equilibria in extensive-form games, establishing \PPAD-completeness in two-player games and $\FIXP_a$-completeness in games with at least three players. 

\paragraph{The permutahedron barrier} As discussed, a key algorithmic challenge is that using the standard perturbation of~\citet{Kohlberg86:strategic} based on the permutahedron, one would naively have to sort exponentially many vertices, which is clearly hopeless. We formalize this barrier, showing $\sharpP$-hardness for computing a perturbed best response \emph{\`a la} Kohlberg and Mertens.

\begin{theorem}[Informal; formal version in~\Cref{theorem:hardness-KM}]
    Computing a Kohlberg-Mertens best-response is $\sharpP$-hard even in Bayesian games.
\end{theorem}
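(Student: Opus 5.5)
The plan is to reduce from $\#\textsc{Knapsack}$: given nonnegative integers $a_1,\dots,a_n$ and a threshold $W$, count the subsets $S\subseteq[n]$ with $a(S):=\sum_{i\in S}a_i\le W$. I will use the Kohlberg--Mertens perturbed best response in the form used in \Cref{theorem:hardness-KM}. Against a fixed opponent profile, the player's pure strategies are sorted by expected utility, and the $k$-th best pure strategy ($k=0,1,2,\dots$) receives probability proportional to $\eps^k$. The object to be computed is the induced behavioral/sequence-form strategy, i.e.\ the marginal probabilities of the player's actions.

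\textbf{Construction.} The game is a Bayesian game in which the player has $n+1$ types $0,1,\dots,n$, each drawn with probability $1/(n+1)$. Each type chooses between actions $0$ and $1$. The opponent is a dummy, or equivalently the payoffs are fixed and do not depend on the opponent. Type $i\ge 1$ gets utility proportional to $a_i$ for action $1$ and $0$ for action $0$. Type $0$ gets utility proportional to $-(W+\tfrac12)$ for action $1$.

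A pure strategy is then a subset $T\subseteq\{0,\dots,n\}$ of types playing $1$, and its utility is linear: $u(T)=a(T\setminus\{0\})-(W+\tfrac12)\ind[0\in T]$. To make the sort order strict and unambiguous, I will perturb the weights by tiny generic amounts, for instance $a_i\mapsto a_i+2^{-i-2}/n$ after scaling. This leaves every comparison against the half-integer threshold $W+\tfrac12$ unchanged.

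\textbf{Key counting step.} Among strategies containing type $0$, the best one is $T^*=\{0,1,\dots,n\}$, with utility $A-W-\tfrac12$ where $A=a([n])$. Its rank $r$ equals the number of pure strategies with strictly larger utility. Any strategy containing $0$ is dominated by $T^*$, so only sets $S\subseteq[n]$ not containing $0$ can outrank it, and these are exactly the sets with $a(S)>A-W-\tfrac12$. Taking complements, these correspond to the sets $[n]\setminus S$ of weight $<W+\tfrac12$, i.e.\ of weight $\le W$. Hence $r=\#\{S:a(S)\le W\}$, the $\#\textsc{Knapsack}$ count.

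\textbf{Reading off the count.} The KM marginal probability that type $0$ plays action $1$ is
\[
p=\frac{\sum_{T\ni 0}\eps^{\mathrm{rank}(T)}}{\sum_{T}\eps^{\mathrm{rank}(T)}}.
\]
The numerator lies in $[\eps^r,\eps^r/(1-\eps)]$ and the denominator lies in $[1,1/(1-\eps)]$. So for a fixed small constant $\eps$, say $\eps=1/8$, we get $r=\lfloor\log_{1/\eps}(1/p)\rfloor$ or $\lceil\cdot\rceil$, determined uniquely because $\eps^{r}(1-\eps)<\eps^{r}\cdot$ and $\eps^{r-1}\cdot(1-\eps)>\eps^r/(1-\eps)$. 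Consequently an algorithm that outputs the KM best response, even approximately to polynomially many bits in the exponent, or exactly as a rational with $\eps$ symbolic, determines $r$ with one further polynomial-time step. This yields $\sharpP$-hardness under polynomial-time Turing reductions (in fact with a single query).

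\textbf{Expected obstacle.} The delicate step is matching the exact computational formulation of the KM best response in \Cref{theorem:hardness-KM}, namely how $\eps$ is supplied and what precision is demanded. If $\eps$ is a fixed input such as an explicit rational, $p$ can be as small as $\eps^{2^n}$, and the output format must still let us recover $r$. In that case I would ask for the exponent, or the leading term of $p$ as a function of $\eps$ (i.e.\ the limit ordering), and recover $r$ as the order of vanishing. I would also verify that the generic tie-breaking perturbation does not alter which sets lie below the threshold; the half-integer gap is designed to guarantee this.
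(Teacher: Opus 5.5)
Your proposal is correct and is essentially the paper's proof. The paper also reduces from \textsc{\#Knapsack} on the hypercube (the strategy set of your Bayesian game), using the utility vector $(-(W+1/2), -\vw)$ so that the index of the first vertex with first coordinate $1$ in the sorted order equals the knapsack count, read off as the exponent of the leading term of $\vx[1]$; your use of $+a_i$ followed by complementation is a cosmetic variant of this. Your generic tie-breaking perturbation is harmless but unnecessary: the integer/half-integer gap already forces that first position to be exactly $r$ under every tie-breaking, which is precisely the promise in the formal statement.
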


\paragraph{Complexity in polytope games} Second, we show that in general {\em polytope games}, there is no hope of $\PPAD$ or $\FIXP$ membership, barring a significant breakthrough in complexity theory: it is $\NP$-hard to compute a proper equilibrium of a polytope game, even with two players.\footnote{A problem in $\PPAD$ cannot be \NP-hard unless $\NP = \coNP$~\citep{Johnson88:How}. Also, a problem in $\FIXP$ cannot be \NP-hard unless $\coNP \subseteq \NP_{\mathbb{R}}$, where $\NP_{\mathbb{R}}$ is the class of problems polynomial-time reducible to deciding the existential theory of the reals~\citep{Etessami07:Complexity}.}

\begin{theorem}[Informal; formal version in~\Cref{th:nfpe-hard}]
Computing a proper equilibrium of a polytope game, described explicitly though linear constraints, is $\NP$-hard.
\end{theorem}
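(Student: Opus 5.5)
The plan is a reduction from an \NP-hard problem whose witness can be read off any proper equilibrium, using the lexicographic nature of properness. The working notion will be the Kohlberg--Mertens one: in a polytope game each player's strategy set is a polytope $\mathcal{X}_i=\{\vx : \mA\vx\le\vb\}$. A proper equilibrium is then the limit of $\eps$-proper profiles. In such a profile, whenever vertex $\vu$ is a worse response than vertex $\vw$, the weight on $\vu$ is at most $\eps$ times the weight on $\vw$. The key fact I will use is this: against an opponent's limit strategy, plus the opponent's lower-order trembles, a player's limit strategy must be optimal \emph{lexicographically}. It maximizes the expected utility. Among maximizers, it must also maximize against the first-order trembles of the opponent, and so on. In a polytope game these trembles are spread over \emph{all vertices} of the opponent's polytope, ordered by cost. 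With exponentially many vertices, that is exactly where hardness can hide.

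\textbf{Construction.} I would reduce from a problem like deciding whether a combinatorial polytope has a vertex of high value (e.g.\ maximum independent set, or \textsc{Subset-Sum}). The vertex set is exponential, but the polytope is described by few linear inequalities. Concretely:
\begin{itemize}
\item Player~2 gets a polytope $\mathcal{P}$ given by linear constraints whose vertices encode candidate solutions, for instance a non-integral relaxation where only some vertices are integral solutions. Player~2's utility is constant, so every strategy is optimal for player~2. Player~2's ranking of mistakes is then induced by a secondary term of player~2's utility that depends on player~1's action.
\item Player~1 has a simple strategy set, say a simplex over two actions ``safe'' and ``bet''. The payoffs are tied so that player~1 is indifferent at the leading order.
\end{itemize}
The tie-break that properness forces on player~1 is then determined by which vertex of $\mathcal{P}$ player~2 assigns its highest-order tremble mass to. I would arrange it so that properness forces player~2's trembles to concentrate, at leading order, on the vertex maximizing a linear objective $c^\top\vy$, i.e.\ the optimal solution. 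Player~1's lexicographic best response is then ``bet'' if and only if the instance is a yes-instance.

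\textbf{Correctness.} I would verify both directions. First, every proper equilibrium has player~1 playing ``bet'' exactly in yes-instances. Second, at least one proper equilibrium exists in each case, which holds automatically by Myerson's existence theorem. The step that needs care is the structure of $\eps$-proper profiles on a polytope. The representation of a point of $\mathcal{P}$ as a mixture of vertices is not unique, so I must use the definition in which the mixed strategy is over vertices, with the induced point in $\mathcal{P}$. I then need a lemma: in any $\eps$-proper profile, the expected tremble direction of player~2, normalized, converges to the best vertex among those not played in the limit. This follows by comparing the ratio bounds, where each worse vertex carries an $O(\eps)$ fraction of the mass of better ones.

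\textbf{Main obstacle.} The hardest part is designing player~2's incentives. The leading-order equilibrium point must be pinned to something trivial, like a fixed vertex $\vy_0$, while the ranking of the remaining vertices is a hard linear objective. Moreover, the top-ranked off-path vertex must be exactly the solution to the \NP-hard question, for instance whether $\max c^\top\vy\ge k$ over integral vertices. The pitfall is that fractional vertices of a relaxation may outrank integral ones. So the first thing I would try is choosing the polytope so that its vertices are exactly encodings that are hard to optimize over yet have a polynomial facet description. One option is an explicit polytope whose optimization is hard only because of the \emph{second-best} vertex structure: finding a second-best vertex of a polytope in $H$-representation is known to be \NP-hard. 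I would exploit this by making $\vy_0$ the unique optimum, the equilibrium path, and letting properness force player~1's behavior to depend on the second-best vertex.
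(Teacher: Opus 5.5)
You have the right mechanism, the same one the paper uses with the players' roles swapped: the polytope player's on-path play sits on its optimal face, and the other player, indifferent at leading order, breaks the tie according to the first-order trembles. But the proposal leaves the construction unspecified, and that construction is the whole content of the proof. Two of the steps you do commit to would fail as written.

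First, if player~2's utility is constant, the $\eps$-proper condition imposes nothing beyond full support. Player~2's trembles are then arbitrary, and player~1's tie-break is undetermined. If instead player~2's ranking depends on player~1's action, you must show the vertex ordering is the same whatever player~1 does. The paper does this by making the leading term $\tilde u$ half-integral on vertices and the interaction term too small to reorder them.

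Second, your lemma that the normalized tremble direction converges to ``the best vertex not played in the limit'' is false once the second tier has ties. Ties are the typical case, since all solutions of the hard instance sit at the same utility level, and properness says nothing about relative weights within a tier. The same holds for the top tier, which is generally a face rather than a single vertex.

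What is missing is a gadget that turns ``which vertices form the second tier'' into the sign of one fixed linear test functional, with that sign uniform on each tier. The paper starts from a \textsc{Partition} instance $\vw$ and sets $\cZ=\{\vz\in[-1,1]^d:\ip{\vw,\vz}\le 1\}$ and $\cX=\conv(\{\bot\}\cup(\{0\}\times\cZ))$ with apex $\bot=(1,\vec 0)$. The leading utility is $\tilde u(t,\vz)=-t/2+\ip{\vw,\vz}$.
\begin{itemize}
\item Every fractional vertex of $\cZ$ lies on $\ip{\vw,\vz}=1$, i.e.\ in the top tier. This removes your worry about fractional vertices outranking integral ones, and it is a reason not to insist on a unique optimum $\vy_0$.
\item The remaining vertices are $\pm1$ vectors with integer values. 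So the levels are $1$, then $0$ (exactly the partition solutions), then the fallback $\bot$ at $-1/2$, then values $\le -1$.
\item The test functional $1-\ip{\vw,\vz}-2t$ vanishes on the entire top face, equals $+1$ on every partition vertex, and equals $-1$ on $\bot$.
\item Everything below $\bot$ contributes at most $N\eps(1+\norm{\vw}_1)\lambda(\bot)$ to it.
\end{itemize}
Hence the two-action player plays $1$ in the limit if and only if a partition exists.

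Without a fallback vertex placed strictly between the hard level and the next one, together with a functional with these sign properties, \NP-hardness of finding a second-best vertex does not by itself yield a decision bit that every proper equilibrium must reveal.
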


Nonetheless, in our main result (\Cref{sec:efg br,sec:efg-proper}), we show that $\PPAD$- and $\FIXP$-membership {\em does} hold for {\em extensive-form games}. Given the negative result for polytope games, this should come as a surprise: to our knowledge, for all other equilibrium concepts whose computation has been studied so far, the complexity for extensive-form games matches (up to polynomial factors) the complexity for polytope games. 

\paragraph{Special case: the hypercube}

We start by describing our construction in the special case of the hypercube. The analysis in this subsection will be subsumed by the more general case of extensive-form games in \Cref{sec:efg br}; however, we include it because the hypercube is both complex enough to illustrate the main idea of the construction, and simple enough to be significantly easier to understand than the general extensive-form case.

Let $\cS = \{0, 1\}^d$ be the unit $d$-dimensional discrete hypercube, and $\cX = \conv \cS = [0, 1]^d$. Fix a utility vector $\vu \in \R^d$. To compute proper equilibria, a useful subroutine is to compute $\eps$-proper best responses. More formally, a point $\vx \in \cX$ is an $\eps$-proper best response if there exists some {\em full-support} distribution $\lambda \in \Delta(\cS)$ such that
\begin{enumerate}
    \item $\E_{\vs \sim \lambda} \vs = \vx$, and
    \item if $\ip{\vu, \vs - \vs'} < 0$, \ie, if $\vs$ has worse utility than $\vs'$, then $\lambda(\vs) \le \eps \cdot \lambda(\vs')$.
\end{enumerate}
A naive algorithm for computing an $\eps$-proper best response would be to sort all $2^d$ vertices of $\cS$, then construct a distribution $\lambda$ satisfying the above property (which is straightforward once the vertices have been sorted), and output its expectation. However, this naive algorithm, of course, is not efficient. We now illustrate a {\em polynomial-time} algorithm for computing $\eps$-proper best responses.

For notation in this section, we will use $\ve_i \in \cS$ to denote the $i$th standard basis vector, and more generally $\ve_A$ for a set $A \subseteq [d]$ to denote the vector with $1$s at the indices in set $A$. By symmetry, we may assume without loss of generality that the entries of $\vu$ are sorted and all nonpositive:
\begin{align*}
    \vu[d] \le \vu[d-1] \le \dots \vu[1] \le 0.
\end{align*}
Then we claim that 
\begin{align}
    \vx := (\hat\eps, \hat\eps^2, \dots, \hat\eps^d) \in [0, 1]^d, \label{eq:hypercube-br}
\end{align}
where $\hat\eps$ is a value to be chosen later, is an $\eps$-proper best response. To see this, consider the following distribution $\lambda$: $\lambda$ assigns probability $\vx[i]$ to each $\ve_i \in \R^d$, and all remaining probability on $\vec 0$. 

It is easy to check that the expectation of $\lambda$ is $\vx$. However, $\lambda$ is not full-support, so we need to modify $\lambda$ so that it is. The critical observation is the following: we can exploit the affine dependencies in $\cS$ to modify the probabilities in $\lambda$ without changing the expectation of $\lambda$. 

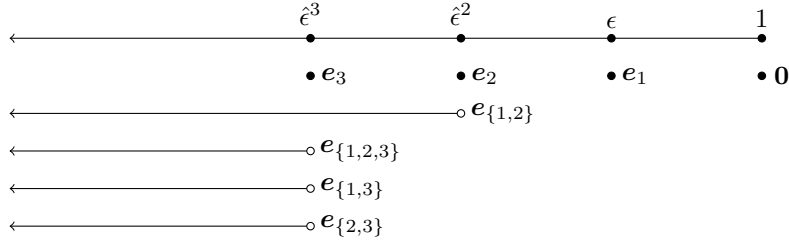
\begin{figure}
    \centering
    \tikzset{
        every node/.style={
            circle, draw, fill=black, inner sep=1pt
        },
        open/.style={fill=white},
    }
    \begin{tikzpicture}[x=20mm, y=5mm]
        \draw[->] (0, 0) -- (-5, 0);
        \node[label=above:{1}, ] at (0, 0){};
        \node[label=above:{$\eps$}, ] at (-1, 0){};
        \node[label=above:{$\hat\eps^2$}, ] at (-2, 0){};
        \node[label=above:{$\hat\eps^3$}, ] at (-3, 0){};
        \node[label=right:{$\vec 0$}, ] at (0, -1){};
        \node[label=right:{$\ve_1$}, ] at (-1, -1){};
        \node[label=right:{$\ve_2$}, ] at (-2, -1){};
        \node[label=right:{$\ve_3$}, ] at (-3, -1){};
        \draw[->] (-2, -2) -- (-5, -2);
        \node[label=right:{$\ve_{\{1, 2\}}$}, open] at (-2, -2){};
        \draw[->] (-3, -4) -- (-5, -4);
        \node[label=right:{$\ve_{\{1, 3\}}$}, open] at (-3, -4){};
        \draw[->] (-3, -5) -- (-5, -5);
        \node[label=right:{$\ve_{\{2, 3\}}$}, open] at (-3, -5){};
        \draw[->] (-3, -3) -- (-5, -3);
        \node[label=right:{$\ve_{\{1, 2, 3\}}$}, open] at (-3, -3){};
    \end{tikzpicture}
    \caption{Illustration of the proper best response construction over the hypercube, for $d=3$. The number line is on a logarithmic scale. The probability masses of $\vec 0, \ve_1, \ve_2$, and $\ve_3$ are fixed (up to lower-order terms) to the powers of $\hat \eps$, while the masses of $\ve_A$ for $|A| > 1$ can be changed to be anywhere in the indicated range of values. }
    \label{fig:hypercube}
\end{figure}

For example, consider the point $\ve_{\{1, 2\}}$. It satisfies the linear constraint
$
    \ve_{\{1, 2\}} = \ve_1 + \ve_2 - \vec 0.
$
Therefore, if we add probability mass $\delta > 0$ to $\ve_{\{1, 2\}}$, we can maintain the expectation of $\lambda$ by adding the same mass $\delta$ to $\vec 0$, and removing it from $\ve_1$ and $\ve_2$. How large can we make $\delta$? Since $\ve_1$ and $\ve_2$ currently both have mass $\hat\eps^2$, we can certainly make $\delta$ any number asymptotically smaller than $\hat\eps^2$ (for example, $\hat \eps^{2 + 2^{-d}}$) while maintaining the condition that $\lambda$ is a valid distribution (for sufficiently small $\hat\eps$), and only changing the masses of $\ve_1$ and $\ve_2$ by an asymptotically negligible amount. More generally, for any vertex $\ve_A$, we can use the linear relationship
\begin{align*}
    \ve_A = \vec 0 + \sum_{i \in A} (\ve_i - \vec 0)
\end{align*}
to add mass asymptotically smaller than $\hat \eps^{\max(A)}$ to $\ve_A$. Notice that $\ve_A$ has weakly worse utility than $\ve_i$ for every $i \in A$. Since this is the only constraint on the mass of $\ve_A$, by adding appropriate amounts of mass, all the vertices $\ve_A$ can be placed in ``sorted order'', in the required sense: if $\ip{\vu, \ve_A - \ve_{A'}} < 0$ then $\lambda(\ve_A) \ll \lambda(\ve_{A'})$, where the $\ll$ indicates that there is a multiplicative gap (say, approximately $\hat\eps^{2^{-d}}$) between any two distinct values of $\lambda$. A visualization of the possible values of the masses for $d=3$ can be found in \Cref{fig:hypercube}. Thus, to satisfy the proper best response condition, it suffices to use this construction and then take $\hat\eps \sim \eps^{2^d}$.

Note that, crucially, although the above {\em proof technique} relies on sorting all the vertices $\vs \in \cS$, the {\em algorithm for computing the proper best response} does not: the algorithm only needs to sort the indices $i \in [d]$, then output the vector as defined in \eqref{eq:hypercube-br}, which can be done efficiently!

Turning this into an algorithm for proper equilibrium computation in general extensive-form games requires several more steps:
\begin{enumerate}
    \item Extensive-form games, while containing the hypercube as a special case, are far more expressive. So, we must generalize the above construction. Intuitively, the formalization relies on generalizing the observation that it suffices for the proper equilibrium condition to apply at the level of {\em coordinates}: 
    \begin{align}
       \vu[i] < \vu[j] \qq{$\implies$} \vx[i] \le \hat\eps \cdot \vx[j]. \label{eq:hypercube-seq-proper}
    \end{align}
    \item For positive results, we appeal to the optimization framework of \citet{FilosRatsikas21,FilosRatsikas24:PPAD}. To use this framework, the best-response oracle (BRO) must have a particular form, namely, it must be a linear feasibility problem with a particular structure. The system of constraints \eqref{eq:hypercube-seq-proper}, generalized to the extensive-form setting, will yield a BRO of the correct form. %
    
    \item If there are multiple best responses, \citet{FilosRatsikas21,FilosRatsikas24:PPAD} requires that the BRO be allowed to return {\em any} solution to the linear feasibility problem. In our context, this means that $\vec x$ may be {\em any} vector satisfying \eqref{eq:hypercube-seq-proper}; in particular, it is possible for $\sum_i \vx[i] \ge 1$, so that the initial construction of $\lambda$, in which we set the probability of $\ve_i$ to $\vx[i]$, is not valid. We thus need to be more careful in constructing the inital distribution $\lambda$. We choose instead to take $\lambda$ to be the product distribution that assigns probability $\prod_{i \in A} \vx[i] \cdot \prod_{i \notin A} (1 - \vx[i])$ to each vertex $\ve_A$. The resulting analysis is more intricate than the simplified version shown in this section. 
    \item The above construction requires $\hat\eps \approx \eps^{2^d}$, which cannot be efficiently represented! Fortunately, we are interested in computing {\em proper} equilibria, which are defined as the {\em limit points} of $\eps$-proper equilibria---and, as it turns out, computing a limit point will not require $\hat\eps$ to be so small.
\end{enumerate}
We address all of these issues in \Cref{sec:efg br,sec:efg-proper}, leading to our main result:
\begin{theorem}[Main result, informal; formal version in \Cref{th:main}]
    Computing a proper equilibrium of an extensive-form game has the same complexity as computing a Nash equilibrium.
\end{theorem}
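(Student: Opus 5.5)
The plan has two directions. Hardness is immediate: normal-form games are the special case of extensive-form games in which each player moves once at a single information set. So the known \PPAD-hardness for two players and $\FIXP_a$-hardness for three or more players carry over. The real work is membership, which I would obtain through the optimization framework of \citet{FilosRatsikas21,FilosRatsikas24:PPAD}. That framework gives \PPAD-membership (two players, linear utilities) or $\FIXP$-membership (multi-player, polynomial utilities) for any problem whose solutions are fixed points of best-response oracles. Each oracle must be expressible as a linear feasibility program whose coefficients depend on the other players' strategies and on a perturbation parameter. I would carry this out in the sequence-form strategy space of each player, with $\eps$ treated symbolically or infinitesimally.

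\textbf{Step 1: the sequence-level proper best-response oracle.} I would generalize \eqref{eq:hypercube-seq-proper} to treeplexes. At every information set $j$, the children sequences $ja$ are compared by their counterfactual values $v(ja)$, computed bottom-up from the utility vector $\vu$. The constraint is
\[
v(ja) < v(ja') \implies \vx[ja] \le \hat\eps\, \vx[ja'].
\]
Every sequence also gets a small full-support lower bound relative to its parent sequence. The counterfactual values are linear in $\vu$, which is itself linear (two players) or multilinear in the opponents' strategies. However, the ordering comparisons are not directly linear. I would encode them as in Kohlberg--Mertens-style BROs, via constraints of the form
\[
\vx[ja] \le \hat\eps\,\vx[ja'] + M\,\bigl(v(ja')-v(ja)\bigr)^{+}\text{-type slack},
\]
or more cleanly as the optimality conditions of a perturbed LP. This perturbed LP maximizes $\ip{\vu,\vx}$ over a shrunken treeplex whose lower bounds are themselves variables tied to the ordering.

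\textbf{Step 2: soundness (the main obstacle).} I must show that any feasible $\vx$ for the oracle is the mean of a full-support $\lambda \in \Delta(\cS)$ over pure sequence-form strategies that is $\eps$-proper in the sense stated earlier. Following item 3 of the outline, I would start from the behavioral product distribution: each information set's local behavior is $\vx[ja]/\vx[j]$, and the induced measure on pure strategies multiplies these. I would then redistribute mass along affine dependencies of the treeplex vertices, generalizing $\ve_A = \vec 0 + \sum_{i\in A}(\ve_i - \vec 0)$ to "swap" relations among pure strategies that differ at a set of information sets, so that the masses become sorted by utility. The key lemma would say the following. If pure strategy $\vs$ is worse than $\vs'$, then some sequence in which they differ carries the decisive gap, and the product masses differ by at least a factor $\hat\eps^{c}$ for $c$ bounded polynomially or exponentially in the tree size. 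Taking $\hat\eps = \eps^{2^{O(n)}}$ then makes the corrections asymptotically negligible. The delicate parts are handling ties and lower-order terms, and making sure the corrections keep $\lambda$ nonnegative.

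\textbf{Step 3: passing to the limit.} Since $\hat\eps$ is doubly exponentially small, I would avoid it numerically. Instead, I would argue that the framework's fixed points, viewed as semialgebraic functions of $\hat\eps$, have a limit as $\hat\eps \to 0$. This limit can be computed with symbolic infinitesimals, or by showing the oracle's combinatorial structure stabilizes once $\hat\eps$ is below a threshold of polynomial bit-size. That limit is a normal-form proper equilibrium. For two players, the lexicographic and symbolic-$\eps$ treatment of linear programs used in the normal-form proofs \citep{Sorensen12:Computing,Hansen18:Computational} should adapt and give \PPAD\ membership. For more players, the analogous $\FIXP_a$ argument should apply.
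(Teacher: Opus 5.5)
\textbf{There is a genuine gap in Step 1, and it makes Step 2 false.} Your oracle compares only sibling sequences $ja, ja'$ at a common decision point, using counterfactual continuation values. Since $\vx[ja]/\vx[ja'] = \beta(a\mid j)/\beta(a'\mid j)$, this is exactly the quasi-proper tremble condition of \citet{vanDamme84:relation}. Quasi-proper equilibria strictly contain the normal-form proper ones, and their complexity was already settled by \citet{Hansen21:Computational}. Normal-form properness forces comparisons between sequences at \emph{different} decision points. Note that in \eqref{eq:hypercube-seq-proper}, the condition you say you are generalizing, $i$ and $j$ are distinct decision points, not siblings.

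Concretely, take $\{0,1\}^2$ with $\vu=(-1,-100)$. The vertices are ordered $\vec 0\succ\ve_1\succ\ve_2\succ\ve_{\{1,2\}}$, so every $\eps$-proper $\lambda$ has mean satisfying $\vx[2]=\lambda(\ve_2)+\lambda(\ve_{\{1,2\}})\le(\eps+\eps^2)\lambda(\ve_1)\le(\eps+\eps^2)\vx[1]$. Your local constraints only give $\vx[i]\le\hat\eps(1-\vx[i])$ for each $i$. They therefore admit $\vx[1]=\vx[2]$, or even $\vx[2]\gg\vx[1]$. Redistributing mass along affine dependencies never changes the mean, so no choice of $\lambda$ can rescue such an $\vx$, however small $\hat\eps$ is. Your ``key lemma'' already fails here: $\ve_1$ and $\ve_2$ get equal product mass even though $\ve_2\prec\ve_1$.

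In a game this changes the set of limit points. Suppose an opponent's information set is reached only through these trembles, and her payoff depends on which tremble occurred. Then your fixed points can converge to her best response to the wrong belief, which no normal-form proper equilibrium allows.

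The missing idea is the paper's \emph{global} condition. Let $u^*(\sigma)$ be the best value of a strategy that reaches $\sigma$ with probability one. Then impose $\vx[\sigma]\le\eps\,\vx[\sigma']$ whenever $u^*(\sigma)<u^*(\sigma')$, for \emph{all} pairs $\sigma,\sigma'$. Add the absolute floor $(\eps/2)^d$, which \Cref{lem:proper br prob lower bound} shows is without loss and which keeps the degree in $\eps$ polynomial.

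This global condition is what makes your Step 2 work. Under the behavioral product distribution, the most likely pure strategy $\vs^\sigma$ through each $\sigma$ attains $u^*(\sigma)$. Moreover, and this is where non-sibling comparisons are indispensable, the masses $\lambda(\vs^\sigma)$ are sorted consistently with utility across the whole tree. Every vertex $\vs$ is an affine combination of $\{\vs^\sigma:\sigma\in\supp\vs\}$ with coefficients bounded by $d$, and $\vs$ is weakly worse than each of them. Hence either its mass already sits in the correct multiplicative ``slot'', or it is tiny and can be reset freely into the correct slot.

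A smaller point: as written, your big-$M$ slack relaxes the constraint exactly when it should be enforced. The paper instead uses a CLF gate with enforcement weight $[u^*(\sigma')-u^*(\sigma)]_+$. With the oracle corrected, your hardness direction and your Step 3 (basis stabilization for PPAD, an almost/near argument for $\FIXP_a$) are essentially what the paper does.
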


\subsection{Related work}

A key reference point to our work is the paper of~\citet{Hansen18:Computational}, which centers on the complexity of computing proper equilibria in normal-form games. They showed that a \emph{strong approximation}---in the sense of~\citet{Etessami07:Complexity}---of proper equilibrium in multi-player games is $\FIXP_a$-complete. For polymatrix games, they established \PPAD-completeness, extending an earlier result by~\citet{Sorensen12:Computing} pertaining to two-player general-sum games. An important nuance about the $\PPAD$ membership result of~\citet{Hansen18:Computational} is that it concerns the computation of \emph{symbolic strategies}. This is important in view of the fact that even \emph{verifying} whether a strategy profile is a proper equilibrium in a two-player game is \NP-hard; one can circumvent this hardness by working with symbolic strategies. In this paper, following the original formulation of~\citet{Johnson88:How} on total search problems, we work with a definition of $\PPAD$ that does not require efficient verification (\Cref{remark:verif}).

The complexity of perfect equilibria is by now well-understood even in extensive-form games. Early pioneering work by~\citet{Hansen10:Computational} and~\citet{Etessami14:Complexity} characterized the complexity in normal-form games. The landscape in extensive-form games is more intricate. While one can revert to the usual notion of perfection in the induced normal-form game---known as \emph{normal-form perfect equilibrium}---there are significantly stronger solution concepts~\citep{Etessami21:Complexity}. Two notable such notions are \emph{extensive-form perfect equilibria (EFPEs)} and \emph{quasi-perfect equilibria (QPEs)}~\citep{VanDamme91:Stability}, which are incomparable~\citep{Mertens95:Two}. The complexity of these refinements has been the subject of much research~\citep{Gatti20:Characterization,Miltersen10:Computing,Farina17:Extensive}, culminating in the paper of~\citet{Etessami21:Complexity}.

Normal-form proper equilibria refine quasi-perfect equilibria, as was famously shown by~\citet{vanDamme84:relation}. We elaborate on the difference between normal-form proper and quasi-perfect equilibria in~\Cref{sec:qpes}. In fact, \citet{vanDamme84:relation} showed that normal-form proper equilibria refine a subset of quasi-perfect equilibria known as \emph{quasi-proper equilibria}. (\Cref{sec:diagram} provides a schematic illustration to help the reader keep track of the different refinements.) \citet{Hansen21:Computational} characterized the complexity of quasi-proper equilibria in extensive-form games, establishing \PPAD-completeness in two-player games and $\FIXP_a$-completeness in multi-player general-sum games. Prior to our work, the only result about normal-form proper equilibria we are aware of in extensive-form games concerns (two-player) zero-sum games~\citep{Miltersen08:Fast}, where a polynomial-time algorithm was developed. Interestingly, their algorithmic approach does not rely on the permutation scheme of~\citet{Kohlberg86:strategic} using the permutahedron. However, their approach is tailored to the zero-sum setting. It is worth noting that our algorithmic approach in this paper provides an alternative polynomial-time algorithm for computing normal-form proper equilibria in zero-sum extensive-form games.

\section{Preliminaries}

In this section, we introduce relevant background on equilibrium refinements and extensive-form games.

\begin{definition}
A (finite) {\em normal-form game} consists of the following elements.
\begin{enumerate}
    \item a finite set of players $[n] := \{ 1, \dots, n\}$;
    \item for each player $i$, a finite set of {\em pure strategies} $\cS_i$; 
    \item for each player $i$, a \textit{utility function}  $u_i: \cS \to \R$, where $\cS := \bigtimes_{i=1}^n \cS_i$.
\end{enumerate}
\end{definition}
A tuple $\vs := (s_1, \dots, s_n) \in \cS$ is called a {\em pure profile}. As is standard in game theory, we will use $\vs_{-i} \in \cS_{-i} := \bigtimes_{j \ne i} \cS_j$ to denote a profile for all players except $i$.
Players are also allowed to select {\em mixed strategies} $\lambda_i \in \Delta(\cS_i)$. A mixed strategy $\lambda_i$ is called {\em fully mixed} if it assigns strictly positive probability to every pure strategy. A tuple of mixed strategies $\vlambda = (\lambda_1, \dots, \lambda_n) \in \Delta(\cS_1) \times \dots \times \Delta(\cS_n)$ is a {\em mixed profile}. Finally, we will overload the notation
$
    u_i(\vlambda) = \E_{\vs \sim \vlambda} [u_i(\vs)],
$
to denote the expected utility of a mixed strategy profile.

In games, we are typically concerned with notions of {\em equilibrium}, which informally are selections of strategies for each player such that no player wishes to change her strategy. Here, we define the notions of equilibrium that will be relevant to our paper. 

\begin{definition}
    A strategy $\lambda_i$ is a {\em best response} to a profile $\vlambda_{-i}$ if $u_i(s_i, \vlambda_{-i}) \le u_i(\vlambda)$ for every possible strategy $s_i \in \cS_i$. A strategy profile $\vlambda$ is a {\em Nash equilibrium} if every player's strategy is a best response.
\end{definition}

To define equilibrium refinements such as perfect and proper equilibria, we need to define the notion of $\eps$-perfect and $\eps$-proper equilibria.

\begin{definition}
    A strategy $\lambda_i$ is an $\eps$-{\em perfect best response} to a profile $\vlambda_{-i}$ if it is fully mixed, and for all pairs of pure strategies $s_i, s_i' \in \cS_i$, we have $$u_i(s_i, \vlambda_{-i}) < u_i(s_i', \vlambda) \qq{$\Rightarrow$} \lambda_i(s_i) \le \eps.$$ That is, every suboptimal strategy is played with probability at most $\eps$.
\end{definition}

\begin{definition}
    A strategy $\lambda_i$ is an $\eps$-{\em proper best response} to a profile $\vlambda_{-i}$ if it is fully mixed, and for all pairs of pure strategies $s_i, s_i' \in \cS_i$, we have $$u_i(s_i, \vlambda_{-i}) < u_i(s_i', \vlambda_{-i}) \qq{$\Rightarrow$} \lambda_i(s_i') \le \eps \cdot \lambda_i(s_i).$$
\end{definition}
Obviously, every $\eps$-proper best response is $\eps$-perfect, but not vice-versa. We are now ready to define perfect and proper equilibria.
\begin{definition}
    An {\em $\eps$-perfect equilibrium} is a fully-mixed profile $\vlambda$ such that every player is playing an $\eps$-perfect best response. A {\em perfect equilibrium} is a limit point of $\eps$-perfect equilibria as $\eps \to 0$. Formally, $\vlambda^*$ is a perfect equilibrium if there are infinite sequences $\{ \vlambda^{(t)}\}_{t=1}^\infty \to \vlambda^*$ and $\{ \eps^{(t)} \}_{t=1}^\infty \to 0$ such that $\vlambda^{(t)}$ is a $\eps^{(t)}$-perfect equilibrium for every $t$.

    We define {\em ($\eps$-)proper equilibria} identically, by replacing every instance of {\em perfect} with {\em proper}.
\end{definition}

\subsection{Polytope and extensive-form games}

A {\em polytope game} is a special kind of normal-form game that can be expressed concisely. More formally, for each player, let  $\cX_i \subset \R^{d_i}$ be a convex compact polytope, and $\cX = \bigtimes_{i=1}^n \cX_i$ for the set of strategy profiles. We will assume that each player's utility $u_i : \cX \to \R$ is {\em multilinear}, that is, we will assume that $u_i(\vx_j, \vx_{-j})$ is an affine function of $\vx_j$ for every player $j$ and profile $\vx_{-j} \in \cX_{-j}$. We will treat polytope games as concise representations of normal-form games, in the sense that a polytope game represents the normal-form game in which each player's set of pure strategies $\cS_i$ is the set of vertices of $\cX_i$. %

The definitions of equilibria for normal-form games are defined over the sets of {\em mixed} strategies $\Delta(\cS_i)$. Since $\cS_i$ can be exponentially large, mixed strategies cannot be represented concisely. However, for polytope games, by multilinearity of the utility, it is easy to see that the utilities of each player depend only on the expectations: for example, whether $\vlambda_i$ is a Nash equilibrium depends only on the point $\E_{\vs \sim \vlambda}\vs \in \cX$. Thus, we will overload the definitions of equilibria and best responses to also apply to points $\vx \in \cX$ in the polytope; for example, we will say that $\vx$ is a Nash, perfect, or proper equilibrium if $\vx = \E_{\vs \sim \vlambda}\vs$ for some Nash, perfect, or proper equilibrium $\vlambda$.\footnote{Technically, moving from distributions to their expectations incurs a loss of information: for any given $\vx$ there may be many distributions $\lambda$ whose expectation is $\vx$. However, this choice of representation will allow us to circumvent the trivial  hardness barrier of needing to represent exponentially many probabilities to specify a probability distribution on $\cS$.}

{\em Extensive-form games} (EFGs) are a special class of polytope games that model games with imperfect information and multiple timesteps. In an extensive-form game, each player's strategy set $\cX_i$ arises from a {\em tree-form decision problem}. The tree-form decision problem for player $i$ is represented as a rooted tree with two types of nodes: {\em decision points} $j \in \cJ_i$, and \emph{observation points} (or \emph{sequences}), denoted as $\sigma \in \Sigma_i$. We assume WLOG that the root, denoted as $\varnothing \in \Sigma_i$, is always an observation point; that observation and decision points alternate along every root-to-leaf path, and that all leaves are also observation points. Outgoing edges from decision points are labeled with distinct {\em actions} $a$. The set of actions at decision point $j$ is denoted $\cA_j$, and for a given action $a \in \cA_j$ we will use $ja$ to denote the observation point child reached by following action $a$. The set of decision points that are children of a given observation point $\sigma$ is $C_\sigma$.  The number of observation points is $d_i := |\Sigma_i|$. The parent observation point of a decision point $j$ is denoted $p_j$. Finally, the tree-form decision problem induces a partial order $\succeq$, \eg, $\Root \succeq s$ for every non-root node $s$.

A \emph{behavioral strategy} $\beta_i$ for player $i$ is a collection of distributions one for each decision point. That is,  $\beta_i(\cdot|j) \in \Delta(\cA_j)$ denotes the probability distribution from which player $i$ samples their strategy. 
The {\em sequence form} of $\beta_i$ is the vector $\vx_i \in \R^{d_i}$, where $\vx_i[\sigma] = \prod_{ja \preceq \sigma} \beta_i(a|j)$ denotes the product of probabilities on the path to sequence $\sigma \in \Sigma_i$. The strategy set $\cX_i$ of player $i$ is then the set of all possible sequence form strategies for player $i$.  \citet{Romanovskii62:Reduction} and \citet{Stengel96:Efficient} showed that $\cX_i$ for player $i$ can be described by a system of linear constraints
\begin{align*}
    \vx_i[\varnothing] = 1, \;\;\; \vx_i[p_j] = \sum_{a \in A_j} \vx_i[ja] \;\; \forall j \in \cJ_i, \;\;\; \vx_i[\sigma] \geq 0 \;\; \forall \sigma \in \Sigma_i.
\end{align*}

\begin{remark}[Equilibrium refinement concepts for EFGs]
    In this paper, our focus is on computing {\em normal-form} proper equilibria of polytope games, including extensive-form games. A distinct notion of perfect refinement can be defined based on the extensive-form game tree; this is known as the {\em extensive-form perfect equilibrium} (EFPE). Surprisingly, the sets of extensive-form perfect and normal-form perfect equilibria may even be {\em disjoint}~\cite{Mertens95:Two}, so the results of the present paper do not have any implications for EFPE; indeed, results analogous to those in our paper for EFPE are known~\cite{Etessami21:Complexity}.
\end{remark}

\subsection{Representation and computational considerations}

Real-valued functions will be represented via arithmetic circuits, where the valid gates are $+,-, \times, \max, \min$, and rational constants. Such a circuit, by construction, always encodes a continuous function.  An arithmetic circuit is {\em piecewise linear} if all multiplication gates are multiplications by a rational constant. An arithmetic circuit defining a utility function $u_i : \cX_1 \times \dots \times \cX_n \to \R$ of a game is {\em multilinear} if it only uses gates $+, \times$, and rational constants, and different inputs to any given multiplication gate must involve disjoint sets of inputs and rational constants.

A generic arithmetic circuit may not be efficiently computable: for example, one can write down an arithmetic circuit that computes the number $2^{2^n}$ by repeated squaring in $n$ gates. However, both the multilinearity and piecewise linearity conditions prohibit repeated squaring, and thus allow efficient computation of the resulting circuits~\cite{Anonymous26:Complexity,Fearnley23:Complexity}.

\subsubsection{Complexity classes}

The complexity classes we will be interested in are classes involving {\em search problems}. A {\em search problem} is a function $R : \Sigma^* \to 2^{\Sigma^*}$, where $\Sigma^*$ is the set of all finite strings over a fixed alphabet $\Sigma$, \eg, $\Sigma = \{0, 1\}$. A search problem is {\em total} if $R(x) \ne \emptyset$ for all $x$. An algorithm {\em solves} a search problem if, given $x \in \Sigma^*$, it outputs some $y \in R(x)$. If $R$ and $R'$ be two total search problems, we say that $R$ {\em polynomial-time reduces to} $R'$ if there are polynomial-time computable functions $P$ and $P'$ such that $P'(R'(P(x))) \subseteq R(x)$ for all $x \in \Sigma^*$. Given these definitions, we may define the complexity classes by defining complete problems for each of them, following \citet{Etessami07:Complexity}.

\begin{definition}[$\FIXP_a$]
    The class $\FIXP_a$ consists of all total search problems that polynomial-time reduce to the following problem: given a function $f : [0,1]^n \to [0,1]^n$ as an arithmetic circuit, and an accuracy parameter $\delta > 0$ (as a rational number in binary), find a point $\vx \in [0,1]^n$ such that $\norm{\vx - \vx^*}_2 \le \delta$, where $\vx^*$ is a fixed point of $f$.
\end{definition}

\begin{definition}[$\PPAD$]
    The class $\PPAD$ consists of all total search problems that polynomial-time reduce to the following problem: given a function $f : [0,1]^n \to [0,1]^n$ as a {\em piecewise linear} arithmetic circuit, find an {\em exact} fixed point $\vx^*$ of $f$.
\end{definition}

Brouwer's fixed point theorem guarantees that both problems are total, and piecewise linearity guarantees the existence of an efficiently-representable exact fixed point~\cite{Etessami07:Complexity} in the latter case.

\begin{remark}[Solution verification]
    \label{remark:verif}
    It is common to include in the definition of $\PPAD$ the stipulation that, for a search problem $R$ to be in $\PPAD$, all solutions must be {\em efficiently verifiable}, that is, given $x$ and $y$, there should be a polynomial-time algorithm for deciding whether $y \in R(x)$. Following \citet{Johnson88:How}, we {\em intentionally} do not stipulate this condition. Indeed, checking if a point $\vx \in \cX$ is a perfect or proper equilibrium is $\NP$-hard even for normal-form games~\cite{Hansen10:Computational}. For further discussion on this point, we refer to \citet{Anonymous26:Complexity}.
\end{remark}

\subsubsection{Polytope games}
We will assume that polytope games are given as follows. Each player's strategy set $\cX_i$ is encoded via an explicit set of bounding linear constraints. The utility functions $u_i : \cX \to \R$ are encoded via multilinear arithmetic circuits. 

We will be interested in the problem of computing $\delta$-{\em near} equilibria. That is, given a game $\Gamma$, a precision $\delta > 0$, and a solution concept (Nash, perfect, or proper equilibria), our goal is to output a point $\vx \in \cX$ such that $\norm{\vx - \vx^*}_2 \le \delta$ for some exact equilibrium $\vx^*$.

\begin{remark}[Almost vs. near equilibria]
A different definition of approximate Nash equilibrium%
would arise from insisting that every player is $\delta$-almost best responding, that is, $u_i(\vx_i', \vx_{-i}) \le u_i(\vx) + \delta$ for every player $i$ and deviation $\vx_i' \in \cX_i$. We call these $\delta$-{\em almost} equilibria, to distinguish this from $\delta$-near equilibria. However, the definition of $\delta$-almost equilibria cannot be reasonably translated to perfect and proper equilibria, because these refinements are defined as {\em limit points} of certain sequences of approximate equilibria. We are therefore essentially forced to discuss {\em near} equilibria, and we will not discuss {\em almost} further equilibria in this paper. The choice to use near-equilibria rather than almost equilibria when discussing equilibrium refinements is standard: for example, \citet{Etessami14:Complexity,Etessami21:Complexity,Hansen10:Computational,Anonymous26:Complexity} make the same choice for the same reasons.
\end{remark}

\section{Hardness of proper equilibria in polytope games}\label{sec:hardness}

For most of the remainder of the paper, we will drop the subscript $i$ indicating the player, since we will usually be interested in computing a single-player best response, as this is the most difficult and intricate part of our results. Let $\cX$ be a convex compact polytope and $\cS$ be its set of vertices. Let $\vu$ be a utility vector. Our goal is to compute a {\em perturbed best response} to $\vu$ that is valid for proper equilibria. We will use $\vx \succeq \vx'$  to mean $\ip{\vu, \vx} \ge \ip{\vu, \vx'}$; the utility vector $\vu$ under consideration will always be clear from context. Formally, our goal is to compute the following.
\begin{definition}[$\eps$-proper best response]\label{def:eps-proper-br}
Given a convex compact polytope $\cX \subset \R^d$ and $\vu \in \R^d$, an $\eps$-proper best response is a strategy $\vx^* \in \cX$ such that $\vx^* = \E_{\vs\sim\lambda}\vs$
    for {\em some} full-support distribution $\lambda \in \Delta(\cS)$ such that $\vs \prec \vs'$ implies $\lambda(\vs) \le \eps \cdot \lambda(\vs')$.
\end{definition}

There is an immediate barrier to computing $\eps$-proper best responses and hence $\eps$-proper equilibria: the entries of $\vx^*$ may need to be doubly exponentially small. For example, take $\eps = 1/2$, $\cS = \{0,1\}^d$ and $\vu = -(1, 2, 4, \dots, 2^{d-1})$. All strategies $\vs \in \cS$ with $\vs[d] = 1$ are worse than all strategies with $\vs[d] = 0$; since there are no ties, this implies that $0 < \vx^*[d]  \le 2^{d-1}/2^{2^{d-1}}$ in any $\eps$-proper best response. But any positive rational number less than $2^{d-1}/2^{2^{d-1}}$ requires exponentially many bits to represent, so no $\eps$-proper best response is efficiently representable.

There are at least two possible resolutions to this problem. The first is to only consider the problem of computing a {\em symbolic} $\eps$-proper best response, where our output $\vx^*$ is a polynomial in $\eps$; in this polynomial expression, the exponents on $\eps$ are represented in binary, so expressions such as $\eps^{2^d}$ are legal. The second is to skip computational considerations of $\eps$-proper best responses altogether, and directly consider the problem of computing a proper equilibrium, that is, the {\em limit point}. We will briefly discuss the former approach, and then focus mostly on the latter.

The standard approach to computing an $\eps$-proper best response would be to sort the pure strategies $\vs_1, \dots, \vs_N \in \cS$ in decreasing order of utility, $\vs_1 \succeq \vs_2 \succeq \dots \succeq \vs_N$, define the distribution $\lambda$ that assigns probability proportional to $\eps^i$ to strategy $\vs_i$, then compute and output the expectation $\vx$ of $\lambda$. We refer to this best response as the {\em Kohlberg-Mertens (KM)} best response, because it is the one suggested by \citet{Kohlberg86:strategic}. This naive approach is already sufficient for normal-form games (when $N$ is small), but fails for polytope games, because $N$ can be exponentially large compared to the size of the input. Indeed, our first result shows that even computing a symbolic representation of the KM best response is hard, even when the strategy set is the hypercube $\{0, 1\}^d$.

\begin{theorem}[Hardness of computing the KM symbolic $\eps$-proper best response on the hypercube]
    \label{theorem:hardness-KM}
Given a vector $\vu \in \R^d$, define  a {\em KM best response} on the hypercube to be any vector\footnote{Note that there can be more than one such vector, due to tiebreaking.} $\vx \in [0, 1]^d$ of the form
    \begin{align*}
        \vx = \sum_{i=0}^{2^d-1} \eps^i \vs_i  
    \end{align*}
    where $\vs_0 \succeq  \dots \succeq \vs_{2^d-1}$ is the list of vertices of the hypercube $\{0, 1\}^d$, sorted in descending order of utility, and $\eps$ is symbolic. The following promise problem is $\sharpP$-hard under Turing reductions: given a utility vector $\vu$, compute the power of $\eps$ in the leading term in the polynomial $\vx^*[1]$, given the promise that this leading term is the same for all KM best responses $\vx^*$.
\end{theorem}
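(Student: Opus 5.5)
The plan is a direct reduction from counting knapsack solutions, \#\textsc{Knapsack}. Its input is positive integers $w_2,\dots,w_d$ and a capacity $C$, and it asks for the number of sets $A \subseteq \{2,\dots,d\}$ with $w(A) := \sum_{j\in A} w_j \le C$. This problem is $\sharpP$-hard under Turing reductions: the number of subsets of sum exactly $C$ (\#\textsc{SubsetSum}, which is $\sharpP$-complete) equals $N(C) - N(C-1)$, where $N(t)$ counts subsets of sum at most $t$.

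The key observation is structural. In $\vx = \sum_i \eps^i \vs_i$, the coordinate $\vx[1] = \sum_{i : \vs_i[1]=1} \eps^i$ has as its leading (lowest-power) term $\eps^{i^*}$. Here $i^*$ is the position, in the sorted list, of the first vertex with first coordinate equal to $1$. If no vertex ties with that vertex, then $i^*$ is exactly the number of vertices strictly better than it. This number is then independent of tie-breaking elsewhere in the list, so the promise holds.

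Given a knapsack instance, set
\[
\vu[j] := -2w_j \quad (j \ge 2), \qquad \vu[1] := -(2C+1).
\]
The steps are then the following.
\begin{enumerate}
\item Every vertex $\ve_{A}$ with $1\in A$ has utility $-(2C+1) - 2w(A\setminus\{1\})$. Since all weights are positive, this is strictly less than $\vu[1]$ unless $A=\{1\}$. So $\ve_1$ is the unique best vertex with first coordinate $1$.
\item A vertex $\ve_A$ with $A\subseteq\{2,\dots,d\}$ has utility $-2w(A)$. This is even, while $\vu[1]$ is odd, so no such vertex ties with $\ve_1$. Moreover, $-2w(A) > -(2C+1)$ holds if and only if $w(A)\le C$.
\item Hence, in every KM best response, the leading power of $\eps$ in $\vx[1]$ equals $|\{A\subseteq\{2,\dots,d\} : w(A)\le C\}|$, which is the \#\textsc{Knapsack} count. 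In particular, the promise is satisfied.
\end{enumerate}

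Because the output exponent is written in binary, it can be as large as $2^{d-1}$ without difficulty. Thus one oracle call answers \#\textsc{Knapsack}, and two calls (for $C$ and $C-1$) answer \#\textsc{SubsetSum}, which gives $\sharpP$-hardness under Turing reductions.

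The only delicate point is avoiding ties with the threshold vertex $\ve_1$, so that the promise holds and the count is exact regardless of tie-breaking. The parity trick of doubling the weights and making $\vu[1]$ odd handles this. Positivity of the weights, which is without loss of generality since zero-weight items only double the count, handles the comparison among vertices containing coordinate $1$.
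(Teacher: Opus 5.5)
Your proposal is correct and is essentially the paper's own proof. The paper reduces from \textsc{\#Knapsack} with $\vu = (-(W+1/2), -\vw)$; your $\vu$ is exactly twice that, and your parity trick plays the role of the half-integral threshold that rules out ties with $\ve_1$. You also make explicit two points the paper leaves implicit: the weights must be nonnegative so that $\ve_1$ is the best vertex with first coordinate $1$, and \textsc{\#Knapsack} is itself $\sharpP$-hard, which you get via \textsc{\#SubsetSum}.
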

\begin{proof}
    We reduce from {\sc \#Knapsack}. We are given a vector $\vw \in \Z^{d-1}$ and a capacity $W \in \Z$, and our goal is to count how many $\vz \in \{0, 1\}^{d-1}$ there are with $\ip{\vw, \vz} \le W$.
    Consider the utility vector $\vu = (-(W+1/2), -\vec w) \in \R^{d}$.
    Then, by construction, we have that $\vs_1[1] = \vs_2[1] = \dots = \vs_k[1] = 0 \ne \vs_{k+1}[1] = 1$, where $k$ is the number of vectors $\vz$ for which $\ip{\vw, \vz} \le W$. Thus, the leading term of $\vx^*[1]$ is exactly $\eps^k$, and computing $\eps^k$ exactly solves the  {\sc \#Knapsack} instance.
\end{proof}

However, the KM best response is only one way to satisfy \Cref{def:eps-proper-br}, and one might ask whether there are others. Here, we give strong evidence to the contrary, in the form of showing that computing a proper equilibrium---that is, the limit point, not to mention a symbolic $\eps$-proper equilibrium---of even a two-player polytope potential game is $\NP$-hard.

\begin{theorem}[Hardness of computation]\label{th:nfpe-hard}
    The following promise decision problem is $\NP$-hard (and $\coNP$-hard, because it is trivially equivalent to its own negation): given a two-player identical-interest polytope game where P1's strategy set $\cX$ is given by explicit linear constraints and P2's pure strategy set is $\{0, 1\}$, decide whether 
    \begin{itemize}
        \item P2 deterministically plays $0$ in every normal-form proper equilibrium, or
        \item P2 deterministically plays $1$ in every normal-form proper equilibrium
    \end{itemize}
    given the promise that one of the two statements is true.
\end{theorem}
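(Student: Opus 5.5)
The plan is a reduction from an $\NP$-complete problem, say 3-SAT, to the promise problem in the statement. The structural fact I want to exploit is this: in a proper equilibrium, the order of magnitude of P1's trembles onto suboptimal vertices of $\cX$ is dictated by their utility ranking, which is a global, vertex-level property of $\cX$. P2's choice between $0$ and $1$, by contrast, is dictated by the expectation of P1's trembles. So I aim to build $\cX$ so that the highest-ranked suboptimal vertices are favorable to action $1$ precisely when the formula is satisfiable, and favorable to action $0$ otherwise. This is the mechanism behind \Cref{theorem:hardness-KM} (the KM response depends on a global ordering of vertices), but here it must hold for every proper equilibrium, not only the KM one.

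\textbf{Construction.} P1's strategy set is $\cX = \conv(\{\vs_0\} \cup \cP)$, written by explicit linear constraints, for instance with an extra coordinate $t\in[0,1]$ and the homogenized constraints of $\cP$, so that $\vs_0$ corresponds to $t=0$. Here $\vs_0$ is a ``safe'' vertex giving common utility $1$ whatever P2 does. The polytope $\cP$, with $t=1$, is the standard LP-relaxation-style polytope of the formula, with small utilities. Its intended features are as follows.
\begin{itemize}
\item Its vertices with the largest value of a linear function $\ip{\vc,\cdot}$ are exactly the integral points encoding satisfying assignments, together with one fixed ``dummy'' vertex $\vs_\star$.
\item The dummy $\vs_\star$ is strictly worse than every satisfying-assignment vertex, but strictly better than every other vertex of $\cP$.
\end{itemize}
The common payoff when P2 plays $y\in\{0,1\}$ and P1 plays vertex $\vs\in\cP$ is $\eta\,\ip{\vc,\vs} + \eta^2\, y\,\ip{\vg,\vs}$, for a small rational $\eta$. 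The vector $\vg$ is chosen so that $\ip{\vg,\vs}>0$ at satisfying-assignment vertices and $\ip{\vg,\vs_\star}<0$. Because the $y$-dependent term is of lower order, P2's action cannot change the ranking of P1's vertices induced by $\vc$; it only decides the sign of P2's incentive.

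\textbf{Argument.} First, in every proper equilibrium P1 plays $\vs_0$ in the limit, since it strictly dominates all of $\cP$. Hence P2's action is determined by the limiting direction of P1's trembles. By the proper condition, whenever $\vs \prec \vs'$ we have $\lambda(\vs)\le \eps\,\lambda(\vs')$. So if the formula is satisfiable, the tremble mass on $\cP$ is dominated, up to a factor $\eps$, by satisfying-assignment vertices. These give P2 a strict incentive for $1$. (Ties among the satisfying vertices are harmless, since all of them have $\ip{\vg,\cdot}>0$.) If the formula is unsatisfiable, the mass is dominated by $\vs_\star$, which gives a strict incentive for $0$. The incentive is strict for all small $\eps$, so P2 plays that action deterministically in every $\eps$-proper equilibrium, and hence in every limit. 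This yields the promise dichotomy. Moreover, the decision problem is trivially equivalent to its own negation by relabeling P2's actions, so $\coNP$-hardness follows as well.

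\textbf{Main obstacle.} The hard step is building $\cP$ in H-representation, in polynomial time, so that the top level of the $\vc$-ranking consists exactly of satisfying-assignment vertices, and so that no spurious fractional vertex ties with them or sits between them and $\vs_\star$. A linear objective alone cannot detect satisfiability, since that would be LP. Satisfiability must therefore be encoded in the vertex structure of $\cP$ itself, with fractional vertices penalized only through the $\vc$-ranking. My first attempt is a product-like construction: one simplex or hypercube factor per variable, with clause constraints added as facets, and $\vc$ rewarding integrality through a separate coordinate that is $1$ only at integral vertices. I would then check carefully, via the affine-dependency reasoning used for the hypercube in the introduction, that every fractional vertex created by the clause facets ranks strictly below $\vs_\star$.
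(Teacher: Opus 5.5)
\textbf{There is a genuine gap}, and it is exactly the step you flag as the main obstacle: the polytope you ask for cannot be built in polynomial time unless $\P=\NP$. Your two bullets say that, among the vertices of $\mathcal{P}$, the maximizers of $\langle\vec{c},\cdot\rangle$ are exactly the satisfying-assignment vertices when the formula is satisfiable, and exactly $\vs_\star$ when it is not. But $\mathcal{P}$ is given by explicit linear constraints. So one can solve $\max_{\mathcal{P}}\langle\vec{c},\cdot\rangle$ in polynomial time, obtain an optimal vertex, and check whether it encodes a satisfying assignment. That would decide 3-SAT, and would even find a satisfying assignment.

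Encoding satisfiability ``in the vertex structure'' does not escape your own remark that a linear objective cannot detect satisfiability. The top level of any linear ranking of the vertices is an optimal face, and LP always sees it. The same defect carries over to $\cX$. Its vertices are $\vs_0$ together with the lifted vertices of $\mathcal{P}$. So the second tier of P1's ranking, which is the tier that determines P2's incentive, is again an LP optimal face. No choice of $\vec{g}$ and $\eta$ can then make P2's equilibrium action carry an $\NP$-hard bit.

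The missing idea is to put the hard information one level lower: in the best vertex \emph{off} an optimal face, which LP cannot see. This is a ``second-best vertex'' question. The paper reduces from {\sc Partition} with $\cX=\conv(\{\bot\}\cup(\{0\}\times\cZ))$, where $\cZ=\{\vz\in[-1,1]^d:\langle\vw,\vz\rangle\le 1\}$ and $\bot=(1,\vec 0)$. Only a single cut is added to the cube, so every vertex of $\cZ$ falls into one of two kinds:
\begin{itemize}
\item it lies on the cut, with value $1$; these form the top tier, which is always nonempty and easy to find;
\item it is a $\pm1$ cube vertex with integral value.
\end{itemize}
Hence the next tier consists exactly of the partition solutions (value $0$) if any exist. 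Otherwise the next tier is the sentinel $\bot$ (value $-1/2$).

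P2's incentive is proportional to $1-\langle\vw,\vz\rangle-2t$, which vanishes on the whole top face. It compares the shortfall $1-\langle\vw,\vz^*\rangle$, which is dominated by mass on the second tier, against $2t^*$, which is twice the mass on $\bot$.

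Your equilibrium-side reasoning matches the paper's. P1 concentrates on the top tier. For small $\eps$, the leading tremble tier fixes the sign of P2's incentive. Lower tiers contribute only an $O(\eps N)$ relative error, and strict incentives pass to the limit. This part would go through once your construction is replaced by one of this second-best-vertex type.
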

We dedicate the remainder of the subsection to the proof of the above theorem. We reduce from {\sc Partition}. We are given a vector $\vw \in \Z^d$, and our goal is to decide whether there is $\vz \in \{-1, 1\}^d$ with $\ip{\vw, \vz} = 0$. Define $\cX$ as 
\begin{align*}
    \cX &= \qty{ (t, \vz) \in [0, 1] \times \R^d : \norm{\vz}_\infty \le 1-t, \ip{\vw, \vz} \le 1-t}
    \\&= \conv\qty(\{(1, \vec 0)\} \cup ( \{0\} \times \cZ)) \qq{where} \cZ = \{ \vz \in [-1, 1]^d : \ip{\vw, \vz} \le 1\}.
\end{align*}
Let the utility function $u : \cX \times [0, 1] \to \R$ of both players be 
\begin{align*}
    u(\vx = (t, \vz), y) = \underbrace{-\frac12 t + \ip{\vw, \vz}}_{=: \tilde u (t, \vz)} + \underbrace{\frac{y}{3(1+\norm{\vw}_1)} \qty(1 - \ip{\vw, \vz} - 2t)}_{=: \delta(t, \vz, y)}.
\end{align*}
We will use the shorthand $\vx = (t, \vz)$ freely, and we will use $\bot = (1, \vec 0) \in \cX$. Intuitively, the rest of the proof proceeds as follows. The term $\delta(\cdot)$ is small and can be ignored when thinking about P1's incentive. The ordering of P1's pure strategies in decreasing order of utility is as follows: first, all those pure strategies $(0, \vz)$ where $\ip{\vw, \vz} \ge 1$; second, all those pure strategies $(0, \vz)$ where $\ip{\vw, \vz} = 0$; third, $\bot$; and finally, all other pure strategies, namely those of the form $(0, \vz)$ with $\ip{\vw, \vz} \le -1$. The second category are the solutions to the partition problem. Moreover, $t$ is precisely the probability of playing $\bot$, and in any $\eps$-proper best response, for sufficiently small $\eps$, the probability of each category is much smaller than the probability of the previous category. Thus, by comparing $1 - \ip{\vw, \vz}$ to $t$, we can understand whether the second category is empty. And P2's incentive is essentially to test which of these two quantities is bigger, so we can use P2's strategy in any proper equilibrium to determine whether a partition exists.
\begin{lemma}
    Let $\vs, \vs' \in \cS$. If $\tilde u(\vs) > \tilde u(\vs')$, then $u(\vs, y) > u(\vs', y)$ for all $y$.
\end{lemma}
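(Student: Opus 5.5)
The plan is to show two things: $\tilde u$ takes values on a grid of spacing $\tfrac12$ over the vertices $\cS$, and the perturbation $\delta$ varies by strictly less than $\tfrac12$ over $\cS$. Then any strict gap in $\tilde u$ between two vertices cannot be overturned by $\delta$.

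\textbf{Step 1: the vertices of $\cX$.} By the convex-hull description, every vertex of $\cX$ is either $\bot = (1, \vec 0)$ or of the form $(0, \vz)$ with $\vz$ a vertex of $\cZ$. For $\vs = (t, \vz) \in \cS$ we have $t \in \{0, 1\}$ and $\vz \in \cZ$. In particular $\norm{\vz}_\infty \le 1$, so $|\ip{\vw, \vz}| \le \norm{\vw}_1$.

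\textbf{Step 2: bound the variation of $\delta$.} For $y \in [0,1]$ and $(t,\vz) \in \cS$, the factor $1 - \ip{\vw,\vz} - 2t$ lies in the interval $[-1-\norm{\vw}_1,\; 1+\norm{\vw}_1]$. The lower end uses $t \le 1$ together with $\ip{\vw, \vz} \le 1$; the upper end uses $\ip{\vw,\vz} \ge -\norm{\vw}_1$. Hence $|\delta(\vs, y)| \le \tfrac13$, and so
\[
  |\delta(\vs, y) - \delta(\vs', y)| \le \tfrac23 .
\]

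\textbf{Step 3: the gap in $\tilde u$.} This is the main obstacle, because a gap of $\tfrac12$ does not exceed $\tfrac23$. The bound therefore has to be sharpened.

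Begin with the granularity. Since $\vw$ is integral, $\tilde u(\vs) \in \tfrac12\Z$ whenever $\vz$ is integral. Vertices of $\cZ$, however, may be fractional: a vertex can have a coordinate strictly inside $(-1,1)$, fixed by the tight constraint $\ip{\vw,\vz}=1$. Even so, at any such vertex with $t=0$ we have $\ip{\vw,\vz} \in \Z$. Indeed, either all coordinates are $\pm1$, or the knapsack constraint is tight and $\ip{\vw,\vz} = 1$. So $\tilde u(0,\vz) \in \Z$ and $\tilde u(\bot) = -\tfrac12$, and any two distinct values of $\tilde u$ on $\cS$ differ by at least $\tfrac12$.

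Next, sharpen the $\delta$ bound by comparing pairs directly. Write
\[
  \delta(\vs,y) - \delta(\vs',y) = \frac{y}{3(1+\norm{\vw}_1)}\bigl(-(\ip{\vw,\vz}-\ip{\vw,\vz'}) - 2(t-t')\bigr).
\]

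\emph{Case $t = t'$.} The change in $\tilde u$ is $\Delta := \ip{\vw,\vz} - \ip{\vw,\vz'}$, which is a positive integer. The change in $\delta$ has absolute value at most $\Delta/3$. The total change is therefore at least $\tfrac23 \Delta > 0$.

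\emph{Case $\vs = \bot$ or $\vs' = \bot$.} The gap in $\tilde u$ is at least $\tfrac12$, and I will bound the $\delta$-difference by $\tfrac13$. We have $\delta(\bot,y) = -\tfrac{y}{3(1+\norm{\vw}_1)}$. For $t=0$, $\delta(0,\vz,y) = \tfrac{y(1-\ip{\vw,\vz})}{3(1+\norm{\vw}_1)}$, and $1-\ip{\vw,\vz} \in [0,\, 1+\norm{\vw}_1]$. The difference therefore lies in $\bigl[\tfrac{y}{3(1+\norm{\vw}_1)},\; \tfrac{y(2+\norm{\vw}_1)}{3(1+\norm{\vw}_1)}\bigr]$, which is positive in the direction favoring the $t=0$ point.

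Now combine the two subcases of this case:
\begin{itemize}
  \item If $\tilde u(0,\vz) > \tilde u(\bot)$, the difference $\delta(0,\vz,y) - \delta(\bot,y)$ is nonnegative, so it only helps.
  \item If $\tilde u(\bot) > \tilde u(0,\vz)$, then $\ip{\vw,\vz} \le -1$. Here $u(\bot,y) - u(0,\vz,y) = -\tfrac12 - \ip{\vw,\vz} - \tfrac{y(2-\ip{\vw,\vz})}{3(1+\norm{\vw}_1)}$. Setting $k = -\ip{\vw,\vz} \ge 1$, this is at least $k - \tfrac12 - \tfrac{2+k}{3(1+k)}$, which is $\ge \tfrac12 - \tfrac12 = 0$. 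This is tight at $k=1$, $y=1$, $\norm{\vw}_1=1$, so I would double-check the constants there. If equality can actually occur, I would use $\norm{\vw}_1 \ge k$ with strict slack, or note that $\norm{\vw}_1 = 1$ is a trivial instance that can be excluded.
\end{itemize}
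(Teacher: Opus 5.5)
Your case analysis is correct up to the last bullet. It is also sounder than the paper's one-line proof. The paper asserts $0 \le \delta \le \frac13$, which is false because $\delta(\bot,y) = -\frac{y}{3(1+\norm{\vw}_1)} < 0$ for $y>0$. It also appeals to integral vertices, whereas you correctly note that vertices of $\cZ$ may be fractional yet still satisfy $\langle \vw,\vz\rangle \in \mathbb{Z}$.

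\textbf{The gap is that you leave the tight case open, and it cannot be closed.} Equality genuinely occurs, so the lemma as stated is false. Take $d=1$, $\vw=(1)$, $\vs=\bot$, $\vs'=(0,-1)$ (a vertex of $\cX$), and $y=1$. Then
\[
\tilde u(\bot)=-\tfrac12>-1=\tilde u(\vs'),
\qquad\text{yet}\qquad
u(\bot,1)=-\tfrac12-\tfrac16=-\tfrac23=-1+\tfrac13=u(\vs',1).
\]
More generally, your expression $k-\frac12-\frac{y(2+k)}{3(1+\norm{\vw}_1)}$ vanishes exactly when $k=\norm{\vw}_1=1$ and $y=1$. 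So the lemma fails precisely for the instances $\vw=\pm\ve_i$. Your first remedy, using strict slack in $\norm{\vw}_1\ge k$, is therefore unavailable: there is no slack in that case.

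\textbf{What is needed is an actual change of hypothesis, along the lines of your second remedy.}
\begin{itemize}
\item \emph{Excluding the degenerate instances.} Restrict to instances with $\norm{\vw}_1\neq 1$. This is harmless for the reduction, since $\vw=\pm\ve_i$ is a trivial no-instance of \textsc{Partition}. Your own bound then closes the case:
  \begin{itemize}
  \item for $k=1$ and $\norm{\vw}_1\ge 2$, the margin is $\frac12-\frac{y}{1+\norm{\vw}_1}\ge\frac16$;
  \item for $k\ge 2$, the margin is at least $k-\frac12-\frac{2+k}{3(1+k)}>0$;
  \item for $\vw=\vec 0$, this subcase is vacuous.
  \end{itemize}
\item \emph{Shrinking the perturbation.} Alternatively, replace the constant $3$ in $\delta$ by $4$. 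This gives a margin of at least $\frac12-\frac38=\frac18$ at $k=1$ with no extra hypothesis. It preserves the sign of $1-\langle\vw,\vz\rangle-2t$, which is what drives P2's incentive in the rest of the reduction.
\end{itemize}

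This also locates exactly where the paper's "half-integral grid versus small $\delta$" idea breaks. Over vertices, $\delta$ ranges over an interval of width $\frac{y}{3}\bigl(1+\frac{1}{1+\norm{\vw}_1}\bigr)$, which reaches $\frac12$ at $\norm{\vw}_1=1$, $y=1$.

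A written proof should state the repaired hypothesis (or the modified constant) explicitly, rather than ending with ``double-check the constants.''
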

\begin{proof}
    $\tilde u$ can only take on half-integral values when $t$ and $\vz$ are integral, and $0 \le \delta(t, \vz, y) \le 1/3$ for all $(t, \vz, \vy)$, so adding $\delta$ cannot affect the ordering of the $\vx$s.
\end{proof}
Let $(t^*, \vz^*, y^*)$ be an $\eps$-proper equilibrium. Order the vertices of $\cX$ as $\cS = \{ \vs_1, \dots, \vs_N \}$ where $\vs_i \succeq \vs_{i+1}$ for all $i$, and $\succeq$ is with respect to P1's utility $u(\cdot, y^*)$. Note that, by the previous lemma, this ordering also applies to $\tilde u$, that is, we have $\tilde u(\vs_i) \ge \tilde u(\vs_{i+1})$ for all $i$. We will use this fact freely. By definition of $\eps$-proper best response, there is a full-support distribution $\lambda \in \Delta(N)$ such that 
   $
        \vx^* = \E_{i \sim \lambda} \vs_i
   $
    and $\lambda_j \le \eps \lambda_i$ whenever $\vs_j \prec \vs_i$. Let $i^*$ be the smallest index for which $\tilde u(\vs_{i^*}) < 1$. Notice that, for all $i < i^*$, we must have $t_i = 0$ and $\ip{\vw, \vz_i} = 1$.

\begin{lemma}
    If the {\sc Partition} instance has a solution, then $y^* \ge 1 - \eps$.
\end{lemma}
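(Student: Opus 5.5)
The plan is to read off P2's incentive directly. P2's pure strategies are $\{0,1\}$, and P2's utility is affine in $y$ with slope
\[
\frac{1}{3(1+\norm{\vw}_1)}\qty(1 - \ip{\vw, \vz^*} - 2t^*).
\]
So it suffices to show that $1 - \ip{\vw,\vz^*} - 2t^* > 0$. Then pure strategy $1$ is strictly better than $0$ for P2, and $\eps$-properness of P2's (full-support) strategy forces $\lambda_2(0) \le \eps\,\lambda_2(1)$. Hence $y^* = \lambda_2(1) \ge 1/(1+\eps) \ge 1-\eps$.

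To show positivity, I will use the decomposition $\vx^* = \E_{i\sim\lambda}\vs_i$ from P1's $\eps$-proper best response. Since the expression is affine in $(t,\vz)$, it can be written as
\[
1 - \ip{\vw,\vz^*} - 2t^* = \sum_i \lambda_i\qty(1 - \ip{\vw,\vz_i} - 2t_i).
\]
I then classify the vertices of $\cX = \conv(\{\bot\}\cup(\{0\}\times\cZ))$. The only vertex with $t>0$ is $\bot=(1,\vec 0)$, which contributes $1-0-2=-1$. Every other vertex has the form $(0,\vz)$ with $\vz$ a vertex of $\cZ$, so $\ip{\vw,\vz}\le 1$ and its term is nonnegative. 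This holds regardless of whether $\vz$ is integral, since fractional vertices lie on $\ip{\vw,\vz}=1$ and contribute $0$.

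Now suppose $\vz^\circ\in\{-1,1\}^d$ solves the {\sc Partition} instance. Then $\ip{\vw,\vz^\circ}=0<1$, so $\vz^\circ$ is a cube vertex strictly inside the halfspace, hence a vertex of $\cZ$, and $\vs^\circ := (0,\vz^\circ)$ is a pure strategy of P1. Its term equals $1$, and $\tilde u(\vs^\circ)=0 > -\tfrac12 = \tilde u(\bot)$. By the first lemma this strict inequality persists for $u(\cdot,y^*)$, i.e.\ $\bot \prec \vs^\circ$, so $\eps$-properness gives $\lambda(\bot)\le \eps\,\lambda(\vs^\circ)$. Dropping all other (nonnegative) terms yields
\[
1 - \ip{\vw,\vz^*} - 2t^* \ \ge\ \lambda(\vs^\circ) - \lambda(\bot) \ \ge\ (1-\eps)\lambda(\vs^\circ) > 0,
\]
using $\eps<1$ and full support of $\lambda$.

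The only step needing care is the vertex bookkeeping. I must confirm that $\bot$ is the unique vertex with $t\neq 0$, and that the partition solution really is a vertex of $\cX$ (rather than merely a point of it), so that the properness constraint applies to it. Everything else is the one-line sign computation above.
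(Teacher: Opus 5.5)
Your proof is correct and follows essentially the same route as the paper's. In both, you lower-bound $1-\langle \vw,\vz^*\rangle$ by the mass on a partition vertex, use properness to bound $t^*=\lambda(\bot)\le\eps$ times that mass, and conclude that P2's payoff slope is positive, so $y^*\ge 1-\eps$. The differences are cosmetic: you take an arbitrary partition vertex rather than the first vertex $\vs_{i^*}$ in the sorted order, and you net out $\bot$'s contribution as $-1$, which gives the factor $1-\eps$ in place of the paper's implicit $1-2\eps$, so $\eps<1$ suffices instead of $\eps<1/2$.
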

\begin{proof}
    In this case, we must have $\tilde u(\vs_{i^*}) = 0$, and in fact $\vz_{i^*}$ must be a solution to the partition instance. Thus, $1 - \ip{\vw, \vz^*} \ge \lambda_{i^*}$. Moreover, since $\tilde u(\bot) = -1/2 < 0$, the weight $\lambda$ places on $\bot$ must be at most $\eps \cdot \lambda_{i^*}$, so $t \le \eps \cdot \lambda_{i^*}$. Thus $1 - \ip{\vw, \vz^*} - 2t^* > 0$, and in order for $y^*$ to be a best response, we must have $y^* \ge 1 - \eps$.
\end{proof}
\begin{lemma}
    If the {\sc Partition} instance has no solution, then $y^* \le \eps$.
\end{lemma}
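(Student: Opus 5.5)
The plan is to mirror the previous lemma: show that when no partition exists, $\bot$ is the unique vertex immediately after the top class. This forces $1 - \ip{\vw, \vz^*} - 2t^* < 0$, so P2 strictly prefers $y = 0$, and the $\eps$-proper condition for P2 gives $y^* \le \eps$.

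\textbf{Step 1: classify the vertices of $\cX$.} Every vertex is either $\bot = (1, \vec 0)$ or of the form $(0, \vz)$ with $\vz$ a vertex of $\cZ = [-1,1]^d \cap \{\ip{\vw,\vz} \le 1\}$. A vertex of $\cZ$ is of one of two kinds.
\begin{itemize}
\item It lies on the hyperplane $\ip{\vw, \vz} = 1$, so $\tilde u = 1$.
\item It is a cube vertex $\vz \in \{-1,1\}^d$ with $\ip{\vw,\vz} \le 1$. Then $\ip{\vw,\vz}$ is an integer. With no partition it is nonzero, so either $\ip{\vw,\vz} = 1$ or $\ip{\vw,\vz} \le -1$.
\end{itemize}
Consequently, the vertices with $\tilde u \ge 1$ are exactly those with $t = 0$ and $\ip{\vw,\vz} = 1$. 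Next comes $\bot$ with $\tilde u(\bot) = -1/2$. Every remaining vertex has $\tilde u \le -1 < -1/2$. Since the ordering with respect to $u(\cdot, y^*)$ agrees with $\tilde u$ (first lemma above), $\vs_{i^*} = \bot$. Moreover, every $\vs_i$ with $i > i^*$ satisfies $\vs_i \prec \bot$, so $\lambda_i \le \eps\,\lambda_{i^*}$.

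\textbf{Step 2: bound the sign of P2's incentive.} Write $t^* = \lambda_{i^*}$, since only $\bot$ has $t = 1$. Expanding by linearity,
\begin{align*}
1 - \ip{\vw, \vz^*} = \sum_i \lambda_i \bigl(1 - \ip{\vw, \vz_i}\bigr).
\end{align*}
The terms with $i < i^*$ vanish, because $\ip{\vw,\vz_i} = 1$ there. The term with $i = i^*$ contributes $\lambda_{i^*}$, since $\vz = \vec 0$ at $\bot$. Each term with $i > i^*$ is at most $\eps \lambda_{i^*}(1 + \norm{\vw}_1)$. Hence
\begin{align*}
1 - \ip{\vw,\vz^*} - 2t^* \le \lambda_{i^*}\bigl(-1 + N\eps(1+\norm{\vw}_1)\bigr),
\end{align*}
which is strictly negative once $\eps < 1/(N(1+\norm{\vw}_1))$. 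This restriction is harmless, because proper equilibria are limits as $\eps \to 0$ and $N$ is fixed. Note also that $\lambda_{i^*} > 0$ by full support.

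\textbf{Step 3: conclude for P2.} The coefficient of $y$ in $u$ is $\frac{1}{3(1+\norm{\vw}_1)}\bigl(1 - \ip{\vw,\vz^*} - 2t^*\bigr) < 0$. So P2's pure strategy $1$ is strictly worse than $0$. The $\eps$-proper condition then gives $y^* = \lambda_2(1) \le \eps\,\lambda_2(0) \le \eps$.

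The only delicate point is Step 1. One must account for the non-integral vertices of $\cZ$ on the hyperplane $\ip{\vw,\vz} = 1$ and confirm that they sit in the top class rather than competing with $\bot$. Once that is settled, the rest is a short computation.
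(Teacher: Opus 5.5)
Your proof is correct and follows the paper's argument. You identify $\vs_{i^*}=\bot$ with every later vertex strictly worse, expand $1-\ip{\vw,\vz^*}$ over $\lambda$ to get $1-\ip{\vw,\vz^*}-2t^*<0$ for small $\eps$, and apply the $\eps$-proper condition to P2; along the way you use the correct tail bound $1+\norm{\vw}_1$ and the correct comparison against $2t^*$, where the paper writes $\norm{\vw}_1$ and $2\lambda_{i^*}=t$.
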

\begin{proof}
    In this case, we have $\vs_{i^*} = \bot$. Moreover, the very next strategy $\vs_{i^*+1}$ must be a vertex of $\cZ$ and thus must have $\tilde u(\vs_{i^*+1}) \le -1$, because $\tilde u(\bot) = -1/2$ and $\bot$ is the only strategy with a non-integral value of $\tilde u$. Thus, $\lambda_j \le \eps \lambda_{i^*}$ for every $j > i^*$, and therefore we have
    \begin{align*}
        1 - \ip{\vw, \vz^*} \le \lambda_{i^*} + \sum_{j > i} \lambda_j \cdot \norm{\vw}_1 \le \lambda_{i^*} \qty(1 + \eps N \norm{\vw}_1) < 2 \lambda_{i^*} = t
    \end{align*}
    for sufficiently small $\eps$. Thus $1 - \ip{\vw, \vz^*} - 2t^* < 0$, and in order for $y^*$ to be a best response, we must have $y^* \le \eps$.
\end{proof}
Thus, passing to the limit, if the {\sc Partition} instance has a solution, then every proper equilibrium has $y=1$; if the {\sc Partition} instance has no solution, then every proper equilibrium has $y=0$. This completes the proof. 

\begin{remark}
    The above result implies that, if a normal-form proper equilibrium can be computed in polynomial time, then $\P = \NP$. It does {\em not} imply that computing a normal-form proper equilibrium is \FNP-hard, because the proof does not allow the extraction of the partition in the case where one exists, which would be required for \FNP-hardness. We leave closing this gap to future research.
\end{remark}

In the above proof, P1's order of pure strategies does not depend at all on P2's strategy. Thus, if $\vx^* = (t, \vz)$ is any optimizer of $\ip{\vw, \vz}$, then one of the two profiles $(\vx^*, 0)$ and $(\vx^*, 1)$ is a proper equilibrium. It thus follows that:
\begin{corollary}
    Checking whether a given profile is a proper equilibrium is $\NP$-hard, even for two-player identical-interest polytope games. 
\end{corollary}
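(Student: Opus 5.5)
The corollary follows by a direct reduction from the promise problem of \Cref{th:nfpe-hard}, using the observation just made: P1's ranking of pure strategies is independent of P2's play. Given a {\sc Partition} instance $\vw$, I build the same identical-interest polytope game as in the proof of \Cref{th:nfpe-hard}. I then compute in polynomial time a candidate P1 strategy $\vx^* = (t^*, \vz^*)$ and output the profile $(\vx^*, 1)$ as the instance to be checked. Under the promise, if the checker says ``yes'' there is a partition; otherwise there is none. Hence checking is $\NP$-hard under a many-one reduction of the promise problem, which gives $\NP$-hardness of the checking problem.

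\textbf{Step 1: identify the limit of P1's proper best responses.} By the first lemma, P1's order on $\cS$ under $u(\cdot, y)$ refines the order given by $\tilde u$, and the $\delta$ term never flips a strict $\tilde u$-comparison. I would strengthen this to an equality of orders: $\delta(\vs, y)$ depends on $\vs$ through $1 - \ip{\vw,\vz} - 2t$, and can break $\tilde u$-ties. The ties within the top category $\{(0,\vz) : \ip{\vw,\vz} = 1\}$ all share the same $\delta$, so ties there are harmless. Ties among lower categories only affect lower-order terms, which vanish in the limit.

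From the two lemmas' proofs, every $\eps$-proper best response puts all but an $O(\eps)$ fraction of its mass on the top category $\cS_1 = \{(0,\vz)\in\cS : \ip{\vw,\vz} = 1\}$. I should check that this set is nonempty. If it were empty, the top category would be the next one down, but some vertex of $\cZ$ always maximizes $\ip{\vw,\vz}$ at value $1$ whenever $\vw \ne \vec 0$, since $[-1,1]^d$ has points with $\ip{\vw,\vz}\ge 1$ when $\vw$ is integral and nonzero; the trivial case $\vw=\vec 0$ is handled separately.

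Hence every proper equilibrium's P1 component lies in $\conv(\cS_1) = \{(0,\vz) : \vz\in\cZ, \ip{\vw,\vz}=1\}$, so it has $t=0$ and $\ip{\vw,\vz}=1$.

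\textbf{Step 2: fix a particular $\vx^*$.} Here the cited remark says ``any optimizer'' works, but that needs care. Normal-form proper equilibria require $\vx^*$ to be the limit of proper responses, and the within-category weights of $\lambda$ are unconstrained, because all elements of $\cS_1$ are tied. Thus any point of $\conv(\cS_1)$ should be reachable as a limit. Given any target $\vx^* \in \conv(\cS_1)$ written as a convex combination of elements of $\cS_1$, I would take $\lambda^\eps$ equal to that combination with weight $1-O(\eps)$, plus geometrically decaying weights on lower categories. Together with P2 playing $\eps$-proper responses, this gives a sequence of $\eps$-proper equilibria.

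To get a full-support combination I would mix P1's combination with $\eps$-tiny uniform mass on the rest of $\cS_1$. I would then run the two lemmas' arguments to determine $y^\eps$: it is $\ge 1-\eps$ if a partition exists and $\le \eps$ otherwise. So $(\vx^*, 1)$ is proper iff a partition exists, and $(\vx^*,0)$ is proper otherwise. A concrete polynomial-time $\vx^*$ is any vertex optimizer of the LP $\max\ip{\vw,\vz}$ over $\cZ$ with $t=0$, for instance computed by the greedy fractional-knapsack rule.

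\textbf{Main obstacle.} The delicate point is existence in Step 2: showing that the sequence of $\eps$-proper equilibria with P1 converging to the chosen $\vx^*$ really exists. This is a fixed-point issue, since P2's response and P1's response must be simultaneously consistent. I would resolve it by noting that P1's order of pure strategies is the same for every $y$. I would therefore fix P1's $\lambda^\eps$ independently of $y$, apply the two lemmas to determine P2's sign, and let P2 play $\eps$ or $1-\eps$ accordingly. This gives an explicit $\eps$-proper equilibrium with no fixed-point argument needed.
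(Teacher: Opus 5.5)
Your proposal is correct and is the paper's own argument: take a polynomial-time optimizer $\vx^*$ of $\langle \vw,\vz\rangle$, use that P1's ranking of vertices does not depend on $y$, and ask whether $(\vx^*,1)$ is a proper equilibrium. The only difference is that you explicitly construct the approximating $\eps$-proper equilibria, which the paper leaves implicit. Two cosmetic fixes: first, $\delta$ never breaks $\tilde u$-ties, since tied vertices with $t=0$ share $\langle \vw,\vz\rangle$ and $\bot$ is the only vertex with $\tilde u=-1/2$, and this is exactly why the order is independent of $y$; second, P2 should play $1/(1+\eps)$ or $\eps/(1+\eps)$ rather than $1-\eps$ or $\eps$ to be exactly $\eps$-proper.
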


\section{Efficient best responses for extensive-form games}\label{sec:efg br}
In this section, we will show the surprising result that, despite the hardness for general polytope games, proper best responses {\em are} efficiently computable in {\em extensive-form games}. We will then use the ability to find best responses to show that computing a normal-form proper equilibrium of an extensive-form game is no harder than computing a Nash equilibrium.

Let $\cS \subset \{0, 1\}^\Sigma$ be a sequence-form pure strategy set with  dimension $|\Sigma|=d$ and size $|\cS| = N$, and $\cX = \conv \cS$. Fix a utility vector $\vu \in \R^\Sigma$. For each sequence $\sigma \in \Sigma$, let $u^*(\sigma)$ be the optimal value that can be attained by any strategy that plays to sequence $\sigma$ with probability $1$. That is, $u^*(\sigma)$ is the optimal value of the linear program
\begin{align}
    \max\quad \ip{\vx, \vu} \qq{s.t.} \vx \in \cX, \quad \vx[\sigma] = 1. \label{eq:conditional br}
\end{align}
Call a strategy $\vx \in \cX$ an {\em $\eps$-sequentially proper best response} if
\begin{enumerate}
    \item $\vx$ is fully mixed, \ie, $\vx[\sigma] > 0$ for every sequence $\sigma$, and
    \item $\vx[\sigma] \le \eps \cdot \vx[\sigma']$ whenever $u^*(\sigma) < u^*(\sigma')$.
\end{enumerate}
We demand that the latter condition hold for {\em all} pairs $\sigma, \sigma'$---not just, for example, pairs where $\sigma \succeq \sigma'$. Similarly, call a strategy profile of a game a {\em $\eps$-sequentially proper equilibrium} if each player is playing an $\eps$-sequentially proper best response, and a {\em sequentially proper equilibrium} if it is a limit point of $\eps$-sequentially proper equilibria as $\eps \to 0$.  

\begin{lemma}\label{lem:proper br prob lower bound}
    For every $\vu$ and every $\eps \in (0, 2/3]$, there exists an $\eps$-sequentially proper best response $\vx$ such that $\vx[\sigma] \ge (\eps/2)^{d}$ for every sequence $\sigma$. Moreover, there is an efficient algorithm for finding a pure symbolic $\eps$-proper best response.
\end{lemma}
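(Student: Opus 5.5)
Since $\vx$ lives in the sequence-form polytope, it is enough to build a behavioral strategy $\beta$ and check both conditions on its sequence form $\vx[\sigma]=\prod_{ja\preceq\sigma}\beta(a|j)$. The plan is to group sequences into utility levels, have every off-optimal action pay a factor $\eps/2$ per level it drops, and compare sequences through the values $u^*$.

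Sort the distinct values of $u^*(\sigma)$, $\sigma\in\Sigma$, as $v_1>v_2>\dots>v_m$, so $m\le d$. Write $\ell(\sigma)$ for the index with $u^*(\sigma)=v_{\ell(\sigma)}$. Two structural facts, both read off from \eqref{eq:conditional br}, drive the argument:
\begin{itemize}
\item[(i)] $u^*(\Root)=v_1$, and along any root-to-leaf path $u^*$ is weakly decreasing, because forcing a longer sequence only adds constraints. In particular $u^*(ja)\le u^*(p_j)$ for every decision point $j$ and action $a\in\cA_j$.
\item[(ii)] At every decision point $j$ some action $a$ attains $u^*(ja)=u^*(p_j)$. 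Indeed, the optimizer of the LP for $p_j$ can be taken pure, and it selects some action at $j$.
\end{itemize}

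Define the behavioral strategy as follows. At decision point $j$, each action $a$ receives weight $(\eps/2)^{\ell(ja)-\ell(p_j)}$, and these weights are normalized. The actions attaining the maximum contribute weight $1$, so the normalizer $Z_j$ satisfies $1\le Z_j\le 1+\sum_{k\ge1}|\cA_j|(\eps/2)^k$. Multiplying along the path to $\sigma$, the exponents telescope, giving
\begin{align*}
\vx[\sigma]=(\eps/2)^{\ell(\sigma)-1}\prod_{j\prec\sigma}Z_j^{-1}.
\end{align*}

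For the proper condition, suppose $u^*(\sigma)<u^*(\sigma')$, so $\ell(\sigma)\ge\ell(\sigma')+1$. Then the ratio $\vx[\sigma]/\vx[\sigma']$ is at most $(\eps/2)\cdot\prod_{j\prec\sigma'}Z_j/\prod_{j\prec\sigma}Z_j$. This is the main obstacle: the normalizers along the two unrelated paths need not cancel.

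To fix this, I would replace plain normalization with a deliberate choice that keeps the product of normalizers between $1$ and $2$. Concretely, assign each suboptimal action the weight $(\eps/2)^{\ell(ja)-\ell(p_j)}\cdot c$ for a small enough constant $c$, or spread the slack so that the total suboptimal mass at each $j$ is at most a $1/(2d)$ fraction. Then $\prod Z_j\le(1+1/(2d))^d\le 2$ along any path. With that bound the ratio is at most $(\eps/2)\cdot 2=\eps$, which is the required inequality.

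The same product bound gives the lower estimate. From the displayed formula, $\vx[\sigma]\ge(\eps/2)^{\ell(\sigma)-1}/2\ge(\eps/2)^{d}$, using $\ell(\sigma)\le d$. The hypothesis $\eps\le 2/3$ is what absorbs the factor $1/2$ and the constant $c$; I would tune $c$ so the arithmetic closes exactly there. Full support is immediate, since every weight is positive.

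For efficiency, each $u^*(\sigma)$ comes from a linear program, or more simply from bottom-up dynamic programming on the tree, because the constraint $\vx[\sigma]=1$ just forces the path to $\sigma$. Sorting the values gives $\ell$. The symbolic version replaces $\eps/2$ with a formal symbol, so each $\vx[\sigma]$ is an explicit monomial-type expression in $\eps$ with polynomially many bits.
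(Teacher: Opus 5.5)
\textbf{There is a genuine gap.} You correctly identify the obstacle: normalizers along unrelated paths do not cancel. The proposed repair does not remove it, for two reasons.

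First, ties. Several actions at $j$ may attain $\ell(ja)=\ell(p_j)$, and each keeps weight $1$. So your bound $Z_j\le 1+\sum_{k\ge1}|\cA_j|(\eps/2)^k$ is false in general. Take a chain of $D\ge 2$ decision points, each with two tied optimal actions, below an action at the root whose sibling is a level-$2$ leaf $\sigma$. The deep optimal sequence gets mass about $2^{-D}/Z_{j_0}$, while $\sigma$ gets $(\eps/2)/Z_{j_0}$. This violates the proper condition, and rescaling only suboptimal weights never touches it.

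Second, the constant $c$ does not telescope. In fact
\[
\vx[\sigma]=(\eps/2)^{\ell(\sigma)-1}\,c^{m(\sigma)}\Big/\prod_{j\prec\sigma} Z_j ,
\]
where $m(\sigma)$ is the number of off-best steps on the path. So the ratio $\vx[\sigma]/\vx[\sigma']$ picks up a factor $c^{m(\sigma)-m(\sigma')}$. For example, let $\sigma'=j_2x_2$ be reached by two one-level drops, and let $\sigma=j_1y$ be a three-level drop at $j_1$. The ratio is then $(\eps/2)Z_{j_2}/c>\eps$ whenever $c<1/2$. Yet you need $c\approx 1/(2d|\cA_j|)$ to force $Z_j\le 1+1/(2d)$ when $\eps/2=1/3$. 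The lower bound degrades in the same way, to $(\eps/2)^{\ell-1}c^{m}/2$. Any per-step multiplicative slack has this defect.

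\textbf{The missing idea} is to put the slack into the exponents and to define the strategy directly in sequence form.
\begin{itemize}
\item Order all sequences $\sigma_0,\dots,\sigma_{d-1}$ by decreasing $u^*$, breaking ties so that ancestors come first. Let $k(\sigma)$ be the index, so each sequence gets a \emph{distinct} exponent rather than a shared level.
\item Fix one best action $a^*(j)$ per decision point. For $a\ne a^*(j)$, set $\vx[ja]=\hat\eps^{k(ja)}$; this is an absolute mass, not a conditional probability.
\item Let $\vx[ja^*(j)]=\vx[p_j]-\sum_{a'\ne a^*(j)}\hat\eps^{k(ja')}$.
\end{itemize}

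Unrolling gives $\vx[\sigma]=\hat\eps^{k}-\sum_{k'\in S}\hat\eps^{k'}$. Here $k=k(\tau)$ for the top $\tau$ of the best-action chain ending at $\sigma$, so $u^*(\tau)=u^*(\sigma)$, and $S$ is a set of distinct integers larger than $k$. Distinctness bounds the subtracted part by the geometric tail $\hat\eps^{k+1}/(1-\hat\eps)\le\hat\eps^k/2$ for $\hat\eps\le 1/3$. Hence $\hat\eps^k/2\le\vx[\sigma]\le\hat\eps^k$ uniformly, with nothing accumulating along paths.

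If $u^*(\sigma)<u^*(\sigma')$, the exponents differ by at least one, so $\vx[\sigma]\le 2\hat\eps\,\vx[\sigma']=\eps\,\vx[\sigma']$ with $\hat\eps=\eps/2$. The hypothesis $\eps\le 2/3$ is exactly $\hat\eps\le 1/3$. The lower bound follows as $\vx[\sigma]\ge\hat\eps^{d-1}/2\ge(\eps/2)^d$.

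Shared level exponents would fail even in this absolute form, because several non-chosen siblings at the same level can jointly exceed their parent's mass. A strict order refining the levels is needed.
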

\begin{proof}
    Number the sequences $\Sigma = \{ \sigma_0, \dots, \sigma_{d-1}\}$ in descending order of $u^*(\sigma)$, tiebreaking in favor of having higher-up sequences with lower numbers. Let $k(\sigma)$ be the index of sequence $\sigma$ in this ordering. In particular, $\sigma_0 = \Root$ and $\sigma_{d-1}$ is a leaf. Moreover, for each decision point $j$ let $a^*(j)$ be a best-responding action at $j$, breaking ties arbitrarily. Then consider the strategy recursively defined by $\vx^*[\Root] = 1$, and
    \begin{align*}
        \vx^*[ja] := \begin{cases}
            \vx^*[p_j] - \sum_{a' \in A_j \setminus a} \hat\eps^{k(ja')}&\qif a = a^*(j) \\
            \hat\eps^{k(ja)} &\qq{otherwise}
        \end{cases}
    \end{align*}
    where $\hat\eps$ will be picked later. This can be computed efficiently. Every entry of $\vx^*$ has the form $\vx^*[\sigma] = \hat\eps^k - \sum_{k' \in S} \hat\eps^{k'}$ where $S$ contains only natural numbers larger than $k$, and $k$ is the index of some sequence $\sigma'$ with $u^*(\sigma') = u^*(\sigma')$. Moreover,
    \begin{align*}
        \hat\eps^k - \sum_{k' \in S} \hat\eps^{k'} \ge \hat\eps^k - \sum_{k'=k+1}^\infty \hat\eps^{k'} = \hat\eps^k - \frac{\hat\eps^{k+1}}{1 - \hat\eps} \ge \frac12 \hat\eps^k > 0
    \end{align*}
    for $\hat\eps \le 1/3$, so $\vx^*$ is indeed a valid strategy. Therefore, if $u^*(\sigma) < u^*(\sigma')$, then we have $\vx^*[\sigma] \le 2\hat\eps \cdot \vx^*[\sigma']$, and taking $\hat\eps = \min\{ \eps/2, 1/3\}$ completes the proof.
\end{proof}

Moreover, the problem of finding an $\eps$-sequentially proper best response can be expressed as a feasibility problem with linear constraints, given the values $\{u^*(\sigma)\}_{\sigma \in \Sigma}$:
\begin{align}
\begin{aligned}
    \qq{find} \vx \in \cX &\qq{s.t.} \\\vx[\sigma] \ge (\eps/2)^{d} &\qq{for all $\sigma$,} \\\vx[\sigma] \le \eps \cdot \vx[\sigma'] &\qq{for all $\sigma, \sigma'$ with $u^*(\sigma) < u^*(\sigma')$}
\end{aligned}\label{eq:sf proper br lp}
\end{align}
This LP formulation, whose feasibility is guaranteed by \Cref{lem:proper br prob lower bound}, will be useful later on. Moreover, since the lower bound $\vx[\sigma] \ge (\eps/2)^{d}$ can be enforced without loss of generality, we will enforce it; that is, when we say ``$\eps$-sequentially proper best response'' from now on, we will mean a solution to \eqref{eq:sf proper br lp}.

The main purpose of defining $\eps$-sequentially proper best responses is that they are computationally tractable, while being {\em equivalent} to the usual notion of $\eps$-proper best responses, in the following formal sense.
\begin{theorem}\label{th:equivalence}
    Let $\eps > 0$, and $\hat\eps = \eps^{N+1}/(8N^2d)$. Then
    \begin{enumerate}
        \item every $\hat\eps$-proper best response is an $\eps$-sequentially proper best response, and
        \item every $\hat\eps$-sequentially proper best response is an $\eps$-proper best response.
    \end{enumerate}
    Hence, the sequentially proper equilibria coincide with the normal-form proper equilibria.
\end{theorem}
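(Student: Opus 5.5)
Both directions rest on a few structural facts about sequence-form vertices. For a pure strategy $\vs\in\cS$ and a sequence $\sigma$ with $\vs[\sigma]=1$ we have $\ip{\vu,\vs}\le u^*(\sigma)$. Moreover, $u^*(\sigma)=\max\{\ip{\vu,\vs}:\vs\in\cS,\ \vs[\sigma]=1\}$, because \eqref{eq:conditional br} is an LP over a face of $\cX$ whose vertices are pure strategies. Finally, for any full-support $\lambda$ with expectation $\vx$ we have $\vx[\sigma]=\lambda(\{\vs:\vs[\sigma]=1\})$. The proof proceeds in two parts, and the closing statement about equilibria then follows by passing to limits. The factors $\hat\eps\to 0$ and $\eps\to0$ are polynomially related for fixed game size, so limit points of one family are limit points of the other.

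\emph{Part 1 (proper $\Rightarrow$ sequentially proper).} Let $\lambda$ witness that $\vx$ is $\hat\eps$-proper, and let $u^*(\sigma)<u^*(\sigma')$. Pick a pure $\vs'$ with $\vs'[\sigma']=1$ and $\ip{\vu,\vs'}=u^*(\sigma')$. Every pure $\vs$ with $\vs[\sigma]=1$ has $\ip{\vu,\vs}\le u^*(\sigma)<\ip{\vu,\vs'}$, so $\lambda(\vs)\le\hat\eps\lambda(\vs')$. Summing over at most $N$ such $\vs$ gives
\[
\vx[\sigma]\le N\hat\eps\,\lambda(\vs')\le N\hat\eps\,\vx[\sigma']\le\eps\,\vx[\sigma'].
\]
Full support of $\lambda$ gives full support of $\vx$. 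For the lower bound $\vx[\sigma]\ge(\eps/2)^d$ required in \eqref{eq:sf proper br lp}, I note that some optimal pure strategy has mass at least $1/N$. Every other strategy lies at most $N$ levels below it in the utility order, and chaining the ratio $\hat\eps$ across levels gives $\lambda(\vs)\ge\hat\eps^N/N$. Hence $\vx[\sigma]\ge\hat\eps^N/N$. This is smaller than $(\eps/2)^d$, so the lower-bound constraint cannot be obtained this way. I would handle this by reading the convention as applying only to the outputs we construct, or by noting that Part 1 is only needed for the limit-point equivalence, where the lower bound is irrelevant.

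\emph{Part 2 (sequentially proper $\Rightarrow$ proper).} This is the main obstacle, and it generalizes the hypercube argument. Given $\vx$ that is $\hat\eps$-sequentially proper, I first form the behavioral strategy $\beta(a\mid j)=\vx[ja]/\vx[p_j]$ and the induced product distribution $\mu$ on $\cS$, whose expectation is $\vx$. Next I perturb $\mu$ along affine dependencies of $\cS$ to reach a $\lambda$ that is sorted by utility.

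The key observation for the perturbation is the following. For a pure $\vs$, let $\sigma(\vs)$ be the sequence of $\vs$ with the smallest $\vx$-value, that is, its "most costly deviation". Then $\mu(\vs)$ is comparable to a product of the $\beta$'s, and $\ip{\vu,\vs}\le u^*(\sigma)$ for every $\sigma$ in $\vs$. Strategies that deviate at several places are the analogues of $\ve_A$ with $|A|>1$. For these I would use the tree analogue of $\ve_A=\vec0+\sum_{i\in A}(\ve_i-\vec0)$: a pure strategy deviating at a set of decision points equals the baseline best response plus the sum of the single-deviation differences, provided the deviations are at incomparable decision points. Nested deviations are handled by induction on depth.

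Using this identity I would reassign mass. I set each $\lambda(\vs)$ to a value in a prescribed geometric window of the form $\hat\eps^{\text{rank}}\cdot\vx[\sigma(\vs)]$, chosen to respect the utility order, while keeping the expectation equal to $\vx$. The single-deviation strategies absorb the compensating changes, and these are changes of only negligible relative size. Finally I check two things. First, strictly worse strategies receive mass at most $\eps$ times that of better strategies; this is where the exponent $N+1$ and the factor $8N^2d$ in $\hat\eps$ are spent, with one factor of $\eps$ per rank among at most $N$ distinct utility levels and with $N^2d$ absorbing the union bounds over compensating adjustments. Second, all masses remain positive.
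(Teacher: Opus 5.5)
Your Part 1 is the paper's argument. One aside: $\hat\eps$-properness only upper-bounds the mass of worse vertices, so chaining does not give $\lambda(\vs)\ge\hat\eps^N/N$. Your conclusion that the floor $(\eps/2)^d$ is not implied is still right, and the paper's proof of item 1 ignores that floor too. Part 2 has the paper's architecture: the product distribution $\mu$ from the behavioral strategy, each vertex written as an affine combination (coefficients at most $d$) of the conditional best responses $\vs^\sigma$, $\sigma\in\supp\vs$, and these anchors absorbing the compensation. But the two steps that carry the proof are missing, and the version you sketch would not work.

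First, ``changes of only negligible relative size'' is neither automatic nor always true. For vertices that really are reassigned, you need the paper's negligibility lemma (\Cref{lem:behavioral properties}, item 2, and \Cref{lem:movement ub}). If $\vs$ is strictly worse than its worst anchor $\vs^\sigma$, then $\vs$ plays a strictly suboptimal action at some decision point off the root-to-$\sigma$ path. That action contributes a factor $\beta\le\hat\eps$, so $\mu(\vs)\le\hat\eps\,\vx[\sigma]\le N\hat\eps\,\mu(\vs^\sigma)$. For vertices that tie with their worst anchor, the claim fails outright. Take $k$ incomparable decision points, each with two tied optimal actions played with probability $1/2$, plus one decision point $j'$ with a strictly suboptimal action $c$. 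Then every vertex playing $c$ has mass $2^{-k}\vx[j'c]$, the same as the anchor $\vs^{j'c}$, yet only one of these $2^k$ vertices is that anchor. Pushing the others into a window $\hat\eps^{\mathrm{rank}}\vx[\sigma(\vs)]$ would shift mass comparable to the anchors' own mass, and they need not be moved at all. The paper splits off $\cS^*=\{\vs:\mu(\vs)>\min_{B_\vs}\mu(\vs^\sigma)/(2Nd)\}$. By the lemma, every vertex in $\cS^*$ ties in utility with its worst anchor. So it is left in place (its mass changes by at most a factor $2$) and is already correctly sorted by the constraints $\vx[\sigma]\le\hat\eps\,\vx[\sigma']$. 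Only vertices outside $\cS^*$ are reassigned.

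Second, the target mass of a reassigned vertex cannot be keyed to its own support. Anchor masses are pinned near $\vx[\sigma]$, and \eqref{eq:sf proper br lp} only orders these levels; it does not space them. Take decision points $j_0,j_1,j_2$ under the root, each with a free action $a_i$ and a costly action $b_i$, where $b_1,b_2$ cost $1$ and $b_0$ costs $3/2$. Set $\vx[j_1b_1]=\vx[j_2b_2]=\hat\eps/2$. The constraints then allow $\vx[j_0b_0]$ to sit many powers of $\hat\eps$ lower, down to the floor $(\hat\eps/2)^d$. The vertex $\vs$ playing $b_1,b_2$ has utility $-2<-3/2$. So any valid $\lambda$ needs $\lambda(\vs)\le\eps\,\lambda(b_0,a_1,a_2)\le\eps\,\vx[j_0b_0]$. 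This constraint comes from a sequence outside $\supp\vs$, which a window built from $\vx[\sigma(\vs)]$ and the rank of $\vs$ never sees. The paper instead calibrates each reassigned $\vs$ against the lowest-mass anchor among \emph{all} anchors weakly better than $\vs$ (its $\sigma^*$). It scales that mass by $1/(4N^2d)$ and by a power of $\eps$ set by the utility rank of $\vs$. This is where $\eps^{N+1}$ is spent: anchor slots at different levels are a factor $\eps^{N+1}$ apart, leaving room for up to $N$ values separated by factors of $\eps$. Because $B_\vs$ is among those anchors, the target is automatically below $\min_{B_\vs}\mu(\vs^\sigma)/(2Nd)$. That gives correct sorting and negligible compensation at the same time.
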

Note that the bound in this theorem is much worse than the bound in \Cref{lem:proper br prob lower bound}---the latter is only singly-exponential in $d$, whereas this theorem's bound in doubly-exponential. The fact that \Cref{lem:proper br prob lower bound} has a singly-exponential bound will be crucial later on. 

The remainder of the section is dedicated to the proof of this theorem.
For notation, let $\cS^\sigma = \{ \vx \in \cS:  \vx[\sigma] = 1\}$, and $\cX^\sigma = \conv \cS^\sigma$. For any strategy $\vx$ and decision point (or sequence) $j$, define $\vx_{\succeq j}$ to be the sub-vector of $\vx$ indexed on all sequences that are successors of $j$. Notation such as $\vx_{\not\succeq j}$, $\vx_{\succ \sigma}$, {\em etc.} is defined analogously.

    For the first statement, let $\lambda \in \Delta(\cS)$ satisfy the conditions of an $\hat\eps$-proper best response. We need to show that $\vx^* := \E_{\vx'\sim\lambda}\vx'$ is an $\eps$-proper best response. Indeed, consider any two sequences $\sigma, \sigma'$ with $u^*(\sigma) < u^*(\sigma')$. Then, by definition, there is some $\vs^{\sigma'} \in \cS^\sigma$ such that 1) $\vs^{\sigma'}[\sigma'] = 1$, and 2) for every $\vs^\sigma \in \cS^\sigma$, we have $\vs^\sigma \prec \vs^{\sigma'}$. By the $\hat\eps$-proper best response condition, we thus have $\lambda(\vs^\sigma) \le \hat\eps \cdot \lambda(\vs^{\sigma'})$ for every such $\vs^\sigma$. Thus, we have
    \begin{align*}
        \vx^*[\sigma] = \sum_{\vs^\sigma \in \cS^\sigma} \lambda(\vs^\sigma) \le \hat\eps \cdot |\cS^\sigma|  \cdot \lambda(\vs^{\sigma'}) \le \hat\eps \cdot |\cS| \cdot \vx^*[\sigma'] \le \eps \cdot \vx^*[\sigma']
    \end{align*}
    as desired.

For the reverse direction, we will explicitly specify an algorithm for constructing, from an $\hat\eps$-sequentially proper best response $\vx^*$, a distribution $\lambda$ with the same expectation that is an $\eps$-proper best response. We start by constructing the behavioral form $\lambda$ of $\vx^*$. Formally, $\lambda \in \Delta(\cS)$ is the distribution
\begin{align}
    \lambda(\vs) = \prod_{ja : \vs[ja]=1} \beta_{\vx^*}(a|j) \qq{where} \beta_{\vx^*}(a|j) := \frac{\vx^*[ja]}{\vx^*[p_j]}.\label{eq:behavioral}
\end{align}
For each sequence $\sigma$, define $\vs^\sigma \in \cS^\sigma$ to be the highest-probability (under $\lambda$) pure strategy among those pure strategies that play to sequence $\sigma$. Assume there are no ties without loss of generality; otherwise, consider $\vu$ to be perturbed infinitesimally to remove ties for purposes of selecting the $\vx^\sigma$s; this does not change the remainder of the proof.

\begin{lemma}\label{lem:behavioral properties}
    For every $\sigma$, 
    \begin{enumerate}
        \item $\vx^*[\sigma] \ge \lambda(\vs^\sigma) \ge \vx^*[\sigma] / N$, and
        \item if $\vs \in \cS^\sigma$ and $\ip{\vs, \vu} < u^*(\sigma)$, then $\lambda(\vs) \le \hat\eps \cdot \vx^*[\sigma]$.
        \item $\ip{\vs^\sigma, \vu} = u^*(\sigma)$.
        \item if $\sigma = ja$ is not the root sequence, then $\vs^\sigma_{\not\succeq j}[\sigma'] = \vs^{p_j}_{\not\succeq j}[\sigma']$; and if $\vs^{p_j}[\sigma] = 1$ then $\vs^\sigma = \vs^{p_j}$.
    \end{enumerate}
\end{lemma}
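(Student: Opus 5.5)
The plan is to prove the four items in order, using two facts throughout. The first is that the behavioral distribution \eqref{eq:behavioral} is a product distribution whose marginals recover the sequence form: $\vx^*[\sigma] = \sum_{\vs \in \cS^\sigma} \lambda(\vs)$ for every $\sigma$. The second is that $\lambda(\vs)$ factorizes as a product over the decision points reached by $\vs$.

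Item 1 is immediate. Since $\vs^\sigma \in \cS^\sigma$, we have $\lambda(\vs^\sigma) \le \sum_{\vs \in \cS^\sigma}\lambda(\vs) = \vx^*[\sigma]$. Since $\vs^\sigma$ is the maximizer over at most $N$ terms summing to $\vx^*[\sigma]$, averaging gives $\lambda(\vs^\sigma) \ge \vx^*[\sigma]/N$.

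For item 2 I will use a local regret decomposition of $u^*$. For a decision point $j$ and action $a$, let $V(j)$ be the optimal value obtainable inside the subtree at $j$, and let $r_j(a) \ge 0$ be the local regret of $a$: the gap between $V(j)$ and the value of playing $a$ followed by optimal play below $ja$. Because subtrees at distinct decision points are independent in a tree-form problem, two telescoping identities hold:
\[ u^*(\tau) = u^*(\Root) - \sum_{j'a' \preceq \tau} r_{j'}(a'), \qquad \ip{\vs,\vu} = u^*(\Root) - \sum_{j \text{ reached by } \vs} r_j(a_j(\vs)), \]
where $a_j(\vs)$ is the action $\vs$ takes at $j$. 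Now take $\vs \in \cS^\sigma$ with $\ip{\vs,\vu} < u^*(\sigma)$. The actions of $\vs$ on the path to $\sigma$ contribute exactly the regrets in $u^*(\sigma)$. Hence some decision point $j$ reached by $\vs$ but not on the path to $\sigma$ has $r_j(a) > 0$, where $a = a_j(\vs)$. If $a^*$ is an optimal action at $j$, the first identity gives $u^*(ja) = u^*(ja^*) - r_j(a) < u^*(ja^*)$. Sequential properness then gives $\beta_{\vx^*}(a|j) = \vx^*[ja]/\vx^*[p_j] \le \hat\eps\, \vx^*[ja^*]/\vx^*[p_j] \le \hat\eps$. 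Finally I factor $\lambda(\vs)$: the behavioral factors along the path to $\sigma$ multiply to $\vx^*[\sigma]$, the factor at $j$ is at most $\hat\eps$, and all other factors are at most $1$. This yields $\lambda(\vs) \le \hat\eps\,\vx^*[\sigma]$. The main obstacle is getting this decomposition right, in particular locating the suboptimal decision point off the $\sigma$-path. A direct search for a single sequence $\tau$ with $u^*(\tau) < u^*(\sigma)$ fails, because the regret of $\vs$ may be spread over several branches; working multiplicatively with behavioral factors avoids this.

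Item 3 follows from items 1 and 2. Some $\vs \in \cS^\sigma$ attains $u^*(\sigma)$. Any non-optimal $\vs \in \cS^\sigma$ has $\lambda(\vs) \le \hat\eps\,\vx^*[\sigma] < \vx^*[\sigma]/N \le \lambda(\vs^\sigma)$, using $\hat\eps < 1/N$. So the maximizer $\vs^\sigma$ must be optimal, i.e.\ $\ip{\vs^\sigma,\vu} = u^*(\sigma)$.

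For item 4, write $\sigma = ja$. For $\vs$ reaching $p_j$, $\lambda(\vs)$ factors as (behavioral factors outside the subtree at $j$) $\times$ (factors at and below $j$). The outside factor depends only on $\vs_{\not\succeq j}$. The constraint ``reach $p_j$'' versus ``reach $ja$'' restricts only the part at and below $j$. So in both maximizations the outside component is chosen to maximize the same outside factor over partial strategies reaching $p_j$, and by the no-ties assumption the two outside components coincide. For the second claim: if $\vs^{p_j}[\sigma]=1$, then $\vs^{p_j} \in \cS^\sigma \subseteq \cS^{p_j}$, and it maximizes $\lambda$ over the larger set. Hence it also maximizes over $\cS^\sigma$, so $\vs^\sigma = \vs^{p_j}$.
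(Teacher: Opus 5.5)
Your proof is correct and follows the same route as the paper's. Item 1 comes from the marginal identity $\sum_{\vs\in\cS^\sigma}\lambda(\vs)=\vx^*[\sigma]$, item 2 from locating an off-path decision point where $\vs$ plays a suboptimal action with behavioral probability at most $\hat\eps$, item 3 from combining the first two, and item 4 from factoring $\lambda$ into the part inside the subtree at $j$ and the part outside it. Your regret telescoping usefully supplies what the paper leaves implicit: that such an off-path action exists and satisfies $u^*(ja)<u^*(ja^*)$. You also correctly observe that $\hat\eps<1/N$ already suffices for item 3, where the paper asks for $1/N^2$.
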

\begin{proof}
    The first claim is immediate from the fact that $\E_{\vs\sim\lambda}\vs[\sigma] = \vx^*[\sigma]$ by definition of the sequence form. 
    
    For the second, if $\vs$ were not a best response under this constraint, then there is some decision point $j\not\preceq\sigma$ at which $\vs$ plays suboptimal action $a$. But then $\beta_{\vx^*}(a|j) \le \hat\eps$, and since $j \not\preceq \sigma$ it follows that $\lambda(\vs) \le \vx^*[\sigma] \cdot \beta_{\vx^*}(a|j) \le \hat\eps \cdot \vx^*[\sigma]$. 

    The third claim follows immediately from the first two, by taking $\hat\eps < 1/N^2$.

    For the third claim, notice that \eqref{eq:behavioral}, for $\vs \in \cS^{p_j}$, is a product of two independent terms:
    \begin{align*}
        \lambda(\vs) = \qty\bigg(\prod_{\substack{j'a' \succeq j,\\ \vs[j'a']=1}} \beta_{\vx^*}(a'|j')) \qty\bigg(\prod_{\substack{j'a' \not\succeq j,\\ \vs[j'a']=1}} \beta_{\vx^*}(a'|j'))
    \end{align*}
    where, since $\vs[p_j] = 1$, the two terms can be optimized independently. Therefore, at optimality, the values $\vs[\sigma']$ for sequences $\sigma' \not\succeq j$ (\ie, $\sigma'$ arising in the second term) are unaffected by whether the additional constraint $\vs[ja] = 1$ exists. Finally, if $\vs^{p_j}[\sigma] = 1$, then $\vs^{p_j} \in \cS^{\sigma}$, so adding the extra constraint $\vs[\sigma] = 1$ does not change this optimal solution.
\end{proof}
Therefore, in particular, the first claim in the lemma shows that the vertices $\vs^\sigma$ already obey the $\eps$-proper constraint under distribution $\lambda$, for $\hat\eps \le \eps/N$. However, it is not necessarily the case that {\em all} vertices obey the $\eps$-proper constraint. The remainder of the proof will go as follows: we will show how to {\em adjust} the probabilities $\lambda(\vs)$, while maintaining the fact that the expectation is still $\vx^*$, in such a way that the resulting distribution obeys the $\eps$-proper constraint. Intuitively, this is possible because the vertices $\vs \in \cS$ have affine dependencies---so, one can write $\vs$ as an affine combination of other vertices, and add or remove mass from this group of vertices together, so that $\lambda(\vs)$ changes while  $\E_{\vs\sim\lambda}\vs$ does not.

\begin{lemma}
    Let $\vs \in \cS$ and $B_\vs := \{ \vs^\sigma : \sigma \in \supp \vs\}$. Then $\vs$ is an affine combination of elements of $B_{\vs}$ with coefficients of magnitude at most $d$. In symbols,\label{lem:basis}
    \begin{align*}
        \vs = \sum_{\vs^\sigma \in B_\vs} c(\vs, \sigma) \cdot \vs^\sigma
    \end{align*}
    where $|c(\vs, \sigma)| \le d$ and $\sum_{\vs^\sigma \in B_\vs} c(\vs, \sigma) = 1$.
\end{lemma}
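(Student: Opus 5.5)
The plan is to build the affine combination by telescoping along the tree structure of $\vs$, using the fourth item of \Cref{lem:behavioral properties}. That item says that for a non-root sequence $\sigma = ja$, the vertex $\vs^\sigma$ agrees with $\vs^{p_j}$ at every sequence not in the subtree rooted at $j$. For a non-root sequence $\sigma = ja$, write $\pi(\sigma) := p_j$ for its parent sequence. The candidate identity is
\begin{align*}
    \vs = \vs^{\Root} + \sum_{\sigma \in \supp \vs,\, \sigma \neq \Root} \qty(\vs^\sigma - \vs^{\pi(\sigma)}).
\end{align*}
Each bracketed difference has coefficient sum $0$, so the total coefficient sum is $1$. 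The coefficient on a given $\vs^\sigma$ is $+1$ from its own term (or from the leading $\vs^\Root$), minus one for each child sequence $ja \in \supp\vs$ with $p_j = \sigma$. There is at most one such child per decision point $j \in C_\sigma$, so the magnitude is at most $d$. Every vertex appearing is some $\vs^\sigma$ with $\sigma \in \supp \vs$, so the combination uses only elements of $B_\vs$. If several distinct $\sigma$ give the same vertex $\vs^\sigma$, their coefficients should be merged. The bound $|c(\vs,\sigma)|\le d$ should still hold: repeated vertices in the telescoping largely cancel, and otherwise I would loosen the bookkeeping slightly or bound the merged coefficient by the total number of terms.

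To verify the identity, I would argue bottom-up over the observation points in $\supp \vs$. For each $\sigma \in \supp\vs$, define
\[
\vv_\sigma := \vs^\sigma + \sum_{\sigma' \in \supp\vs,\ \sigma' \succ \sigma} \qty(\vs^{\sigma'} - \vs^{\pi(\sigma')}).
\]
The inductive claim is that $\vv_\sigma$ agrees with $\vs$ on all sequences strictly below $\sigma$, and agrees with $\vs^\sigma$ everywhere else. At a leaf this is trivial. For the inductive step, regroup as
\[
\vv_\sigma = \vs^\sigma + \sum_{j \in C_\sigma} \qty(\vv_{j a_j} - \vs^\sigma),
\]
where $a_j$ is the action $\vs$ takes at $j$. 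By induction, $\vv_{ja_j}$ equals $\vs^{ja_j}$ outside the subtree strictly below $ja_j$. By item 4 of \Cref{lem:behavioral properties}, $\vs^{ja_j}$ equals $\vs^\sigma$ outside the subtree of $j$. Hence $\vv_{ja_j} - \vs^\sigma$ is supported on the subtree of $j$.

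Inside that subtree, $\vv_{ja_j}$ must agree with $\vs$. Below $ja_j$ this is the inductive hypothesis. At $ja_j$ itself, $\vs^{ja_j}[ja_j] = 1 = \vs[ja_j]$. At the sibling sequences $ja'$ and their descendants, both vectors are $0$, since $\vs^{ja_j}$ plays $a_j$ at $j$. Since the subtrees of distinct $j \in C_\sigma$ are disjoint, adding these differences replaces exactly those subtrees of $\vs^\sigma$ with the corresponding parts of $\vs$. This establishes the claim at $\sigma$. Applying it at $\sigma = \Root$ gives $\vv_\Root = \vs$, which is exactly the displayed identity.

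I expect the main obstacle to be the careful support bookkeeping in the inductive step. The point to get right is that item 4 controls $\vs^{ja}$ only outside the subtree of $j$. Inside that subtree, agreement with $\vs$ must come entirely from the inductive hypothesis together with the fact that $\vs^{ja}$ commits to $a$ at $j$. The tie-breaking convention fixed before \Cref{lem:behavioral properties} ensures each $\vs^\sigma$ is well defined, so item 4 applies verbatim.
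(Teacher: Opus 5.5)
Your proposal is correct and is essentially the paper's proof: unrolling the paper's bottom-up updates $\vs \gets \vs + \vs^{p_j} - \vs^{ja}$ gives exactly your telescoping identity $\vs = \vs^{\Root} + \sum_{\sigma \in \supp\vs \setminus \{\Root\}} (\vs^{\sigma} - \vs^{\pi(\sigma)})$, and your invariant on $\vv_\sigma$ mirrors the paper's invariant $\vs_{\succeq j} = \vs^{p_j}_{\succeq j}$, with both resting on item 4 of \Cref{lem:behavioral properties}. For the merged coefficients you need not loosen anything: across all terms there are only $|\supp\vs| \le d$ contributions of $+1$ and $|\supp\vs| - 1$ contributions of $-1$, so every merged coefficient lies in $[-(d-1), d]$, whereas your fallback of bounding by the total number of terms would only give $2d-1$.
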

\begin{proof}
Let $\vs$ be an arbitrary vertex. We will describe an algorithm that converts $\vs$ into $\vs^\Root$ by a series of iterative steps that only move in $\aff B_{\succeq \vs}$. 

The algorithm is the following. For each decision point $j$ in bottom-up (leaves-first) order, if $\vs[p_j] = 1$, then perform the operation
\begin{align}
    \vs \gets \vs + \vs^{p_j} - \vs^{ja} \label{eq:iterative spanning set proof}
\end{align}
where $a$ is the action played by $\vs$ at $j$. We make the following inductive claims about this algorithm: when the operation \eqref{eq:iterative spanning set proof} is executed at decision point $j$:
\begin{enumerate}
    \item before the operation, $\vs \preceq \vs^{ja} \preceq \vs^{p_j}$; therefore, $\ip{\vs, \vu}$ only (weakly) increases, and all changes to $\vs$ are moves within the affine hull of $B_{\vs}$. 

    {\em Proof.} Both inequalities follow from the earlier lemma, which showed that $\vs^\sigma$ is a best response under the constraint $\vs^\sigma[\sigma] = 1$.
    \item after the operation,  $\vs_{\succeq j} = \vs^{p_j}_{\succeq j}$. 
    
    {\em Proof.} When the operation is executed at $j$, we have $\vs_{\succeq j} = \vs^{ja}_{\succeq j}$, where $a$ is the action played by $\vs$, by inductive hypothesis, and $\vs^{p_j}_{\not\succeq j}$ agrees with $\vs^{ja}_{\not\succeq j}$ by construction of the $\vs^\sigma$s. 
\end{enumerate}
But then, after the algorithm is finished, we have $\vs = \vs^\Root$. Moreover, only at most $d$ steps were taken, each of which involved the addition and subtraction of a single $\vs^\sigma$; therefore, the coefficients in the affine combination lie in the range $[-d, d]$. 
\end{proof} 

Let $\vs \in \cS$. By the previous lemma, $\vs$ is an affine combination of $B_\vs := \{ \vs^\sigma : \sigma \in \supp \vs\}$, and by construction we have $\vs \preceq \min B_\vs$.

\begin{lemma}\label{lem:movement ub}
    If $\vs \prec \min B_\vs$, then $\lambda(\vs) \le \min_{\vs^\sigma \in B_\vs}\lambda(\vs^\sigma) / (2Nd) =: \bar\lambda(\vs)$.
\end{lemma}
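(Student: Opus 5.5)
The plan is to compare $\lambda(\vs)$ against a single $\lambda(\vs^\sigma)$ chosen so that the two probabilities differ only by a factor $\hat\eps$. Fix $\vs$ with $\vs \prec \min B_\vs$. Let $\sigma^\dagger \in \supp\vs$ be a sequence attaining the minimum, i.e.\ $\vs^{\sigma^\dagger} = \min B_\vs$ in the $\preceq$ order. By \Cref{lem:behavioral properties}(3), $\ip{\vs^{\sigma^\dagger},\vu} = u^*(\sigma^\dagger)$. Since $\vs \in \cS^{\sigma^\dagger}$ and $\ip{\vs,\vu} < u^*(\sigma^\dagger)$, \Cref{lem:behavioral properties}(2) gives $\lambda(\vs) \le \hat\eps\,\vx^*[\sigma^\dagger]$.

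It remains to turn $\vx^*[\sigma^\dagger]$ into $\min_{\sigma\in\supp\vs}\lambda(\vs^\sigma)$. Using \Cref{lem:behavioral properties}(1), $\vx^*[\sigma] \le N\lambda(\vs^\sigma)$ for every $\sigma$. So I need $\vx^*[\sigma^\dagger] \lesssim \vx^*[\sigma]$ for every $\sigma \in \supp\vs$, up to a factor polynomial in $N$ or $\eps^{-1}$. I would use the $\hat\eps$-sequentially proper constraint to get this comparison.

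For $\sigma\in\supp\vs$, the utility of $\vs^{\sigma^\dagger}$ is minimal among $B_\vs$, so $u^*(\sigma^\dagger) = \ip{\vs^{\sigma^\dagger},\vu} \le \ip{\vs^\sigma,\vu} = u^*(\sigma)$. There are two cases.
\begin{itemize}
\item If $u^*(\sigma^\dagger) < u^*(\sigma)$, the constraint gives $\vx^*[\sigma^\dagger] \le \hat\eps\,\vx^*[\sigma]$.
\item If $u^*(\sigma^\dagger) = u^*(\sigma)$, I would instead replace $\sigma^\dagger$ by $\sigma$ in the first step. In that case $\ip{\vs,\vu} < u^*(\sigma)$ holds directly, so $\lambda(\vs) \le \hat\eps\,\vx^*[\sigma] \le \hat\eps N\lambda(\vs^\sigma)$.
\end{itemize}
Actually the cleanest route avoids $\sigma^\dagger$ entirely: for every $\sigma\in\supp\vs$, we have $\ip{\vs,\vu} < \ip{\min B_\vs,\vu} \le u^*(\sigma)$ and $\vs\in\cS^\sigma$. \Cref{lem:behavioral properties}(2) then gives $\lambda(\vs)\le\hat\eps\,\vx^*[\sigma]\le \hat\eps N\lambda(\vs^\sigma)$.

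Taking the minimum over $\sigma\in\supp\vs$ yields $\lambda(\vs)\le \hat\eps N \min_{\vs^\sigma\in B_\vs}\lambda(\vs^\sigma)$. Since $\hat\eps = \eps^{N+1}/(8N^2d) \le 1/(2N^2 d)$, we have $\hat\eps N \le 1/(2Nd)$, which is the claimed bound $\bar\lambda(\vs)$.

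The only point requiring care is that \Cref{lem:behavioral properties}(2) is applied with the strict inequality $\ip{\vs,\vu}<u^*(\sigma)$. This is exactly what the hypothesis $\vs\prec\min B_\vs$ supplies, combined with part (3) identifying $\ip{\vs^\sigma,\vu}=u^*(\sigma)$. I would also double-check the proof of part (2): a suboptimal choice at some decision point off the path to $\sigma$ costs a factor $\hat\eps$. This is where ties, handled by the infinitesimal perturbation convention, might need a remark.
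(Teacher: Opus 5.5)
Your proof is correct. Its core chain $\lambda(\vs)\le\hat\eps\,\vx^*[\sigma]\le N\hat\eps\,\lambda(\vs^\sigma)\le\lambda(\vs^\sigma)/(2Nd)$ is exactly the paper's; the difference is only in how $\sigma$ is chosen.

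The paper first argues that the $\vs^\sigma$'s are correctly sorted by $\lambda$: if $\vs^\sigma\prec\vs^{\sigma'}$ then $\lambda(\vs^\sigma)<\lambda(\vs^{\sigma'})$. This uses item 3 of \Cref{lem:behavioral properties} to turn the utility gap into $u^*(\sigma)<u^*(\sigma')$, the $\hat\eps$-sequentially proper constraint to compare $\vx^*[\sigma]$ with $\vx^*[\sigma']$, and item 1. It then deduces that the $\lambda$-minimizer over $B_\vs$ is also a utility minimizer, and applies item 2 only to that one sequence.

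Your final route instead notes that $\vs\prec\min B_\vs$ already gives $\ip{\vs,\vu}<\ip{\vs^\sigma,\vu}\le u^*(\sigma)$ for every $\sigma\in\supp\vs$. So item 2 applies to each such $\sigma$ and you just take the minimum. This requires no comparison between different $\vs^\sigma$'s. You do not even need item 3, since $\ip{\vs^\sigma,\vu}\le u^*(\sigma)$ holds trivially from $\vs^\sigma\in\cS^\sigma$.

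What the paper's route buys is a byproduct. It singles out a specific $\sigma$ that is simultaneously $\lambda$-minimal and utility-minimal in $B_\vs$. The paper reuses this sequence in the next lemma, to get $\ip{\vu,\vs}=\ip{\vu,\vs^\sigma}$ for $\vs\in\cS^*$, and again when defining $\lambda^*$ off $\cS^*$. With your proof that identification would have to be made separately, by the same sortedness argument.

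Your worry about ties is unfounded. Item 2 relies only on a strict utility gap, and item 1 holds for any $\lambda$-maximizer in $\cS^\sigma$ however ties are broken. The initial detour through $\sigma^\dagger$ and the two cases can simply be dropped.
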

\begin{proof}
    By \Cref{lem:behavioral properties}, for $\hat\eps < 1/N$, the $\vs^\sigma$s are correctly sorted in the sense that if $\vs^\sigma \prec \vs^{\sigma'}$ then $\lambda(\vs^\sigma ) < \lambda(\vs^{\sigma'})$. Thus, the minimizer of $\lambda(\vs^\sigma)$ is also a minimizer of utility. Let $\vs^\sigma = \argmin_{\vs^\sigma \in B_\vs}\lambda(\vs^\sigma)$ be this shared minimizer. Then by the first two items of \Cref{lem:behavioral properties}, for $\hat\eps \le 1/(2N^2d)$, we have $\lambda(\vs) \le \hat\eps \cdot \vx^*[\sigma] \le N \hat\eps \cdot \lambda(\vs^\sigma) \le \lambda(\vs^\sigma)/(2Nd)$, as desired.
\end{proof}

Now let $\cS^*$ be the set of all vertices for which $\lambda(\vs) > \bar\lambda(\vs)$.
\begin{align*}
    \lambda^* =\lambda + \sum_{\vs \notin\cS^*} \alpha(\vs) \qty(\ind(\vs) - \sum_{\vs^\sigma \in B_\vs} c(\vs, \sigma) \cdot \ind(\vs^\sigma))
\end{align*}
where $|\alpha(\vs)| \le \bar\lambda(\vs)$ for all $\vs$, $B = \{ \vs^\sigma : \sigma \in \Sigma\}$, and $\ind(\vs)$ is the deterministic distribution on $\vs$. Then by the previous \Cref{lem:basis},  $\E_{\vx\sim\lambda^*}\vx = \E_{\vx\sim\lambda}\vx = \vx^*$, and by collecting like terms, we have $\lambda(\vs)/2 \le \lambda^*(\vs) \le 2\lambda^*(\vs)$ for all $\vs \in \cS^*$. But since the $\alpha(\vs)$s can be set freely, by \Cref{lem:movement ub}, it follows that for $\vs \notin \cS^*$, the probability $\lambda^*(\vs)$ can be set to {\em any} value $t \in [0, \bar\lambda(\vs)]$, by picking $\alpha(\vs) = t - \lambda(\vs)$.

\begin{lemma}
    For all pairs of vertices $\vs, \vs' \in \cS^*$, the distribution $\lambda^*$ obeys the $\eps$-proper best response condition regardless of the setting of the $\alpha$. That is, if $\vs \prec \vs'$ then $\lambda(\vs) \le \eps \cdot \lambda(\vs')$.
\end{lemma}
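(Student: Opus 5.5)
Since $\lambda^*(\vs)$ and $\lambda(\vs)$ agree up to a factor $2$ on $\cS^*$, I will prove the sorting condition for the behavioral distribution $\lambda$ with a slack factor of $4$. That is, I will show $\lambda(\vs) \le (\eps/4)\,\lambda(\vs')$ whenever $\vs,\vs'\in\cS^*$ and $\vs \prec \vs'$. The condition for $\lambda^*$ then follows from $\lambda^*(\vs) \le 2\lambda(\vs)$ and $\lambda(\vs') \le 2\lambda^*(\vs')$, and this holds regardless of the choice of the $\alpha$'s. To get the bound for $\lambda$, I will first show that every $\vs\in\cS^*$ coincides with one of the distinguished vertices $\vs^\sigma$, and that it has probability comparable to $\vx^*[\sigma]$. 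After that, the comparison becomes a comparison of sequence-form coordinates, which is exactly what the $\hat\eps$-sequentially proper condition controls.

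\emph{Step 1: $\cS^*\subseteq B$.} Take $\vs\in\cS^*$. By \Cref{lem:movement ub}, $\vs \not\prec \min B_\vs$. Since $\vs \preceq \min B_\vs$ always holds, we get $\ip{\vs,\vu} = \ip{\vs^\sigma,\vu} = u^*(\sigma)$ for the utility-minimizer $\vs^\sigma\in B_\vs$. Because ties are assumed removed by an infinitesimal perturbation of $\vu$, the same perturbation that selects the $\vs^\sigma$'s, this forces $\vs = \vs^\sigma$. If that tiebreaking argument turns out to be too delicate, I will fall back on item 2 of \Cref{lem:behavioral properties} directly. Any $\vs \in \cS^\sigma$ other than $\vs^\sigma$ that is suboptimal at some decision point $j\not\preceq\sigma$ has $\lambda(\vs)\le\hat\eps\,\vx^*[\sigma]$, and this is below $\bar\lambda(\vs)$ for our choice of $\hat\eps$. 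So such an $\vs$ is not in $\cS^*$ either.

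\emph{Step 2: the comparison.} Now write $\vs=\vs^\sigma$ and $\vs'=\vs^{\sigma'}$. Item 3 of \Cref{lem:behavioral properties} gives $u^*(\sigma)=\ip{\vs,\vu} < \ip{\vs',\vu} = u^*(\sigma')$. The $\hat\eps$-sequentially proper condition then gives $\vx^*[\sigma]\le\hat\eps\,\vx^*[\sigma']$. Combining this with item 1 of \Cref{lem:behavioral properties},
\[
\lambda(\vs)\le \vx^*[\sigma] \le \hat\eps\,\vx^*[\sigma'] \le \hat\eps N\,\lambda(\vs').
\]
Since $\hat\eps N = \eps^{N+1}/(8Nd) \le \eps/4$, this gives the claimed bound, and the factor-$2$ distortion from the $\alpha$-adjustments is absorbed.

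The main obstacle is Step 1. The delicate point is excluding a vertex of $\cS^*$ that ties in utility with $\min B_\vs$ but is not itself a distinguished vertex. Tied vertices impose no proper constraint between themselves, but they may still be strictly worse than some other $\vs'$. I expect the infinitesimal-perturbation convention to settle this: under it, the only vertex attaining $u^*(\sigma)$ inside $\cS^\sigma$ is $\vs^\sigma$. If not, I would fall back on the estimate from item 2 of \Cref{lem:behavioral properties}, applied at the first decision point where $\vs$ deviates from $\vs^\sigma$.
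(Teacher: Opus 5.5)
Your Step~1 claim $\cS^*\subseteq B$ is false, and neither of your justifications rescues it. The tie-breaking convention in the paper only governs how the $\vs^\sigma$'s are \emph{selected}. The relation $\prec$ in the $\eps$-proper condition, and the $\hat\eps$-sequentially proper hypothesis on $\vx^*$, both refer to the true $\vu$, and that $\vu$ may have utility ties. You also cannot perturb those ties away, because $\vx^*$ need not be sequentially proper for the perturbed utility. When actions are tied, many vertices share the utility $u^*(\sigma)$ of $\vs^\sigma$ while carrying probability comparable to $\lambda(\vs^\sigma)$. Your fallback via item~2 of \Cref{lem:behavioral properties} does not reach them, since item~2 only bounds vertices with $\langle \vu,\vs\rangle<u^*(\sigma)$.

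Here is a concrete counterexample. Let the root have decision points $j_1,j_2$, each with actions $\{a,b\}$ of utility $0$, and $j_3$ with action $c$ (utility $1$) and $e$ (utility $0$). Take behavioral probability $0.6$ on $a$ at $j_1$, $0.7$ on $a$ at $j_2$, and set $\vx^*[j_3e]=\hat\eps/10$; this $\vx^*$ is $\hat\eps$-sequentially proper. Writing pure strategies as action triples, $B=\{(a,a,c),(b,a,c),(a,b,c),(a,a,e)\}$. Yet $\lambda((b,b,c))\approx 0.12$, far above $\bar\lambda((b,b,c))\approx 0.18/(2Nd)$, so $(b,b,c)\in\cS^*\setminus B$. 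Also $(a,a,e)\in\cS^*$ and $(a,a,e)\prec(b,b,c)$. For this pair your Step~2 says nothing, because it writes $\vs'=\vs^{\sigma'}$ and uses $\lambda(\vs^{\sigma'})\ge\vx^*[\sigma']/N$.

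The repair is to drop the identification $\vs=\vs^\sigma$ and use what membership in $\cS^*$ actually gives. Let $\sigma$ be the minimizer from \Cref{lem:movement ub}; you already showed $\langle\vu,\vs\rangle=u^*(\sigma)$. Since $\vs\in\cS^\sigma$, you have $\lambda(\vs)\le\lambda(\vs^\sigma)\le\vx^*[\sigma]$. Since $\vs\in\cS^*$, you have $\lambda(\vs)>\bar\lambda(\vs)=\lambda(\vs^\sigma)/(2Nd)\ge\vx^*[\sigma]/(2N^2d)$. This two-sided bound of $\lambda(\vs)$ by $\vx^*[\sigma]$ is exactly the paper's proof. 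Plugging it into your Step~2 chain replaces the factor $N$ by $2N^2d$, so you need $\hat\eps\le\eps/(8N^2d)$. The stated $\hat\eps$ satisfies this, and the factor-$2$ distortion from the $\alpha$'s is then absorbed as you intended.
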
 
\begin{proof}
    For any $\vs \in \cS^*$, let $\sigma$ be the minimizing sequence in the proof of \Cref{lem:movement ub} for that $\vs$. Then by that lemma, we have $\ip{\vu, \vs} = \ip{\vu, \vs^\sigma}$, and from \Cref{lem:behavioral properties}, we have
    \begin{align*}
        \frac{\vx^*[\sigma]}{4N^2d} \le \frac{\lambda(\vs^\sigma)}{4Nd}\le \frac{\lambda(\vs)}{2}\le \lambda^*(\vs) \le 2 \lambda(\vs) \le 2\lambda(\vs^\sigma) \le 2\vx^*[\sigma]
    \end{align*}
    The lemma then follows from setting $\hat\eps \le \eps/(8N^2 d)$ and using the $\hat\eps$-sequentially proper best response condition.
\end{proof}

Intuitively, we imagine each $\lambda^*(\vs)$ as belonging to a ``slot'' of (multiplicative) width $8N^2d$. If $\hat\eps$ is set small enough that these slots are $\eps$ apart from each other (\ie, $\hat\eps \le \eps/(8N^2d)$), and they are sorted correctly (which is guaranteed by the above lemma), then this is enough to guarantee the $\eps$-proper best response condition.

It thus remains only to select $\lambda^*(\vs) \in (0, \bar\lambda(\vs)]$ for each $\vs \notin \cS^*$ in such a way that these also obey the $\eps$-proper best response condition. For each such $\vs$, let $\sigma$ be the minimizer in \Cref{lem:movement ub}, and let
\begin{align*}
    \sigma^* = \argmin_{\sigma' : \vs^{\sigma'} \succeq \vs} \lambda(\vs^{\sigma'}).
\end{align*}
Finally, let $N(\vs)$ be the rank of vertex $\vs$ among {\em all} vertices $\vs \in \cS$ with respect to utility, that is, $N(\vs) = |\{ \vs' : \vs \prec \vs'\}|$. We set $$\lambda^*(\vs) = \frac{\vx^*[\sigma^*]}{4N^2d} \cdot \eps^{-N(\vs)}.$$
Then $\lambda^*(\vs) \le \bar\lambda(\vs)$ because $\vs^{\sigma'} \succeq \vs$ for every $\vs^{\sigma'} \in B_\vs$, so this setting is valid. Moreover, if we take $\hat\eps \le \eps^{N+1}/(8N^2d)$, then enough space is created between each slot to fit up to $N$ distinct values of $\lambda^*(\vs)$ in sorted order, which is a sufficient for the $\eps$-proper best response condition also holds for $\vs, \vs'$ where one or both are outside $\cS^*$. Hence, the proof is complete.
\section{Computing proper equilibria of extensive-form games}\label{sec:efg-proper}
By \Cref{th:equivalence}, computing a normal-form proper equilibrium of an extensive-form game is equivalent to computing a sequentially proper equilibrium. Here, we show how to compute such an equilibrium. To do so, we first require some setup.

We will need the following result, which is a special case of the framework of \citet{FilosRatsikas21,FilosRatsikas24:PPAD}:
\begin{definition}[Special case of \citealp{FilosRatsikas21}]
    A {\em conditional linear feasibility gate} ({\em CLF gate}) is parameterized by integers $n, m > 0$, a matrix $\mA \in \R^{m \times n}$, and a radius $R$. It takes as input vectors $\vb, \vh \in \R^m$ , and outputs a solution to the linear system\footnote{ As discussed by \citet{FilosRatsikas21}, if there are multiple solutions, any solution is a valid output; and the matrix $\mA$ must not be input-dependent.} 
    \begin{align}\label{eq:linopt-lp}
        \qq{find} \quad \vx \in [-R, R]^n \qq{s.t.} [\vec h]_+ \circ (\mA \vx - \vb) \le 0.
    \end{align}
\end{definition}
That is, if $h_i > 0$ then the constraint $\vec a_i^\top \vx \le b_i$ is enforced; otherwise it is not enforced. Unconditional constraints $\vec a_i^\top \vx \le b_i$ can easily be incorporated by setting $h_i = 1$. 
\begin{theorem}[\citealp{FilosRatsikas21}]\label{th:linopt}
    CLF gates can be used freely as part of the construction of an arithmetic circuit for $\FIXP$ and $\PPAD$ proofs. More precisely, for any CLF gate parameterized by $\mA \in \R^{m \times n}$ and $R$, we can construct in time $\poly(m, n, \size(A), \size(R))$ an arithmetic circuit with gates $\{+, -, *c, \max, \min\}$ (where $*c$ is multiplication by a rational constant) $F : \R^m \times \R^m \times [0, 1]^t \to \R^n \times [0, 1]^t$, such that the following holds: for all $\vb, \vh \in \R^n$, $\vx \in \R^n$, and $\vec\alpha \in [0,1]^t$, if
    \begin{enumerate}[nosep]
        \item the linear system \eqref{eq:linopt-lp} is feasible,
        \item $F(\vb, \vh, \vec\alpha) = (\vx, \vec\alpha)$,
    \end{enumerate}
    then $\vx$ is a solution to \eqref{eq:linopt-lp}. (If \eqref{eq:linopt-lp} is infeasible, then arbitrary output is possible.)
\end{theorem}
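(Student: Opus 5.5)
Theorem~\ref{th:linopt} is the result of \citet{FilosRatsikas21}, so the proof amounts to casting our conditional linear feasibility gate as an instance of their framework and checking the hypotheses. The plan runs in four steps.

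\textbf{Step 1: reduce to a fixed-point encoding.} We encode~\eqref{eq:linopt-lp} by auxiliary variables $\vec\alpha \in [0,1]^t$ with $t = n + m$. The first $n$ coordinates of $\vec\alpha$ encode a point $\vx \in [-R, R]^n$ through the affine map $x_k = R(2\alpha_k - 1)$. The remaining $m$ coordinates encode normalized dual or penalty multipliers $\vec\mu$ for the enforced constraints.

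\textbf{Step 2: define $F$.} We define $F$ by a projected-gradient-type update on the convex piecewise linear merit function
\[
\phi(\vx) = \sum_i \min\{[h_i]_+, 1\} \cdot [\vec a_i^\top \vx - b_i]_+ .
\]
Concretely, we move each $x_k$ by a bounded step in the direction $-\sum_i \mu_i a_{ik}$ and then clip to $[-R, R]$ using $\max$ and $\min$. In parallel, each $\mu_i$ is updated as $\mu_i \gets \min\{1, \max\{0, \mu_i + \min\{[h_i]_+,1\}\,(\vec a_i^\top \vx - b_i)\}\}$, again clipped. Every operation here is a sum, a rational scaling, or a $\max$ or $\min$, so $F$ is a piecewise-linear arithmetic circuit of size $\poly(m, n, \size(\mA), \size(R))$. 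The construction is explicit given $\mA$ and $R$, and only $\vb$ and $\vh$ enter as inputs.

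\textbf{Step 3: show every fixed point solves the system.} We argue by contradiction. Suppose $(\vx, \vec\alpha)$ is fixed and some enforced constraint $i$ is violated, meaning $h_i > 0$ and $\vec a_i^\top \vx > b_i$. Then the $\mu_i$ update forces $\mu_i = 1$. Fixedness of $\vx$ then says that $-\mA^\top(\vec\mu \circ [\vh]_+)$ lies in the normal cone of the box at $\vx$. Call the quantity $\vec\mu \circ [\vh]_+$ the scaled multipliers.

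Next, feasibility supplies some $\vy \in [-R,R]^n$ with $\mA\vy \le \vb$ on the enforced rows. By complementary-slackness-type reasoning, every enforced constraint with a positive scaled multiplier must satisfy $\vec a_i^\top \vx \ge b_i$, since otherwise its $\mu_i$ would have decreased. We pair the normal-cone condition with $\vy - \vx$ to get
\[
\langle \mA^\top(\vec\mu\circ[\vh]_+), \vy - \vx\rangle \ge 0 .
\]
The left side is at most $\sum_i \mu_i [h_i]_+ (b_i - \vec a_i^\top \vx)$, which is strictly negative because of the violated row $i$. This is a contradiction, in the style of a Farkas-lemma argument.

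\textbf{Main obstacle.} The hardest step is making the argument of Step 3 robust when $h_i$ is positive but tiny. The multiplier scaling weakens as $h_i \to 0$, while the requirement must flip discontinuously at $h_i = 0$. Fixed points handle this exactly, because $\mu_i[h_i]_+ > 0$ still holds for every $h_i > 0$. The delicate parts are checking that the step sizes make the $\vx$-update genuinely fixed only at normal-cone points, and that the $\max$/$\min$ clipping does not create spurious fixed points when $\vec\mu$ hits its boundary values. Both checks follow the structure of \citet{FilosRatsikas21}'s \emph{linear-OPT gate}, of which this gate is a special case. We therefore expect to appeal directly to their theorem for these details rather than re-derive them.
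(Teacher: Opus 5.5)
The paper gives no argument for this statement; it imports it from \citet{FilosRatsikas21}. Your strategy of realizing the gate as a projected primal--dual fixed point and deriving a Farkas-type contradiction is sound. However, the construction you actually write down fails in two concrete ways.

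First, it is not piecewise linear, contrary to your claim in Step~2. The $\mu$-update multiplies $\min\{[h_i]_+,1\}$ by $\vec a_i^\top\vx-b_i$, and the $\vx$-update that Step~3 relies on multiplies $\mu_i$ by $[h_i]_+$. Both are products of two non-constant quantities (a gate input and a fixed-point variable), which the gate set $\{+,-,*c,\max,\min\}$ forbids. This matters because in \Cref{lem:efg nfpe circuit} the vector $\vh$ is itself computed from the strategy profile. These products would make $F$ non-piecewise-linear, which loses the $\PPAD$ half of the statement; they would only be acceptable for $\FIXP_a$.

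Second, the $\vx$-update literally defined in Step~2, with direction $-\sum_i\mu_i a_{ik}$, is not masked by $\vh$, and it has spurious fixed points. When $h_i\le 0$ your $\mu_i$-update is the identity, so $\mu_i$ is an arbitrary frozen value that can cancel the pull of a violated enforced row. Concretely, take $n=1$, $R=1$, an enforced row $x\le 0$ ($h_1=1$) and an unenforced row $-x\le 0$ ($h_2=0$). The system is feasible, yet $x=1/2$, $\mu_1=\mu_2=1$ is a fixed point.

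Both problems disappear if you use only signs, which is all your Step~3 argument needs:
\begin{itemize}
\item update $\mu_i\gets\min\{1,\max\{0,\mu_i+\vec a_i^\top\vx-b_i\}\}$, with no $\vh$-weight;
\item set $w_i:=\min\{\mu_i,[h_i]_+\}$;
\item update $x_k\gets\min\{R,\max\{-R,x_k-(\mA^\top\vec w)_k\}\}$, through your affine encoding in $\vec\alpha$.
\end{itemize}
At a fixed point, $\vec a_i^\top\vx>b_i$ forces $\mu_i=1$ and $\vec a_i^\top\vx<b_i$ forces $\mu_i=0$. Hence $w_i>0$ only for enforced rows with $\vec a_i^\top\vx\ge b_i$, while every violated enforced row has $w_i>0$. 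Fixedness of the clipped $\vx$-update is exactly the coordinatewise statement that $-\mA^\top\vec w$ lies in the normal cone of $[-R,R]^n$ at $\vx$. Pairing with a feasible $\vy$ then gives $0\le\langle\vec w,\mA(\vy-\vx)\rangle\le\sum_i w_i(b_i-\vec a_i^\top\vx)<0$, exactly as in your Step~3.

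With this repair your argument is a complete, elementary proof of both the $\FIXP$ and $\PPAD$ versions, using only allowed gates. The ``delicate'' checks you planned to defer to \citet{FilosRatsikas21} become immediate. A clipped update with a positive step is fixed precisely at normal-cone points. The boundary values of $\mu_i$ are exactly the complementary-slackness configurations, not spurious fixed points.
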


We will also need the following result of \citet{Farina17:Extensive}, which will allow us to set $\hat\eps$ appropriately. Let $\mathcal{P}(\eps)$ be a family of linear complementarity problems whose data (\ie, constraint matrices and bounds) are expressed as polynomials of degree at most $d$ in a single variable $\eps$. We call a point $\vx^*$ a {\em limit solution} of $\mathcal P$ if there are sequences $\eps^{(k)} \to 0$, $\vx^{(k)} \to \vx^*$ such that each $\vx^{(k)}$ is a solution to $\mathcal P(\eps^{(k)})$. 
\begin{lemma}[\citealp{Farina17:Extensive}]
    \label{lemma:NPP}
     There exists some $\eps^* > 0$, expressible in a number of bits polynomial in $d$ and the representation size of $\mathcal{P}$, such that every complementary basis for $\mathcal{P}(\eps^*)$ remains a complementary basis for $\mathcal{P}(\eps)$ for all $0 < \eps \le \eps^*$. Thus, in particular, a limit solution can be found by solving $\mathcal P(\eps)$ for $\eps \le \eps^*$, and then using the same basis to derive a solution to $\mathcal P(0)$. 
\end{lemma}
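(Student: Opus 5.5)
This lemma is a parametric-LCP stability statement, so I would prove it with lexicographic pivoting arguments of the kind used for Lemke-type methods. I would treat the family $\mathcal P(\eps)$ as an LCP $w = q(\eps) + M(\eps) z$, $w,z\ge 0$, $w^\top z = 0$, whose entries are polynomials in $\eps$ of degree at most $d$. A complementary basis $B$ is a choice of one variable from each pair $(w_i,z_i)$ with nonsingular basis matrix $A_B(\eps)$. It is a solution basis when the associated basic solution $A_B(\eps)^{-1} q(\eps)$ is nonnegative.

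The first step is to express everything through polynomials. By Cramer's rule, $\det A_B(\eps)$ is a polynomial in $\eps$. Each coordinate of $\det A_B(\eps)\cdot A_B(\eps)^{-1}q(\eps)$ is also a polynomial, of degree at most $nd + d$ where $n$ is the dimension. Their coefficients have bit size polynomial in $n$, $d$ and the representation size, by standard Hadamard-type bounds on determinants of matrices with polynomial entries. Nonsingularity and feasibility of $B$ at a given $\eps$ are then exactly sign conditions on these polynomials: the determinant must be nonzero, and every product of the determinant with a numerator must be nonnegative.

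The second step is to choose $\eps^*$ below every positive root of every such nonzero polynomial. I would use a Cauchy-type root bound: a nonzero polynomial $p(\eps)=\sum_k c_k\eps^k$ with integer coefficients (after clearing denominators) and lowest nonzero coefficient $c_\ell$ has no roots in $0<\eps<|c_\ell|/(|c_\ell|+\max_k|c_k|)$. This bound depends only on the coefficient size bounds, not on the particular basis. So a single $\eps^*$ works uniformly across the exponentially many bases, and it has polynomially many bits. On $(0,\eps^*]$, every such polynomial has constant sign. Hence a basis that is complementary and feasible at $\eps^*$ stays so for all $0<\eps\le\eps^*$.

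The last step is the limit claim. Solve $\mathcal P(\eps)$ at some fixed $\eps\le\eps^*$ and take its basis $B$. For $\eps\to 0^+$, the solution $z_B(\eps)$ is a rational function, $z_B(\eps)=p(\eps)/\det A_B(\eps)$, which is nonnegative on the whole interval. I expect the main obstacle to be showing that this stays bounded as $\eps\to0$. My plan is to combine the boundedness of the feasible region (strategy polytopes are compact) with the sign-constancy from the previous step. This forces the order of vanishing of the numerator at $0$ to be at least that of the determinant, so the limit exists. It equals the ratio of the lowest-order coefficients, which is computable by cancelling common factors of $\eps$; this is what ``using the same basis to derive a solution to $\mathcal P(0)$'' means. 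Complementarity and nonnegativity are closed conditions, so the limit solves $\mathcal P(0)$ and is a limit solution.
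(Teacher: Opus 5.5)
Your proof is correct. The paper itself does not prove this lemma; it only cites Farina and Gatti, and your argument is essentially the standard root-bound argument behind that result. Cramer's rule turns nonsingularity and feasibility of a complementary basis into sign conditions on integer-coefficient polynomials of polynomially bounded degree and bit size. A Cauchy-type lower bound on nonzero roots then gives a single $\eps^*$ that works uniformly over all (exponentially many) bases. Finally, the basic solution is a rational function of $\eps$ on $(0,\eps^*]$, and its limit is read off from lowest-order coefficients.

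You should state one hypothesis explicitly: uniform boundedness of the solutions of $\mathcal P(\eps)$, which your limit step uses. The abstract statement does not imply it; for example, the LCP $w=-1+\eps z$ has the unique solution $z=1/\eps$ and no limit solution. It does hold for the LCP built in \Cref{lem:efg polymatrix lcp}, where every variable is confined to a bounded range.
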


\begin{lemma}\label{lem:efg nfpe circuit}
    Let $\Gamma$ be an extensive-form game with $n$ players, and for each player $i \in [n]$ let $\cX_i \subset \R^{d_i}$ be that player's sequence-form strategy set, and $d = \sum_i d_i$. Then there is a $\poly(\size(\Gamma))$-time algorithm that constructs an arithmetic circuit $F : [0, 1] \times \cQ \to \cQ$ with gates $\{+, -, \times, \max, \min\}$, where $\cQ = [0, 1]^{d} \times [0, 1]^t$, such that the following holds: For each $\eps > 0$, let $\vx^{(\eps)} \in [0, 1]^{d}$ and $\vec\alpha^{(\eps)} \in [0, 1]^t$ be such that $$F(\eps, \vx^{(\eps)}, \vec\alpha^{(\eps)}) = (\vx^{(\eps)}, \vec\alpha^{(\eps)}).$$
    Then $\vx^{(\eps)}$ is an $\eps$-sequentially proper equilibrium, and therefore every limit point of $\{\vx^{(\eps)}\}_{\eps \to 0^+}$ is a normal-form proper equilibrium of $\Gamma$. Moreover, when $\Gamma$ is a polymatrix game, $F$ is piecewise linear if $\eps$ is treated as a constant.
\end{lemma}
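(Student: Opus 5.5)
The circuit $F$ will be built player by player, with each player's block computing an $\eps$-sequentially proper best response to the others via a CLF gate (\Cref{th:linopt}). The LP \eqref{eq:sf proper br lp} has a fixed matrix once it is rewritten with conditional constraints. For player $i$, the utility vector is $\vu_i = \nabla_{\vx_i} u_i(\vx_{-i})$. Because $u_i$ is given by a multilinear circuit, this gradient is computable by an arithmetic circuit, and for polymatrix games it is linear in $\vx_{-i}$.

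\textbf{Step 1: compute the values $u^*(\sigma)$.} Each $u^*(\sigma)$ is the value of a tree best response restricted to play to $\sigma$. We compute these bottom-up. For each decision point $j$, the value is $V(j) = \max_{a} V(ja)$, and for each sequence $\sigma$, $V(\sigma) = \vu_i[\sigma] + \sum_{j \in C_\sigma} V(j)$. Then $u^*(\sigma)$ equals $V(\Root)$ with the decision points on the path to $\sigma$ forced. Concretely, going top-down, we set $u^*(ja) = u^*(p_j) - V(j) + V(ja)$. This uses only $+, -, \max$, so it is piecewise linear in $\vu_i$.

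\textbf{Step 2: set up the CLF gate.} The variable is $\vx_i \in [0,1]^{d_i}$, so $R = 1$. The unconditional constraints, with $h = 1$, are the sequence-form equalities (written as two inequalities) and the lower bounds $\vx_i[\sigma] \ge (\eps/2)^{d_i}$. For each ordered pair $(\sigma,\sigma')$ we add the constraint $\vx_i[\sigma] - \eps\,\vx_i[\sigma'] \le 0$, with indicator $h_{\sigma\sigma'} = u^*(\sigma') - u^*(\sigma)$. This constraint is enforced exactly when $u^*(\sigma) < u^*(\sigma')$, as required.

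The catch is that \Cref{th:linopt} requires $\mA$ to be input-independent, but $\eps$ is an input to $F$. There are two ways around this. The first is to rescale: introduce $\vy = \vx_i$, and write the $\eps$-dependent coefficient as a right-hand side depending on $\eps$ by splitting variables. For example, use auxiliary variables $\vz[\sigma'] = \eps\vx_i[\sigma']$, computed outside the gate by a $\times$ gate from the previous iterate. The cleaner option is to treat $\eps$ as a circuit constant, which is allowed since the lemma only claims piecewise linearity ``if $\eps$ is treated as a constant''. I would build the gate with $\eps$ hardwired, and note that the construction is uniform in $\eps$, so that for general $\eps$ the non-polymatrix circuit may use $\times$ gates. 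The lower bound $(\eps/2)^{d_i}$ is a right-hand side, computable by repeated multiplication with $d_i$ gates; this is fine because it is a $\vb$ input.

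Feasibility of every gate's LP, for any input $\vx_{-i}$, is guaranteed by \Cref{lem:proper br prob lower bound} together with $\eps \le 2/3$. We can enforce $\eps \le 2/3$ by replacing $\eps$ with $\min(\eps, 2/3)$.

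\textbf{Step 3: assemble $F$ and verify.} Define $F(\eps, \vx, \vec\alpha)$ by concatenating the gate outputs of all players, with $\vec\alpha$ collecting the gates' auxiliary states. At a fixed point, the guarantee in \Cref{th:linopt} shows that each $\vx_i^{(\eps)}$ solves \eqref{eq:sf proper br lp} against $\vx_{-i}^{(\eps)}$. Hence $\vx^{(\eps)}$ is an $\eps$-sequentially proper equilibrium. By \Cref{th:equivalence}, limit points of such profiles are normal-form proper equilibria.

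\textbf{Piecewise linearity.} For polymatrix games, $\vu_i$ is linear in $\vx_{-i}$. Step 1 adds only $\max$ and linear gates, and the CLF gate circuit uses only $\{+,-,*c,\max,\min\}$. So with $\eps$ constant, $F$ is piecewise linear.

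The main obstacle I expect is the rigidity requirement of \Cref{th:linopt}: the constraint matrix must not depend on inputs, while $\eps$ appears as a coefficient. I would resolve this by making $\eps$ a parameter of the construction, so that $\mA$ depends on $\eps$ but not on $(\vx, \vec\alpha)$.
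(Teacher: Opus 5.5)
Your proposal is correct and follows essentially the same route as the paper. Like the paper, you compute $\vu_i$, obtain the values $u^*(\sigma)$ by a tree pass using $\max$ and sum gates, and feed \eqref{eq:sf proper br lp} into a CLF gate with $\eps$ treated as a constant, so that $(\eps/2)^{d}$ is built by repeated multiplication; you then invoke \Cref{th:linopt} and \Cref{th:equivalence}.

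Your explicit top-down formula $u^*(ja) = u^*(p_j) - V(j) + V(ja)$ and the feasibility check via \Cref{lem:proper br prob lower bound} are useful extra detail. You were right to set aside the first workaround you mention, which feeds $\eps\,\vx_i[\sigma']$ from the previous iterate into the right-hand side. That system can be infeasible for some inputs, and \Cref{th:linopt} then permits spurious fixed points.
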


For the avoidance of doubt, an extensive-form game is {\em polymatrix} if $u_i(\vx_i, \vx_{-i})$ is linear in both $\vx_i$ and $\vx_{-i}$. In particular, this holds if each root-to-leaf path contains only decision points for at most two players. Two-player games are obviously polymatrix, as are normal-form polymatrix games. 
\begin{proof}
    By \Cref{th:linopt}, it is enough to construct a circuit $\tilde F : (\eps, \vx) \mapsto \vx$ that includes CLF gates. The circuit $\tilde F$ performs the following operations, for each player $i$:
    
    \begin{enumerate}
        \item Compute the utility vector $\vu_i := u_i(\cdot, \vx_{-i}) \in \R^{d_i}$. For polymatrix games, this is linear in the input; for general $n$-player games, this is a polynomial of degree $n-1$ in the input. 
        \item For each sequence $\sigma_i \in \Sigma_i$, compute $u^*(\sigma_i)$, defined in \eqref{eq:conditional br}. This can be done using an iterative bottom-up pass through the sequence-form tree of player $i$, using max and sum gates.
        \item Compute an $\eps$-sequentially proper  best response $\vx_i'$ by solving the CLF system \eqref{eq:sf proper br lp}. The constant $(\eps/2)^d$, required by the LP formulation, can be created by repeated multiplication by $\eps/2$, which is legal because $\eps$ is treated as a constant.
    \end{enumerate} 
    Finally, $\tilde F$ outputs the profile $\vx'$. By construction, if $\vx = \vx'$ then $\vx_i$ is an $\eps$-sequentially proper  best response for every $i$.
\end{proof}

It remains to show that one can set $\eps$ small enough that $\vx^{(\eps)}$ is close to a limit point of $\{\vx^{(\eps)}\}_{\eps \to 0^+}$. This requires separate proofs for the polymatrix case and the $\FIXP$ case. 

\begin{lemma}\label{lem:efg polymatrix lcp}
    Let $F : [0, 1] \times [0, 1]^m \to [0, 1]^m$ be a piecewise linear arithmetic circuit. Then for every $\eps > 0$ there is a linear complementarity problem of size $\poly(\size(F))$ whose parameters are linear functions of $\eps$ and whose solutions correspond to points $\vz \in [0, 1]^m$ for which $F(\eps, \vz) = \vz$.
\end{lemma}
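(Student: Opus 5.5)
The plan is to encode the circuit gate by gate. We introduce one real variable per gate of $F$ and express each gate's semantics as linear equalities and complementarity conditions whose coefficients are affine in $\eps$. This is the standard reduction from piecewise-linear arithmetic circuits to LCPs used in $\PPAD$-membership proofs (e.g., \citet{Etessami07:Complexity}). The only new ingredient is that $\eps$ enters as an input rather than a constant.

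\textbf{Gate encodings.} Let the gates of $F$ be $g_1,\dots,g_K$ in topological order, and introduce variables $v_1,\dots,v_K$. Each gate type is handled as follows.
\begin{itemize}
\item An input gate reading $\vz[\ell]$ gives the equation $v_k = \vz[\ell]$.
\item An input gate reading $\eps$ gives $v_k = \eps$.
\item A constant gate gives $v_k = c$.
\item A gate $g_k = g_a \pm g_b$ gives $v_k = v_a \pm v_b$.
\item A gate $g_k = c\cdot g_a$ gives $v_k = c\, v_a$. This is legal because piecewise linearity means every multiplication has a rational-constant operand.
\item A gate $g_k = \max(g_a,g_b)$ is written as $v_k = v_a + p_k$ and $v_k = v_b + q_k$, with $p_k, q_k \ge 0$ and $p_k q_k = 0$.
\item A gate $\min$ is handled symmetrically.
\end{itemize}
Finally we impose the fixed-point equations $v_{\mathrm{out}_\ell} = \vz[\ell]$ for $\ell \in [m]$.

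\textbf{Standard form.} To put this in standard LCP form $\vw = M\vy + \vq(\eps)$ with $\vw,\vy\ge 0$ and $\vw^\top\vy=0$, we proceed in three steps.
\begin{itemize}
\item Eliminate all the equality-defined variables. Since the circuit is acyclic, each $v_k$ is an affine function of $\vz$, $\eps$, and the slacks $p,q$ of earlier gates. Unrolling this dependency only composes linear maps of polynomial size: each max/min gate contributes one fresh pair, so there is no blow-up in the number of variables. Bit sizes stay polynomial because constants are multiplied along paths of length at most the circuit depth, and the resulting matrix has polynomially many entries of polynomial bit size.
\item Handle the fixed-point equation $\vz = G(\eps,\vz,p,q)$, where $G$ is the composition just obtained. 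One option is to split free variables as differences of nonnegatives. More cleanly, we use $\vz\in[0,1]^m$ and encode each equality as two inequalities paired with dummy complementary variables, or keep it as an equality constraint. LCPs with equality constraints and free variables reduce to standard LCPs by the usual manipulation (the mixed LCP form), which is fine since the lemma only asks for "a linear complementarity problem".
\item Check that $\eps$ enters only through the vector $\vq$, and only affinely. This holds because $\eps$ is never multiplied by anything: it is an input, and the only multiplications are by constants.
\end{itemize}

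\textbf{Correspondence.} We then argue the correspondence in both directions. Given $\vz$ with $F(\eps,\vz)=\vz$, we set each $v_k$ to the gate value and $p_k = v_k - v_a$, $q_k = v_k - v_b$ at max gates; these are nonnegative and complementary. Conversely, complementarity forces $v_k = \max(v_a,v_b)$ at each max gate, and by induction every $v_k$ equals the true gate value, so the output equations give $F(\eps,\vz)=\vz$.

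The main obstacle is the bookkeeping of converting this mixed system, with free variables and equalities, into a genuine LCP whose data is affine in $\eps$, while keeping the constraint matrix independent of $\eps$. I would first keep it as a mixed LCP and then invoke the textbook reduction from mixed LCP to LCP. The correctness argument itself is routine induction over the topological order.
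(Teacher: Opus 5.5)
Your proposal is correct and follows the same route as the paper. You encode $+,-$ and multiplication-by-constant gates as linear equalities and each $\max$ gate $a=\max\{b,c\}$ via $a\ge b$, $a\ge c$, $(a-b)(a-c)=0$ (your slack pair $p_k,q_k$ is exactly this), and your standard-form bookkeeping fills in details the paper leaves implicit.

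One caveat concerns how the paper applies the lemma. There $F$ is piecewise linear only when $\eps$ is treated as a constant (e.g.\ gates computing $\eps\cdot\vx_i[\sigma']$ or $(\eps/2)^d$), so $\eps$ enters the LCP matrix and not just the offset vector. In that setting you should keep the gate variables rather than eliminating them, since unrolling would push the degree in $\eps$ beyond linear.
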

\begin{proof}
    It suffices to show how to implement the $+,-, \times$ (by a constant), and $\max$ gates using an LCP. The $+,-$, and $\times$ gates are straightforward linear equality constraints. A $\max$ gate $a = \max\{b, c\}$ is equivalent to the constraint system $a \ge b, a \ge c, (a-b) \cdot (a-c) = 0$.
\end{proof}

For completeness, we include an LCP for extensive-form polymatrix games.
For each player $i$ let $\mA_i$ be the matrix for which $\vx_i^\top \mA_i \vx_{-i} = u(\vx_i, \vx_{-i})$. To create the value $\eps^d$, we create $d$ variables $\eps_1, \dots, \eps_d$ and set $\eps_1 = \eps$, and add the linear constraint $\eps_t = \eps \cdot \eps_{t-1}$ for $t > 1$. Then $\eps_d = \eps^d$. Then, for each player $i$, we add the following constraints, using the constraint system in the previous lemma to implement the max constraints.
$$
    \begin{aligned}
    \vx_i &\in \cX_i & \\
    \vx_i &\ge 2^{-d}\eps_d \vec 1 \\
            \vu_i &= \mA_i \vx_{-i} &\\
            u^*_i[\sigma, \sigma'] &= \sum_{j \in C_{\sigma'}} u^*_i[\sigma, j] \quad && \forall \sigma, \sigma' \in \Sigma_i \\
            u^*_i[\sigma, j] &= u^*_i[\sigma, ja] && \forall \sigma, ja \text{ s.t. } ja \preceq \sigma \\
            u^*_i[\sigma, j] &= \max_{a \in \cA_j} u^*_i[\sigma, ja] && \forall \sigma, j \text{ s.t. } j \not\preceq \sigma \\
            [\vx_i[\sigma] - \eps \cdot \vx_i[\sigma']]_+ \cdot [u^*_i[\sigma', \Root] - u^*_i[\sigma, \Root]]_+ &= 0&& \forall \sigma, \sigma' \in \Sigma_i
    \end{aligned}
$$
The variable $u_i^*[\sigma, \Root]$ represents the value $u_i^*[\sigma]$ introduced in \Cref{sec:efg br}, and is computed here using a bottom-up pass. The final constraint represents the conditional constraint in \eqref{eq:sf proper br lp}: if $u_i^*[\sigma'] - u_i^*[\sigma] > 0$ then $\vx_i[\sigma] - \eps \cdot \vx_i[\sigma'] \le 0$. Thus, these constraints encode the fact that $\vx_i$ is an $\eps$-sequentially proper best response for each $i$, as desired.

\begin{lemma}
    \label{lemma:FIXP-almostnear}
    Consider a bounded domain $D \subset \R^d$ expressed with linear constraints and a circuit $(\vx, \epsilon) \mapsto F_\epsilon(\vx) $ with a set of gates $\cG$, each of which performs either addition, multiplication, maximum, or minimum. For any $\delta > 0$ and sufficiently small $\epsilon, \epsilon' \geq \delta^{2^{O(d^3 |\cG| )}} $, any $\epsilon'$-almost fixed point of $F_\epsilon(\vx)$ is $\delta$-close to a limit point of fixed points of $F_{\epsilon''}(\cdot)$ when $\epsilon'' \to 0^+$.
\end{lemma}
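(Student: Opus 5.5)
The plan is to express everything as a first-order formula over the reals and apply quantitative real algebraic geometry: effective Łojasiewicz-type inequalities together with the bounds on sizes of semialgebraic sets, or equivalently the ``infinitesimal'' (Puiseux series) arguments used by \citet{Etessami14:Complexity,Etessami21:Complexity}.

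First, encode the circuit as a system of polynomial constraints of degree at most $2$. Introduce one variable per gate, so there are $O(d+|\cG|)$ variables. Each $\max/\min$ gate becomes $g \ge a$, $g \ge b$, $(g-a)(g-b)=0$. Addition and multiplication gates become polynomial equalities. Then $\vx$ being an $\epsilon'$-almost fixed point of $F_\epsilon$, meaning $\norm{F_\epsilon(\vx)-\vx}\le\epsilon'$, together with $\vx\in D$, is a semialgebraic condition $\Phi(\vx,\epsilon,\epsilon')$. Let $L$ be the set of limit points of fixed points of $F_{\epsilon''}$ as $\epsilon''\to 0^+$. This set is semialgebraic, defined by a formula with a fixed number of quantifier alternations,
\[
\forall \eta>0\ \exists \epsilon''\in(0,\eta)\ \exists \vy\,(F_{\epsilon''}(\vy)=\vy,\ \norm{\vy-\vx}<\eta).
\]

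Second, consider the function
\[
g(\epsilon,\epsilon') = \sup\{\operatorname{dist}(\vx,L) : \Phi(\vx,\epsilon,\epsilon')\}.
\]
It is semialgebraic in two variables, and its defining formula has size polynomial in $d+|\cG|$ with constant degree. I will argue that $g(\epsilon,\epsilon')\to 0$ as $(\epsilon,\epsilon')\to 0^+$. The argument is by compactness of $D$. Suppose not. Then there are $\epsilon_k,\epsilon'_k\to0$ and points $\vx_k$ at distance at least $c$ from $L$. The map $F$ is continuous jointly in $(\epsilon,\vx)$, including at $\epsilon=0$, so any accumulation point of the $\vx_k$ is a fixed point of $F_0$. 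The subtle point is that the definition of $L$ requires limits of \emph{exact} fixed points of $F_{\epsilon''}$ with $\epsilon''>0$, not just fixed points of $F_0$. To handle this, I will restrict $\epsilon'$ to be at most a small function of $\epsilon$, which matches the intended use where $\epsilon'$ is chosen after $\epsilon$. Then I apply the curve selection lemma: near an almost fixed point for tiny $\epsilon'$, there is an exact fixed point of $F_\epsilon$. This is again a statement of the form ``the distance from an $\epsilon'$-almost fixed point to the fixed-point set goes to $0$'', which follows from compactness for each fixed $\epsilon$.

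Third, and this is the main obstacle, the qualitative convergence must be made quantitative. The tool here is the standard bound (Basu--Pollack--Roy, as used by Etessami--Yannakakis and Etessami et al.): a semialgebraic function of one variable that tends to $0$, defined by a formula with $k$ variables, degree $D$, and a bounded number of quantifier blocks, satisfies $g(t)\le t^{1/D^{k^{O(1)}}}$ for $t$ below a threshold of the form $2^{-\mathrm{size}\cdot D^{k^{O(1)}}}$. Applying this along the curves $\epsilon=\epsilon'=t$ (or along the chosen dependence of $\epsilon'$ on $\epsilon$) and inverting gives the requirement $t\le\delta^{2^{O(d^3|\cG|)}}$. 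The exponent comes from quantifier elimination with $O(d+|\cG|)$ variables and degree $2$ across a constant number of alternations. Tracking this bookkeeping, especially the quantifier structure in the definition of $L$, is where I expect the difficulty to lie. If the direct bound is too weak, I would instead pass to the real closed field of Puiseux series in an infinitesimal $\epsilon$, as in \citet{Etessami14:Complexity}. There the limit point is the standard part of a fixed point, and transferring back yields the doubly exponential threshold.
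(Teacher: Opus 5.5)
Your plan follows the paper's route. The paper introduces existentially quantified gate variables, writes one formula $\textsc{AlmostNearBound}_\delta(\epsilon,\epsilon')$ with $\epsilon,\epsilon'$ free, and defers to the argument of \citet[Lemma 4]{Etessami14:Complexity} for a satisfying pair with $\epsilon,\epsilon'\ge\delta^{2^{O(d^3|\cG|)}}$. Your compactness argument supplies the qualitative satisfiability that the paper leaves implicit, and you correctly see that it needs $\epsilon'$ small relative to $\epsilon$.

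Your $\eta$-definition of $L$ is also the right one. The paper's $\textsc{RefinedFP}$ asks for an exact fixed point of $F_{\epsilon''}$ within distance $\epsilon''$ for \emph{every} $\epsilon''>0$, which is stronger than being a limit point. For example, $F_\epsilon(x)=\min(1,\max(0,x+\epsilon-x^2))$ on $[0,1]$ has unique fixed point $\sqrt{\epsilon}$, so $L=\{0\}$, yet no point satisfies $\textsc{RefinedFP}$.

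One step fails as written: the diagonal $\epsilon=\epsilon'=t$ in your third step. On $D=[0,1]$ take $F_\epsilon(x)=\min(1,x+\epsilon)$. For $\epsilon>0$ the only fixed point is $1$, so $L=\{1\}$. Yet $|F_\epsilon(x)-x|=\min(1-x,\epsilon)\le\epsilon$ for every $x$, so $x=0$ is an $\epsilon$-almost fixed point at distance $1$ from $L$ for every $\epsilon$. Hence $g(t,t)\not\to 0$, and no decay estimate can be applied along the diagonal; this contradicts your own second step.

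If you use a curve $\epsilon'=\phi(\epsilon)$ instead, two things are missing:
\begin{enumerate}
\item $\phi$ must be defined by a formula of size polynomial in $d+|\cG|$. One choice is the supremum of $\eta$ such that every $\eta$-almost fixed point of $F_\epsilon$ lies within $\epsilon$ of a fixed point of $F_\epsilon$.
\item You must lower-bound $\phi$ at the chosen $t$, because the statement also needs $\epsilon'\ge\delta^{2^{O(d^3|\cG|)}}$, not only $\epsilon$.
\end{enumerate}

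The simpler fix is the paper's two-parameter formulation. The set of pairs $(\epsilon,\epsilon')\in\R_{>0}^2$ for which every $\epsilon'$-almost fixed point of $F_\epsilon$ is $\delta$-close to $L$ is semialgebraic, with $\delta$ and the circuit constants as parameters. It is nonempty by your compactness argument. A sample point of it has positive algebraic coordinates of bounded degree and height, so both coordinates are bounded below by the required doubly exponential quantity, with no curve or Łojasiewicz-type rate needed.
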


\begin{proof}
For a given $\vx \in \R^d$, $\epsilon > 0$, and $\epsilon' > 0$, we define the formula $\textsc{AlmostFP}(\vx, \epsilon, \epsilon')$ that represents whether $\vx$ is an $\epsilon'$-almost fixed point of $F_\epsilon(\vx)$. We consider a set of auxiliary variables $\vy$ with existential quantifiers. For notational convenience, we write $\tvx = (\vx, \epsilon)$. We first add the clauses $\vy[i] = \tvx[i]$ for all $i$ that appear as inputs in the first level of the circuit. We also add the following clauses for all gates $G \in \cG$:
\begin{align*}
    \vy[o] = \vy[i_1] + \vy[i_2] \text{ for } G = (+, i_1, i_2, o);\\
    \vy[o] = \vy[i_1]*\vy[i_2] \text{ for } G = (*, i_1, i_2, o); \\
    \vy[o] = \zeta* \vy[i] \text{ for } G = (*, i, \zeta, o); \\
    (\vy[i_1] \geq \vy[i_2] \implies \vy[o] = \vy[i_1]) \land (\vy[i_2] \geq \vy[i_1] \implies \vy[o] = \vy[i_2]) \text{ for } G = (\max, i_1, i_2, o); \\
    (\vy[i_1] \geq \vy[i_2] \implies \vy[o] = \vy[i_2]) \land (\vy[i_2] \geq \vy[i_1] \implies \vy[o] = \vy[i_1]) \text{ for } G = (\min, i_1, i_2, o).
\end{align*}

We then add the clause $\| \tvx - \vy[out] \|^2_2 \leq (\epsilon')^2 $, where $\vy[out]$ collects all the output variables of $\vy$. Finally, we add clauses expressing $\vx \in D$, assumed to be linear constraints. What we want to prove is that for a fixed $\delta > 0$, and free variables $\epsilon, \epsilon'$,
\begin{equation*}
    \forall \vx \in \R^d \exists \vx' \in \R^d : (\epsilon, \epsilon' > 0) \land ( \lnot \textsc{AlmostFP}(\vx, \epsilon, \epsilon') \lor ( \textsc{RefinedFP}(\vx')) \land \| \vx - \vx' \|_2^2 \leq \delta^2 ),
\end{equation*}
where $\textsc{RefinedFP}(\vx') \defeq \forall \epsilon'' > 0 \exists \vx'' \in \R^d : \textsc{AlmostFP}(\vx'', \epsilon'', 0) \land ( \|\vx' - \vx''\|_2^2 \leq (\epsilon'')^2 )$. We denote by $\textsc{AlmostNearBound}_\delta(\epsilon, \epsilon')$ the above formula. Following the proof of~\citet[Lemma 4]{Etessami14:Complexity}, it follows that $\textsc{AlmostNearBound}_\delta(\epsilon, \epsilon')$ is satisfied for some $\epsilon, \epsilon' \geq \delta^{2^{O(d^3 |\cG|)}}$.
\end{proof}

Combining these results, we can now prove the main result of this section:
\begin{theorem}\label{th:main}
In extensive-form games, the problem of computing an exact normal-form proper equilibrium is in:
\begin{enumerate}[(1), noitemsep]
    \item $\FIXP_a$ for general games,
    \item $\PPAD$ for polymatrix games,
    \item $\PLS$ for potential games, and
    \item $\CLS$ for polymatrix potential games.
\end{enumerate}
\end{theorem}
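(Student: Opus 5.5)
The plan is to reduce everything to the circuit of \Cref{lem:efg nfpe circuit}, and then handle the limit $\eps \to 0^+$ differently in the algebraic ($\FIXP_a$) and piecewise-linear ($\PPAD$) settings. By \Cref{th:equivalence}, sequentially proper equilibria coincide with normal-form proper equilibria. So in every case it suffices to approximate (or, in the $\PPAD$ case, exactly compute) a limit point of fixed points $\vx^{(\eps)}$ of $F(\eps,\cdot)$ as $\eps\to 0^+$. First, \Cref{lem:efg nfpe circuit} builds $F$ in polynomial time. The CLF gates in it are replaced by ordinary gates via \Cref{th:linopt}; this is legitimate because \Cref{lem:proper br prob lower bound} guarantees that the CLF system \eqref{eq:sf proper br lp} is always feasible. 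The auxiliary variables $\vec\alpha$ live in $[0,1]^t$, so the domain is a box.

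For (1), I would not try to compute $\eps^*$ exactly. Instead I would fix a target accuracy $\delta$ and apply \Cref{lemma:FIXP-almostnear} to the circuit $F_\eps$, with $d$ replaced by $d+t$ and $\cG$ the gates of $F$. This gives thresholds $\eps,\eps' \ge \delta^{2^{O((d+t)^3|\cG|)}}$ such that any $\eps'$-almost fixed point of $F_\eps$ is $\delta$-close to a limit point, i.e.\ to a proper equilibrium. Such a number $\eps$ is doubly exponentially small, but it can be produced inside an arithmetic circuit by $O((d+t)^3|\cG|\log\log(1/\delta))$ repeated squarings of $\delta$, since $\FIXP$ circuits allow general multiplication. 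The reduction is then to the $\FIXP_a$ problem for the circuit $\vz\mapsto F(\eps,\vz)$ with accuracy parameter $\min(\delta,\eps')$. A point within $\eps'$ of an exact fixed point is an $O(\eps')$-almost fixed point, after adjusting constants via a Lipschitz or continuity argument on the circuit. This is the step I expect to be most delicate: one must check that near (rather than almost) fixed points still transfer, which may require invoking the near/almost equivalence argument of \citet{Etessami07:Complexity} instead of Lipschitzness. Finally, project to the $\vx$ coordinates.

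For (2), in polymatrix games $F$ is piecewise linear once $\eps$ is fixed. By \Cref{lem:efg polymatrix lcp}, together with the explicit LCP given afterwards in which $\eps^d$ is built by a chain of products, its fixed points are the solutions of an LCP $\mathcal P(\eps)$ whose data are polynomials in $\eps$ of degree at most $d$. \Cref{lemma:NPP} then supplies a polynomial-bit $\eps^*$ such that any complementary basis of $\mathcal P(\eps^*)$ stays complementary for all $0<\eps\le\eps^*$. The algorithm has three steps. First, solve the $\PPAD$ instance $F(\eps^*,\cdot)$ exactly. Second, read off the complementary basis from which gates are active. Third, re-solve that basis at $\eps=0$, which is a linear system, to obtain the exact limit solution, and hence an exact proper equilibrium. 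The care needed here is in recovering a basis from an exact fixed point (possibly degenerate). I would try perturbing the solution to a vertex of the corresponding piece, which remains a fixed point since pieces are polyhedral.

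For (3) and (4), the first step is to note that in an (identical-interest or exact) potential game, the map $F(\eps,\cdot)$ is a best-response map for the sequentially perturbed strategy sets $\cX_i \cap \{\eqref{eq:sf proper br lp}\}$. Its fixed points are then the KKT/local optima of the potential over the restricted polytopes, so a local-search formulation yields $\PLS$ membership. When the game is also polymatrix, the potential is quadratic, and gradient-descent/KKT formulations give $\CLS$. Combined with the same choice of $\eps^*$ as in (2), this gives (4). For (3), general multilinear potentials require the bit-bound of \Cref{lemma:FIXP-almostnear} or an exact $\eps^*$ via algebraic arguments; I expect that to be the hardest part.
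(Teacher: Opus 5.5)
Your (1) and (2) follow the paper's route: \Cref{lem:efg nfpe circuit} combined with \Cref{lemma:FIXP-almostnear}, respectively with \Cref{lem:efg polymatrix lcp} and \Cref{lemma:NPP}. The extra details you give there are fine. In (1), the near-versus-almost step you flag as delicate is not needed. An exact fixed point of $F_\eps$ is in particular an $\eps'$-almost fixed point, so it is $\delta$-close to a limit point. The $\mathsf{FIXP}_a$ output is $\delta$-close to such an exact fixed point, hence $2\delta$-close to a proper equilibrium by the triangle inequality, and no Lipschitz argument is required. The genuine gap is in (3), and your argument for (4) inherits it.

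The key claim in (3), that $F(\eps,\cdot)$ is a best-response map over ``sequentially perturbed strategy sets'' whose fixed points are KKT points of the potential, is false. \eqref{eq:sf proper br lp} is a pure feasibility system with no objective, so $F$ outputs an arbitrary feasible point, not a maximizer of anything. Its feasible set, call it $P_i(\vx_{-i})$, is not a fixed polytope. It depends on $\vx_{-i}$, and discontinuously so, through which strict inequalities $u^*(\sigma)<u^*(\sigma')$ hold.

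A fixed point of $F$ is just a profile with $\vx_i\in P_i(\vx_{-i})$ for every $i$. Nothing makes it a local optimum of the potential over a fixed domain. Conversely, the natural fixed-domain formulation, e.g.\ local maxima of the potential over $\cX_i\cap\{\vx\ge\eta\}$, only produces perfect-type trembles and need not satisfy the ranking constraints. So your local-search (and, in (4), gradient-descent) formulation does not certify the fixed-point condition. You also leave the passage $\eps\to 0^+$ in the potential setting unresolved, as you acknowledge.

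The paper makes no direct potential argument. It obtains (3) by plugging the circuit of \Cref{lem:efg nfpe circuit} into the framework of \citet{Anonymous26:Complexity}. (4) then needs no separate argument: given (2) and (3), the problem lies in $\mathsf{PPAD}\cap\mathsf{PLS}=\mathsf{CLS}$ by \citet{Fearnley23:Complexity}.
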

\begin{proof}
    (1) follows by combining \Cref{lem:efg nfpe circuit,lemma:FIXP-almostnear}, (2) from \Cref{lem:efg nfpe circuit,lem:efg polymatrix lcp,lemma:NPP}, (3) from  \Cref{lem:efg nfpe circuit}, and the framework of \citet{Anonymous26:Complexity}, and finally (4) from (2), (3), and \citet{Fearnley23:Complexity}. %
\end{proof}
In the $\PPAD$ case, the algorithm works by finding an $\hat\eps$-sequentially proper equilibrium $\vx^*$, where $\hat\eps$, as determined by \Cref{lemma:NPP}, is only singly exponentially small. But this $\hat\eps$ is not small enough to guarantee that $\vx^*$ is an $\eps$-proper equilibrium even for {\em constant} $\eps$, because \Cref{th:equivalence} would require $\hat\eps$ {\em doubly} exponentially smaller than $\eps$. Indeed, as we discussed in \Cref{sec:hardness}, representing an $\eps$-proper best response even with $\eps=1/2$ constant could require exponentially many bits. Nonetheless, $\vx^*$ can be used to find an {\em exact} sequentially proper equilibrium because of \Cref{lemma:NPP}, and \Cref{th:equivalence} guarantees that exact equilibria coincide, so it does not matter that $\hat\eps$ is relatively ``large'' here. 

\section{Conclusions and future research}

In this paper, we characterized the complexity of computing normal-form proper equilibria in extensive-form games, establishing polynomial-time equivalence to Nash equilibria; this is despite the fact that, as we showed, the classic perturbation scheme of~\citet{Kohlberg86:strategic} is intractable. We find these positive results highly surprising; we originally set out to prove hardness results for this problem. In stark contrast, we showed that a normal-form proper equilibrium in \textit{polytope} games is $\mathsf{NP}$-hard. To our knowledge, this is the first class of games in which the complexity of computing equilibrium refinements does not collapse to that of Nash equilibria. From a broader standpoint, beyond equilibrium refinements, this is the first problem in which equilibrium computation in polytope games is strictly harder relative to extensive-form games.

There are many interesting avenues for future research. While we established \NP-hardness for computing normal-form proper equilibria in polytope games, a precise characterization of its complexity remains open. It would also be interesting to extend our membership results beyond extensive-form games: what properties are necessary and sufficient in polytope games to enable efficient computation of perturbed proper best responses? Finally, \citet{Milgrom21:Extended} recently introduced a refinement of the proper equilibrium which they refer to as \emph{extended proper equilibrium}. This coincides with proper equilibria in two-player games, but beyond that the inclusion is strict. Is the problem of computing extended proper equilibria polynomial-time equivalent to proper equilibria?

\section*{Acknowledgements}

We are indebted to Ratip Emin Berker and Emanuel Tewolde for numerous insightful discussions throughout this project. K.F. thanks Vince Conitzer, whose course on ``Foundations of Cooperative AI'' inspired the initial ideas that led to this work. T.S. is supported by the Vannevar Bush Faculty Fellowship ONR N00014-23-1-2876, National Science Foundation grants RI-2312342 and RI-1901403, ARO award W911NF2210266, and NIH award A240108S001.
    
\bibliographystyle{plainnat}
\bibliography{refs}

\clearpage

\appendix

\section{Quasi-perfect versus normal-form proper equilibria}
\label{sec:qpes}

Here, we elaborate on the difference between QPEs and normal-form proper equilibria. As stated above, QPEs are a superset of normal-form proper equilibria. To see that the inclusion can be strict, it suffices to consider normal-form games, where QPEs are equivalent to normal-form perfect equilibria.

For completeness, we describe an extensive-form game, due to~\citet{Miltersen08:Fast}, in which there is a non-sensible quasi-perfect equilibrium, whereas the sensible equilibrium is the unique normal-form proper equilibrium of the game. The game, referred to as ``matching pennis on Christmas day'' by~\citet{Miltersen08:Fast}, is given in~\Cref{fig:QPEs}. It is a variation of the usual matching pennies games. Both Player 1 and Player 2 select either \textsf{Heads} or \textsf{Tails}. Player 2 wins \$1 if it correctly guesses the action of Player 1. It is a zero-sum game, so Player 2 has diametrically opposing utilities. On top of that, Player 1 has the option of offering a gift of \$1 dollar to Player 2 (\textsf{Gift} in~\Cref{fig:QPEs}). Modulo symmetries, there are $4$ possible outcomes: 
\begin{itemize}
    \item Player 2 guesses correctly and Player 1 offers the gift---Player 2 receives \$2;
    \item Player 2 guesses correctly and Player 1 does not offer the gift---Player 2 receives \$1;
    \item Player 2 guesses incorrectly and Player 1 offers the gift---Player 2 receives \$1; and
    \item Player 2 guesses incorrectly and Player 1 does not offer the gift---Player 2 receives \$0.
\end{itemize}

The only sensible equilibrium in this game is for Player 1 to refrain from offering the gift and each player to play \textsf{Heads} or \textsf{Tails} with equal probability. This is indeed the unique normal-form proper equilibrium of the game~\citep{Miltersen08:Fast}. On the other hand, the following pair of strategies can be shown to be a quasi-perfect equilibrium:
\begin{itemize}
    \item Player 1 plays \textsf{Heads} with probability $\frac{1}{2}$ and \textsf{Tails} with probability $\frac{1}{2}$, and always plays \textsf{NoGift}.
    \item Player 2 plays \textsf{Heads} with probability $\frac{1}{2}$ and \textsf{Tails} with probability $\frac{1}{2}$ at the information set upon observing \textsf{NoGift}, and plays \textsf{Heads} with probability 1 at the other information set.
\end{itemize}

Normal-form proper equilibria refine quasi-perfect equilibria~\citep{vanDamme84:relation}, so from a worst-case perspective, one should always strive for computing a normal-form proper equilibrium. In light of our results, computing a normal-form proper equilibrium is polynomial-time equivalent to computing a quasi-perfect equilibrium.

\begin{figure}[!ht]
    \centering
    \scalebox{0.8}{\begin{tikzpicture}[
    scale=1.3,
    level distance=2cm,
    level 1/.style={sibling distance=6cm},
    level 2/.style={sibling distance=3cm},
    level 3/.style={sibling distance=1.5cm}
]

\node[circle,draw,line width=0.8pt,minimum size=7mm,fill=blue!12] (root) {$1$}
    child {
        node[circle,draw,line width=0.8pt,minimum size=7mm,fill=blue!12] (L1) {$1$}
        child {
            node[circle,draw,line width=0.8pt,minimum size=7mm,fill=red!12] (LL) {$2$}
            child {
                node[rectangle,draw,line width=0.6pt,minimum width=9mm,minimum height=5mm,font=\scriptsize] {$2,-2$}
                edge from parent node[left,draw=none,pos=0.3,font=\scriptsize] {\textsf{Heads}}
            }
            child {
                node[rectangle,draw,line width=0.6pt,minimum width=9mm,minimum height=5mm,font=\scriptsize] {$1,-1$}
                edge from parent node[right,draw=none,pos=0.3,font=\scriptsize] {\textsf{Tails}}
            }
            edge from parent node[left,draw=none,pos=0.3,font=\scriptsize] {\textsf{Gift}}
        }
        child {
            node[circle,draw,line width=0.8pt,minimum size=7mm,fill=red!12] (LR) {$2$}
            child {
                node[rectangle,draw,line width=0.6pt,minimum width=9mm,minimum height=5mm,font=\scriptsize] {$1,-1$}
                edge from parent node[left,draw=none,pos=0.3,font=\scriptsize] {\textsf{Heads}}
            }
            child {
                node[rectangle,draw,line width=0.6pt,minimum width=9mm,minimum height=5mm,font=\scriptsize] {$0,0$}
                edge from parent node[right,draw=none,pos=0.3,font=\scriptsize] {\textsf{Tails}}
            }
            edge from parent node[right,draw=none,pos=0.3,font=\scriptsize] {\textsf{NoGift}}
        }
        edge from parent node[left,draw=none,pos=0.25,font=\scriptsize, yshift=5pt] {\textsf{Heads}}
    }
    child {
        node[circle,draw,line width=0.8pt,minimum size=7mm,fill=blue!12] (R1) {$1$}
        child {
            node[circle,draw,line width=0.8pt,minimum size=7mm,fill=red!12] (RL) {$2$}
            child {
                node[rectangle,draw,line width=0.6pt,minimum width=9mm,minimum height=5mm,font=\scriptsize] {$1,-1$}
                edge from parent node[left,draw=none,pos=0.3,font=\scriptsize] {\textsf{Heads}}
            }
            child {
                node[rectangle,draw,line width=0.6pt,minimum width=9mm,minimum height=5mm,font=\scriptsize] {$2,-2$}
                edge from parent node[right,draw=none,pos=0.3,font=\scriptsize] {\textsf{Tails}}
            }
            edge from parent node[left,draw=none,pos=0.3,font=\scriptsize] {\textsf{Gift}}
        }
        child {
            node[circle,draw,line width=0.8pt,minimum size=7mm,fill=red!12] (RR) {$2$}
            child {
                node[rectangle,draw,line width=0.6pt,minimum width=9mm,minimum height=5mm,font=\scriptsize] {$0,0$}
                edge from parent node[left,draw=none,pos=0.3,font=\scriptsize] {\textsf{Heads}}
            }
            child {
                node[rectangle,draw,line width=0.6pt,minimum width=9mm,minimum height=5mm,font=\scriptsize] {$1,-1$}
                edge from parent node[right,draw=none,pos=0.3,font=\scriptsize] {\textsf{Tails}}
            }
            edge from parent node[right,draw=none,pos=0.3,font=\scriptsize] {\textsf{NoGift}}
        }
        edge from parent node[right,draw=none,pos=0.25,yshift=5pt, font=\scriptsize] {\textsf{Tails}}
    };

\draw[line width=1.0pt,dashed,red!60] ($(LL)+(0,0.15)$) to[bend left=15] ($(RL)+(0,0.15)$);
\draw[line width=1.0pt,dashed,red!60] ($(LR)+(0,0.15)$) to[bend left=15] ($(RR)+(0,0.15)$);

\end{tikzpicture}}
    \caption{The ``matching pennies on Christmas day'' game of~\citet{Miltersen08:Fast}.}
    \label{fig:QPEs}
\end{figure}
\begin{center}

\end{center}

\section{Relations between different equilibrium refinements}
\label{sec:diagram}

For completeness, we provide a basic diagram (\Cref{fig:diagram}) illustrating the relations between some basic equilibrium refinements in extensive-form games. All notions appearing in~\Cref{fig:diagram} are guaranteed to exist; we do not cover equilibrium concepts that may or may not exist, such as \emph{stable equilibria}~\citep{VanDamme91:Stability}.

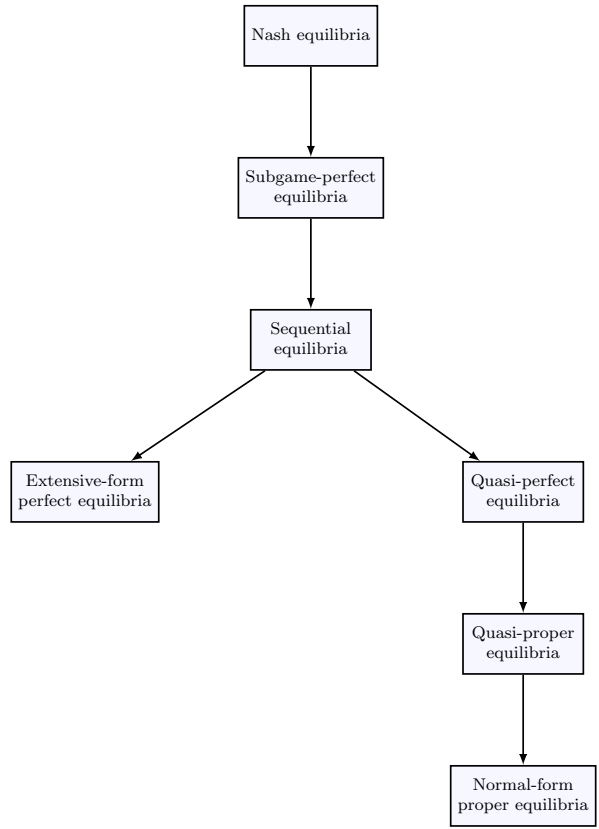
\begin{figure}
    \centering
    \scalebox{0.8}{\begin{tikzpicture}[
    node distance=1.5cm and 1.5cm,
    concept/.style={rectangle,draw,thick,minimum width=2cm,minimum height=1cm,align=center,fill=blue!3,font=\footnotesize},
    arrow/.style={-{latex},thick}
]

\node[concept] (nash) {Nash equilibria};
\node[concept,below=of nash] (subgame) {Subgame-perfect\\equilibria};
\node[concept,below=of subgame] (sequential) {Sequential\\equilibria};
\node[concept,below left=of sequential] (extensive) {Extensive-form\\perfect equilibria};
\node[concept,below right=of sequential] (quasi) {Quasi-perfect\\equilibria};
\node[concept,below=of quasi] (quasiproper) {Quasi-proper\\equilibria};
\node[concept,below=of quasiproper] (proper) {Normal-form\\proper equilibria};

\draw[arrow] (nash) -- (subgame);
\draw[arrow] (subgame) -- (sequential);
\draw[arrow] (sequential) -- (extensive);
\draw[arrow] (sequential) -- (quasi);
\draw[arrow] (quasi) -- (quasiproper);
\draw[arrow] (quasiproper) -- (proper);

\end{tikzpicture}}
    \caption{Relation between different equilibrium refinements in extensive-form games. An arrow $\textsf{A} \to \textsf{B}$ between two equilibrium concepts \textsf{A} and \textsf{B} means that $\textsf{B}$ refines \textsf{A}; that is, every element in \textsf{B} is also in \textsf{A}.}
    \label{fig:diagram}
\end{figure}

\end{document}